\documentclass[11pt]{article}

\usepackage[english]{babel}

\usepackage[letterpaper,top=2cm,bottom=2cm,left=3cm,right=3cm,marginparwidth=1.75cm]{geometry}

\usepackage{amsmath}
\usepackage{amssymb}
\usepackage{amsthm}
\usepackage{mathrsfs} 
\usepackage{graphicx}

\usepackage{comment}
\usepackage{setspace}
\usepackage{lipsum}
\usepackage{tikz}
\usepackage{tkz-euclide} 
\usetikzlibrary{patterns}
\tikzset{
    support/.style={
        pattern=north east lines,
        draw=none,
        minimum width=0.3,
        minimum height=0.6
    }
    ,>=latex
}
\usepackage{bbm}
\usepackage{pgfplots}
\pgfplotsset{compat = newest}

\usepgfplotslibrary{fillbetween}
\usetikzlibrary{intersections}

\usepackage{arydshln,leftidx,mathtools}
\usepackage{longtable}
\usepackage{float}
\usepackage{xcolor}
\usepackage{subcaption}

\newtheoremstyle{mystyle}%
  {}%
  {}%
  {\itshape}%
  {}%
  {\bfseries}%
  {.}%
  { }%
  {\thmname{#1}\thmnumber{ #2}\thmnote{ (#3)}}%

\theoremstyle{mystyle}
\newtheorem{theorem}{Theorem}[section]
\newtheorem{lemma}{Lemma}[section]
\newtheorem{prop}{Proposition}
\newtheorem{corollary}{Corollary}
\newtheorem{thm}{Theorem}
\newtheorem{observation}{Observation}[section]

\theoremstyle{definition}
\newtheorem*{remark}{Remark}

\newcommand{\norm}[1]{\lVert #1 \rVert}

\usepackage{subfiles}

\newcommand{\sigto}{\sigma_{\theta_1}}
\newcommand{\sigtt}{\sigma_{\theta_2}}

\usepackage{enumitem}

\usepackage[normalem]{ulem}

\usepackage[title]{appendix}
\newcommand{\leqnomode}{\tagsleft@true\let\veqno\@@leqno}

\usepackage{bm}

\usepackage{tikz}

\usepackage{titling}

\usepackage{pgfplots}
\pgfplotsset{compat=1.17}

\usepackage[authoryear, round]{natbib}

\usepackage{hyperref}
\hypersetup{
    colorlinks=true,
    linkcolor=blue,  %
    urlcolor=red,     %
    citecolor=gray
}

\title{A Costly-information Foundation for Psychometric Curves\thanks{I am deeply indebted to Luciano Pomatto and Axel Niemeyer for their invaluable guidance and support in developing this project. I also thank Ben Wincelberg, Omer Tamuz, Peter Caradonna, Po Hyun Sung, Chew Soo Hong and Song Peicong for their useful comments and encouragement.}
}
\author{Jake Zhang}
\date{Jul 2025}

\usepackage{titlesec}

\usepackage{booktabs}
\begin{document}

\setlength{\abovedisplayskip}{10pt}
\setlength{\belowdisplayskip}{10pt}

\newtheorem{assumption}{Assumption}[section]
\newtheorem{definition}{Definition}[section]

\theoremstyle{definition}
\newtheorem{assum}{Assumption}
\newtheorem{example}{Example}[section]
\AtBeginEnvironment{example}{%
  \pushQED{\qed}\renewcommand{\qedsymbol}{$\triangle$}%
}
\AtEndEnvironment{example}{\popQED\endexample}
\newtheorem{obs}{Observation}
\maketitle
\begin{abstract}

    We study a binary choice problem in which an agent chooses between two actions whose payoff depends on a continuous state. The agent chooses how much effort to invest in learning about the state. Equivalently, we can think of the state as the strength of a stimulus, with the agent exerting costly effort to be more responsive to it. 
    Taking as given the Fisher information cost introduced by \cite{HW2021}, we analyze the optimal state-dependent choice rule using a variational approach.
    The main result is that agents' optimal response is an S-shaped function of the state under mild conditions. This prediction is aligned with the widely documented psychometric curve response profile observed in the experimental literature in psychology and economics.

\end{abstract}

JEL Codes: D83, D91, C61

\newpage
\section{Introduction}
Many important decisions require choosing between discrete actions based on imperfectly understood, continuously varying evidence. 
A lender decides whether to approve a loan after assessing the borrower's repayment probability; an investor decides whether to fund a project after assessing its expected return; a physician decides whether to prescribe a particular treatment after assessing the likelihood of presence of a disease; and a judge decides whether to grant probation after assessing the defendant's risk of re-offending. 
In each case, the relevant evidence can be represented by an ordered index, with one action preferred below a critical value and the other preferred above it.  A fully informed decision maker would choose the action with the higher state-dependent payoff, typically according to a cutoff rule.

Actual behavior, however, need not exhibit a sharp cutoff. When nearby values of the index are difficult to distinguish, the probability of choosing an action may vary gradually around the threshold rather than jump discontinuously. Observed choices therefore reflect both what decision makers value and how well they understand the situation they face.
A rejection may indicate that an option is genuinely unattractive, or that its merits were imperfectly assessed. Distinguishing these explanations is important for both welfare analysis and policy: changing incentives may be appropriate in the first case, whereas improving measurement, communication, or information processing may be more effective in the second.

Perceptual experiments in psychology provide a particularly clean way to study this mapping from continuously varying states to discrete choices. Subjects are presented with stimuli that differ along a physical dimension and are asked to give a discrete response: which light is brighter, which tone is higher-pitched, which object is heavier, or which direction a cloud of dots is moving. Data from these experiments are usually summarized by a psychometric curve, whose horizontal axis records stimulus strength and whose vertical axis records the frequency of a particular response. A long-standing empirical regularity is that the psychometric curves are typically sigmoidal.

The same representation can be applied to economic decisions. When the underlying state determines the relative value of two alternatives, the response curve records how the probability of choosing one option changes as that option becomes more attractive.
\citet{C2025} refers to this economic analogue of the psychometric curve as a cognitive economic curve. 
Its shape contains economically relevant information about where mistakes occur, how responsive behavior is to changes in the decision environment, and how successfully decision makers translate available information into choices. Holding incentives and prior belief fixed, changes in the curve can also indicate how the organization or presentation of information affects decision quality.

In this paper, we provide a costly-information explanation for why sigmoidal response curves arise naturally in such environments. We study a binary choice problem in which the decision maker chooses how finely to discriminate among nearby values of a continuous state. The state may represent the strength of a sensory stimulus, but it may equally represent repayment prospects, expected return, likelihood of a disease, re-offending risk, or another ordered determinant of relative payoff. Viewed as a information acquisition problem, the decision maker trades off the benefit of more accurate choices against the cost of information processing.

Building on the neighborhood-based approach of \citet{HW2021}, we use the Fisher information cost to capture the difficulty of discriminating among nearby stimulus levels.
Our main result shows that, under a uniform prior over the state and an increasing relative payoff of one action over the other, any non-degenerate optimal response curve is sigmoidal: the choice probability is increasing, with a slope that first increases and then decreases.

The cost function is based on Fisher information, which measures how sensitively a probability distribution changes with respect to an underlying parameter. In our setting, the parameter is the stimulus level, and the relevant distribution is the subject’s stochastic choice rule at each state. The Fisher information cost aggregates these local sensitivities across the state space, making it costly for choice probabilities to vary sharply across nearby stimulus levels. When the relative payoff of one action is increasing in the stimulus, the optimal response balances this smoothness cost against the benefit of choosing the better action. As a result, the response rule changes most rapidly in the intermediate region where the decision maker switches from favoring one action to favoring the other, while it flattens in regions where the evidence strongly favors one action.

To derive our results, we reparametrize the original variational problem and applies techniques in calculus of variation to examine existence and characterization of the optimal solution. In the binary case, this reparametrization converts the Fisher information term into a quadratic derivative term and yields a tractable characterization of the optimal response rule. In particular, the necessary condition described by the Euler-Lagrange equation can be interpreted as the motion of a particle subject to a time-dependent force, which makes the qualitative shape of the solution transparent.

Beyond the baseline sigmoidality result, the analysis also yields comparative statics with respect to incentives. When payoffs are scaled up, the model predicts that the decision maker chooses a more precise response rule: choice probabilities move closer to zero and one in the tails, and the response curve becomes steeper near the threshold. This prediction connects the model to experimental designs that vary rewards for correct responses and to the psychophysical literature, where sensitivity and lapse rates are commonly used to summarize changes in performance.

\subsection{Related Literature}
The paper contributes to the growing literature on rational inattention that moves beyond Shannon mutual information. 

Following \cite{S2003}, much of the rational inattention literature assumes that information costs are proportional to the reduction in Shannon entropy. This specification is analytically attractive and closely related to multinomial logit choice \citep{MM2015}. It has been assumed in applications to price setting and monetary policy \citep{MW2009,PW2014}, portfolio choice and home-bias \citep{VV2009,M2010}, global games \citep{Y2015}, dynamic choice \citet*{SSM2017} and mental accounting \citep{KM2020}.
For a more complete survey of this literature, see \citet*{MMW2023}.

However, the Shannon entropy cost imposes restrictive behavioral implications. It fails to capture the notion of perceptual distance that similar states are more difficult to distinguish than distant ones. 
\citet*{CDL2022} formalizes this through their Invariance under Compression axiom. The axiom states that the states yielding the same payoffs can be merged into a single state without changing the behavior. \cite{DN2023} reject Invariance under Compression in a perceptual task (their Experiment 4). They find that subjects are better able to distinguish stimuli that are farther apart than those that are closer together, even when making those distinctions leads to the same payoff consequences.

\citet{HW2021} address this limitation by introducing neighborhood-based information costs. Their framework equips the state space with neighborhood structure and assumes cost only occurs when distinguishing states in the same neighborhood. They show that the local cost of discrimination is represented by Fisher information in a continuous-state limit. In binary choice problems, they show that Fisher information costs can generate smooth, monotone response curves with convexity in a lower region and concavity in an upper region.

This paper takes the Fisher information cost derived by \cite{HW2021} as a starting point and studies binary-choice problem in the context of perception task. The contribution is primarily methodological and qualitative. I apply a reparametrization that simplifies the original optimization proble and highlight the correspondence between the necessary condition of the transformed problem and the governing equation of a pendulum system. This physical analogy provides useful intuition for the qualitative behavior of the solution. As an application, I derive sharper shape restriction relative to those obtained in \cite{HW2021}, that the curve changes from convex to concave at a single point. The reparametrization also helps generate comparative statics with respect to incentives.

This work is also related to other approaches that allow information costs to depend on which states are being distinguished. \cite*{PST2023} characterize information costs with constant marginal costs as weighted sums of pairwise Kullback–Leibler divergences, where the weights represent the difficulty of discriminating between particular states. 
More recently, \cite*{BDP2025} introduce a broad class of information costs based on multivariate f-divergences. In a discretized one-dimensional environment, they provide conditions under which the resulting response is S-shaped. Relative to these approaches, we focus on a continuous scalar state space, which allows us to derive global shape restrictions from the associated Euler-Lagrange equation.

Finally, this paper also connects to the psychology literature on psychometric responses in perceptual tasks. 
In binary discrimination or classification tasks, subjects’ response probabilities typically vary smoothly with stimulus strength and the corresponding psychometric function is often fitted using logistic, cumulative Gaussian, Weibull, or other sigmoidal specifications. \citep{WH2001}. Standard explanations derive such curves from noisy internal representations, as in signal detection theory \citep{GS1996}, or from sequential evidence accumulation, as in drift-diffusion models \citep{RM2008}. A related literature on efficient coding studies how limited perceptual resources are allocated across stimuli. While classic work emphasizes information maximization and redundancy reduction \citep{L1981,SO2001}, later population-coding models often use Fisher information to describe representational precision and perceptual discriminability \citep{GS2014,WS2015}.

The present paper is complementary to these approaches. It does not model the internal noise, accumulation process, or coding rule directly. Instead, it derives the state-dependent stochastic choice rule from a information acquisition problem. Thus, the S-shaped response is not imposed as a psychometric specification but arises from optimal information processing. The framework allows us to derive comparative statics with respect to incentives and prior beliefs.

\subsection{Outline of the paper}
The remainder of the paper is structured as follows: Section 2 introduces the binary choice problem within the rational inattention framework, briefly reviews relevant literature, and discusses our modeling choice.
Section 3 presents the formal results about the existence, uniqueness and characterization of the optimal solution through a variational approach.
Section 4 explores the connection between the necessary condition and a classical pendulum system. It also illustrates how this physical analogy can yield behavioral insights through two simple applications. Proofs are relegated to the appendix.

\section{The Model}
A decision maker is choosing between two alternatives $L$ and $R$, whose payoffs depend on the underlying state $\theta \in \Theta$. The state space is assumed to be a compact interval $\Theta=[a,b]$ in $\mathbb{R}$. The decision maker is uncertain about the state and is endowed with full support prior with density $\pi$. Her task is to choose an optimal signal structure that maximizes her expected utility minus the cost of generating the signal structure.\footnote{By a \emph{signal structure} we mean what is often called a \emph{Blackwell experiment}. We use ``signal structure” instead of ``experiment" to avoid confusion with laboratory experiments.} 

Formally, a signal structure consists of a pair $\langle S, \sigma \rangle$. We use $S$ to denote the signal space, which is assumed to be finite.\footnote{So $\Delta(S)$ is a $(|S|-1)$-simplex and the topology on it is the Euclidean topology from $\mathbb{R}^{|S|}$; The topology on $\Theta$ is the Euclidean topology on $\mathbb{R}$.} Meanwhile, $\sigma: \Theta \to \Delta(S)$ is a measurable mapping from the state space to distributions over the signal space, with $\sigma(\theta):=\sigma(\cdot|\theta)$ being interpreted as the signal distribution conditional on the state being $\theta$. For the information cost to be well-defined later, we also assume that the partial derivative $\frac{\partial}{\partial \theta} \sigma(s|\theta)$ exists for each $s$ and for almost all $\theta$. The collection of signal structures that satisfies the above assumption is denoted by $\mathcal{S}$.

The decision maker can flexibly select any signal structure from the collection $\mathcal{S}$ at a cost $C: \mathcal{S} \to [0,+\infty]$. Specifically, we assume that the cost function is the \textit{Fisher information} cost \citep[Section II.C]{HW2021}, which is the averaged Fisher information. %
The Fisher information quantifies informativeness by measuring how much the signal distributions vary across states locally. It is formally defined as the variance of the partial derivative with respect to $\theta$ of the log-likelihood:
\[\mathcal{I}_{F}(\{\sigma(\theta)\}):= 
E_{s|\theta}\left[ \left(\frac{\partial}{\partial \theta}\log{\sigma(s|\theta)} \right)^2 \right]=
\sum_{s \in S} \frac{ (\frac{\partial}{\partial \theta} \sigma(s|\theta))^2}{\sigma(s|\theta)}.\]
The Fisher information cost is:
\[C_{FI}(\sigma):= E_{\theta}[\mathcal{I}_{F}(\{\sigma(\theta)\})]
=\int_{\Theta} \pi(\theta) \mathcal{I}_{F}(\{\sigma(\theta)\} \, d\theta=
\int_{\Theta} \pi(\theta) \sum_{s \in S} \frac{ (\frac{\partial}{\partial \theta} \sigma(s|\theta))^2}{\sigma(s|\theta)} d\theta.
\]

The Fisher information cost satisfies Blackwell monotonicity: a more informative signal structure, in the sense of \citet{B1951}, has a higher cost. We prove this in Appendix~\ref{pf:Blackwell}. By a standard result for Blackwell monotone costs in the spirit of the revelation principle, it is without loss of generality to assume that the optimal signal space is no larger than the action space \citep[Lemma 1]{CD2015}. Therefore, it is without loss of generality to assume the optimal signals are binary, and to interpret a signal as a recommendation to take a specific action.\footnote{That is, we can think of the signal space is $\{R,L\}$, on receiving $R$, taking action $R$ is the best response to her posterior, and vice versa for $L$. So, we can simply read the conditional probability of receiving a signal as the conditional probability of taking a corresponding action on that state.}

We use the short-hand notation $p(\theta):=\sigma(R|\theta)$ to denote the probability of receiving a signal recommending action $R$ in state $\theta$.
We also normalize the state-dependent utility of choosing action $L$ to be $0$ and denote the utility of selecting $R$ in state $\theta$ as $u(\theta)$.\footnote{Such a normalization is without loss of generality when the decision maker has von-Neumann expected utility, because the objective function differs by a constant, that is $\int \pi(t)[ \sigma(R|t) u_R(t) +(1-\sigma(R|t))u_L(t)]=\int \pi(t)[\sigma(R|t) (u_R(t)-u_L(t))+u_L(t)]$.} 
We require $p$ to be absolutely continuous,\footnote{A function $y$ is absolutely continuous on an interval $[a,b]$ if and only if there exists a Lebesgue-integrable function $f$ on $[a,b]$ such that it can be represented as 
 \[ y(t) = y(a) + \int_{0}^{t} f(\tau) d\tau \]
 Hence, an absolutely continuous function $y$ has a derivative almost everywhere.}
 and denote the set of such state-dependent stochastic choice rules by $AC(\Theta,[0,1]):=\{p:\Theta \to [0,1]\, | \,p \text{ is absolutely continuous}\}$.

Throughout this paper, we assume that our payoff function $u$ is piecewise continuous.
Since the state space $\Theta$ is a compact interval, such an assumption also implies that $u$ is bounded and integrable. In addition, another essential assumption is that our payoff function is single-crossing, and without loss of generality, we assume it is single-crossing from below. Both assumptions are quite natural in the lab setting, where subjects are paid for correct answers. Besides, to avoid unnecessary technical complications, we also a mild regularity condition on the prior. Specifically, we assume that the prior admits a density $\pi$ with full support, that $\pi$ is absolutely continuous, and that its reciprocal is integrable.\footnote{The last condition ensures that the change of variables used below to uniformize the prior is well defined.}  
\begin{assum} \label{ass:pwc}
    The payoff function $u$ is piecewise continuous, that is, for $\Theta=[a,b]$, there exists $a = t_0 < t_1 <\cdots t_N = b$, such that 
    \begin{enumerate}
        \item $u$ is continuous on $(t_{i},t_{i+1})$ for $i=0,1,\cdots,N-1$;
        \item left limits $\underset{t \to t_{i}^{-}}{\lim} u(t)$ exist and are finite for $i=1,2,\cdots N$, and right limits $\underset{t \to t_{i}^{+}}{\lim} u(t)$ exist and are finite for $i=0,1,\cdots N-1$.
    \end{enumerate}
\end{assum}
\begin{assum} \label{ass:sc}
    The payoff function $u$ is single-crossing, that is, there exists $t_u$ such that 
    $u(t) < 0 \text{ on $[a,t_u)$ and } u(t) >0 \text{ on } (t_u,b]$.
\end{assum}

\begin{assum} \label{ass:prior}
The prior admits a density $\pi$ on $\Theta=[a,b]$ with respect to Lebesgue
measure. The density satisfies $\pi(t)>0$ for all $t$, $\pi$ is absolutely continuous, and $\int_a^b 1/\pi(t)\,dt<\infty$. 
\end{assum}

In the binary signal setting, the Fisher information cost can be rewritten as:
\[    C_{FI}(p,\pi) =
\int_{\Theta} \pi(t) \frac{(p'(t))^2}{p(t)(1- p(t))}  dt\]
where we treat 
\[ \frac{(p'(t))^2}{p(t)(1- p(t))} =\begin{cases}
         \frac{(p'(t))^2}{p(t)(1- p(t))} , & \text{ if } p(t) \in (0,1) \\
        0, & \text{ if } p'(t) =0 \text{ and $p(t)=$ 0 or 1} \\
        +\infty & \text{ other cases}
    \end{cases}.\]
Specifically, we allow the decision maker to acquire no information at no cost, that is, when $p$ is a constant, thus totally uninformative, the cost is 0.

The expected payoff under the choice rule $p$ is:
\[
\int_{\Theta} \pi(t) u(t) p(t) dt - \lambda C_{FI}(p,\pi),
\]
where $\lambda \in (0,+\infty)$ is a scaling parameter. 

In summary, the decision maker needs to choose a state-dependent stochastic choice rule to solve the following optimization problem:\footnote{We adopt the standard notation from Calculus of Variations and write the problem as a minimization problem} 
\begin{align*}
    \underset{p \in AC(\Theta,[0,1])}{\text{minimize}} \int_{\Theta} \lambda  \pi(t)  \frac{(p'(t))^2}{p(t)(1- p(t))} - \pi(t) u(t) p(t) dt \tag{$P$} \label{pb:p}
\end{align*}

\subsection{Derivation of Fisher information Cost}

To capture the concept of perceptual distance, \citet{HW2021} proposed a new class of cost functions named \textit{Neighborhood-based information cost}, which sums with weight over the local information cost across each neighborhood structure.  
In the special case where the local information is Shannon's entropy, \citet[Appendix D.1]{HW2021} show that the sequence of optimization problems on discrete state space can converge to the optimization with Fisher information cost on a continuum of states\footnote{In the sense that, the value of the optimization problem converges, the solution of the problem converges, and hence the posterior and choice probabilities also converge.}.

Alternatively, \citet[Example 5]{BZ2025} point out that the Fisher information can be derived directly as the limit of a class of cost functions called \textit{Total Information cost}:
\[C_{TI}(\sigma)=\sum_{\theta} \pi(\theta) [\sum_{\theta'} \gamma(\theta,\theta') D_{KL}(\sigma_\theta,\sigma_{\theta'})],\]
by noting that Fisher information is the Hessian of KL divergence. We formalize their heuristic arguments in Appendix \ref{app:tic2fic}. The essential assumptions are a) the state space can be linearly ordered as a compact interval in $\mathbb{R}$; b) cost only occurs between adjacent states under discretization; c) the discrimination coefficients between adjacent states are proportional to the inverse of the square distance, that is
\[\gamma(\theta_{i},\theta_{j}) \propto \frac{\mathbf{1}\{ j =i+1\}}{(\theta_{j} - \theta_i)^2}.\]
From the derivation, we can see that if we allow a varying local discrimination coefficient $\gamma(\theta_{i},\theta_{j}) = \frac{\gamma(\theta_i) \mathbf{1}\{ j =i+1\}}{(\theta_{j} - \theta_i)^2}$, then we can generalize the Fisher information cost into:
\[C_{FI}^{\gamma}(\sigma) = \int_{\Theta} \pi(t) \gamma(t) 
\sum_{s \in S} \frac{(\frac{\partial}{\partial t}\sigma(s|t))^2}{\sigma(s|t)}  dt.\]

\section{Solving The Model}
This section develops the analytical tools needed to solve the optimization problem that arises from the binary choice problem with Fisher information cost. We begin by introducing a reparameterization that significantly simplifies the problem and we establish existence of solution with the aid of such reparameterization.
Using standard techniques from the calculus of variations, we derive the necessary conditions for an optimal solution and reveal a connection between the first-order condition and a classic physical system. Leveraging the problem’s convexity, we establish the uniqueness of any interior optimal solution and thus provide a characterization of the interior optimal solution, conditional on its existence. 

More specifically, we consider the following change of variable: Given $p \in AC(\Theta,[0,1])$, we define $w \in AC(\Theta,[-\frac{\pi}{2},\frac{\pi}{2}])$ as the unique solution to $p(t) = \frac{1+\sin(w(t))}{2}$ for each $t \in \Theta$, in other words, $w(t) = \arcsin(2p(t) - 1)$. Our problem becomes\footnote{To see this, note that, $p'(t)= \frac{1}{2} \cos(w(t)) w'(t)$ and $p(t)(1-p(t)) = \frac{1}{4}(1-\sin^2(w(t))) $}:
\begin{equation}
    \underset{w \in AC(\Theta,[-\frac{\pi}{2},\frac{\pi}{2}])}{\text{minimize}} \int_{\Theta}  \lambda (w'(t))^2  - u(t) \frac{1+\sin(w(t))}{2} \, dt. \tag{$W$} \label{pb:w}
\end{equation}

We say that a solution is \textit{admissible} if it satisfies the constraints of the problem and the objective value is finite. 
The admissible classes of function to Problem (P) and (W) are denoted, respectively, as\footnote{Since $u$ is bounded, the following sets are equivalent:
\scriptsize{\begin{align*} 
    &\{ p \in AC(\Theta,[0,1]): \int_{\Theta} \frac{(p'(\theta))^2}{p(\theta) (1-p(\theta))} d\theta < +\infty \} =\{ p \in AC(\Theta,[0,1]): \int_{\Theta} \frac{(p'(\theta))^2}{p(\theta) (1-p(\theta))} - u(t)p(t) d\theta < +\infty \} \\
   & \{ w \in AC(\Theta,[-\frac{\pi}{2},\frac{\pi}{2}]) : \int_{\Theta} (w'(\theta))^2 d\theta < +\infty\} =\{ w \in AC(\Theta,[-\frac{\pi}{2},\frac{\pi}{2}]) : \int_{\Theta} (w'(\theta))^2 -\frac{u(t) \sin(w(t))}{2}d\theta < +\infty\}
\end{align*}}
We drop the second term for notational simplicity.
} 
\begin{align*}
   \mathcal{D}^{P}& := \left\{ p \in AC(\Theta,[0,1]): \int_{\Theta} \frac{(p'(\theta))^2}{p(\theta) (1-p(\theta))} d\theta < +\infty \right\} \\
\mathcal{D}^{W}& :=\left\{ w \in AC(\Theta,[-\frac{\pi}{2},\frac{\pi}{2}]) : \int_{\Theta} (w'(\theta))^2 d\theta < +\infty \right\}
\end{align*}

We argue that it is equivalent to working on problem (W) because there is a bijection between the two admissible classes that preserves the value of the integral.
The following result justifies our reparametrization by noting that there is a bijection between the admissible classes:
\begin{lemma} \label{lem:bij}
    The transformations \( \Phi(p) := \arcsin(2p - 1) \) and \( \Psi(w) := \frac{1+\sin(w)}{2} \) define a bijection between the two classes of admissible solutions:
\[
p = \frac{1+\sin(w)}{2} \in \mathcal{D}^{P}
\quad \Longleftrightarrow \quad
w = \arcsin(2p - 1) \in \mathcal{D}^{W}.
\]
Moreover, the value of the objective function is preserved under such a reparametrization, that is
\[\int_{\Theta}  u(t) p(t) -  \frac{p'(t)^2}{p(t)(1-p(t))} dt=\int_{\Theta}  u(t) \frac{1+\sin(w(t))}{2} -   (w'(t))^2 dt\] if $p=\frac{1+\sin{(w)}}{2}$ and both are admissible.
\end{lemma}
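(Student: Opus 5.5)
The plan is to check, in order, that the two maps are mutually inverse pointwise, that each preserves absolute continuity, and that each preserves finiteness of the relevant integral. The value identity then follows from the same pointwise computation.

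The pointwise inversion is immediate: $\sin\colon[-\frac{\pi}{2},\frac{\pi}{2}]\to[-1,1]$ is a bijection with inverse $\arcsin$. Hence $\Psi\circ\Phi=\mathrm{id}$ on functions valued in $[0,1]$, and $\Phi\circ\Psi=\mathrm{id}$ on functions valued in $[-\frac{\pi}{2},\frac{\pi}{2}]$.

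\emph{Direction $w\Rightarrow p$.} Let $w\in\mathcal{D}^W$. Since $\sin$ is Lipschitz, $p=\Psi(w)$ is absolutely continuous, and the chain rule gives $p'=\frac12\cos(w)\,w'$ a.e. Where $\cos w(t)\neq 0$, we have $p(1-p)=\frac14\cos^2 w>0$, so $\frac{(p')^2}{p(1-p)}=(w')^2$. On the set where $\cos w=0$ (i.e.\ $p\in\{0,1\}$), $p'=0$ a.e., since the derivative of an AC function vanishes a.e.\ on any level set. The convention then assigns the integrand the value $0$. Here I must also argue $w'=0$ a.e.\ on that set, by the same level-set fact applied to $w$, so that $(w')^2=0$ there as well. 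Thus the two integrands agree a.e., the cost is finite, and $p\in\mathcal{D}^P$.

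\emph{Direction $p\Rightarrow w$.} This is the main obstacle, because $\arcsin$ is not Lipschitz at $\pm1$, so absolute continuity of $w=\arcsin(2p-1)$ is not automatic. The finiteness of the Fisher cost is what should rescue it. Let $E=\{t: p(t)\in(0,1)\}$, an open set and hence a countable union of open intervals. On each such interval $w$ is locally AC, with $w'=p'/\sqrt{p(1-p)}$, and $\int_E (w')^2=\int_E \frac{(p')^2}{p(1-p)}<\infty$ by admissibility. I will then define $g=w'$ on $E$ and $g=0$ off $E$, and show $w(t)=w(a)+\int_a^t g$.

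To do so, I note that $g\in L^2\subset L^1$ on the compact interval, so $W(t):=w(a)+\int_a^t g$ is AC. On each component interval of $E$, $w$ and $W$ differ by a constant. Continuity of $w$ at the endpoints, where $w=\pm\frac{\pi}{2}$ (or at $a,b$), together with finiteness of $\int|g|$, gives that $w$ is uniformly continuous with bounded variation $\le\int|g|$ on $E$. A cleaner route is to show directly that $w$ is AC: for disjoint intervals $(s_k,t_k)$, the increments $|w(t_k)-w(s_k)|$ are bounded by $\int_{(s_k,t_k)\cap E}|g|$. This holds because on each subinterval the variation of $w$ only accrues inside $E$, with $w$ constant at $\pm\frac{\pi}{2}$ on the complement. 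Formally, $w$ is continuous, and $w$ restricted to $[s,t]$ maps $[s,t]\setminus E$ into $\{\pm\frac{\pi}{2}\}$. A crossing from $-\frac{\pi}{2}$ to $\frac{\pi}{2}$ must pass through a component of $E$, whose full variation is bounded by the $\int|g|$ over that component. Absolute continuity of the integral then yields AC of $w$. Hence $w\in\mathcal{D}^W$ with $\int(w')^2=C_{FI}(p)<\infty$.

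\emph{Value identity.} Substituting $p=\frac{1+\sin w}{2}$ into $\int u p$ is trivial. The cost terms agree because the integrands coincide a.e., as shown above.

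Finally, although the displayed integrals carry no prior, the same argument works verbatim for the weighted cost with the prior density $\pi$ inserted as a factor, since the pointwise identity is unaffected.
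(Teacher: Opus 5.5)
Your proof is correct, but in the nontrivial direction $p\Rightarrow w$ it takes a different route from the paper. The paper first treats $p$ bounded away from $0$ and $1$, where $\arcsin$ is Lipschitz. It then handles the general case by truncation: it sets $p_\epsilon=\max\{\min\{p,1-\epsilon\},\epsilon\}$ and shows by dominated convergence that $w_\epsilon=\arcsin(2p_\epsilon-1)$ is Cauchy in $W^{1,2}$. Completeness of $W^{1,2}$ and the existence of an absolutely continuous representative finish the job, and that representative coincides with $w$ because $w$ is continuous. You instead work on the relatively open set $E=\{0<p<1\}$ and prove the increment estimate $|w(t)-w(s)|\le\int_{[s,t]\cap E}|g|$. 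Absolute continuity then follows from absolute continuity of the Lebesgue integral alone. Your route is elementary and self-contained. It shows that only $g\in L^1$ is needed for absolute continuity, with square-integrability used only for membership in $\mathcal{D}^W$. It also treats the contact set $\{p\in\{0,1\}\}$ explicitly in both directions, which the paper leaves implicit. The paper's route, in exchange, hands the possibly Cantor-like structure of $[a,b]\setminus E$ to standard Sobolev theory and needs no case analysis.

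To make your argument airtight, replace ``$w$ constant at $\pm\frac{\pi}{2}$ on the complement'', which is not true as stated since $w$ may take both values there, by the actual crossing argument. If $w(s)=-\frac{\pi}{2}$ and $w(t)=\frac{\pi}{2}$, put $\tau_1=\sup\{\tau\in[s,t]:w(\tau)=-\frac{\pi}{2}\}$ and $\tau_2=\inf\{\tau\in[\tau_1,t]:w(\tau)=\frac{\pi}{2}\}$. Then $(\tau_1,\tau_2)$ is a single component of $E$, and $\pi\le\int_{\tau_1}^{\tau_2}|g|$. Reduce a general pair $s<t$ to this case by splitting at $\min([s,t]\setminus E)$ and $\max([s,t]\setminus E)$. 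Also state explicitly that $w'=g$ a.e. Off $E$ this holds because every point of $(a,b)$ where $w=\pm\frac{\pi}{2}$ is a global extremum of $w$, so $w'=0$ wherever it exists there. Finally, drop the abandoned first route via $W$.
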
 

\begin{proof}
    See Appendix \ref{pf:lem-bij}.
\end{proof}

\begin{remark}
The transformation $p=\frac{1+\sin(w)}{2}$ is the binary version of the square-root transformation commonly used in information geometry \citep{AN2000}. 
A stochastic choice rule $(p_1(t),\ldots,p_N(t))$ can be viewed as a path on the probability simplex.
Under the square-root map $p_i=\phi_i^2$, an infinitesimal displacement with respect to the parameter is
\[
ds^2 =\sum_{i=1}^N \frac{(p_i'(t))^2}{p_i(t)}dt^2 =
4\sum_{i=1}^N (\phi_i'(t))^2 dt^2.
\]
This embeds the simplex into the unit sphere and converts the Fisher-Rao metric into the standard Euclidean metric on a sphere up to a constant factor. 
In the binary case, the simplex is an interval and the square-root image is a quarter circle. The variable $w$ is an angular coordinate on this circle.
\end{remark}

\noindent \textbf{Notation} For notational simplicity, we set the state space to be $\Theta=[0,1]$, the scaling parameter $\lambda$ to be $1$, and assume a uniform prior.  It is without loss of generality to adopt such normalization as we can always recover the solution to the original problem through the solution to the normalized one, the detailed argument are provided in Appendix~\ref{pf:normalize}.

We also introduce the following notation. Let $J: AC([0,1]) \to \mathbb{R}$ be the integral functional $x \mapsto J[x]: = \int_{0}^{1} \Lambda\bigl(t, x(t), x'(t)\bigr)\,dt$. The integrand $\Lambda: \mathbb{R}^3 \to \mathbb{R}$ is called the \emph{Lagrangian}. We use subscripts to denote partial derivatives and superscripts to distinguish between optimization problems \eqref{pb:p} and \eqref{pb:w}. That is:
\begin{align*}
    & \Lambda_{x}(t,x,v) := \frac{\partial}{\partial x}\Lambda(t,x,v) && \Lambda_{v}(t,x,v) := \frac{\partial}{\partial v}\Lambda(t,x,v)\\
    & J^P[p] := \int_{\Theta} \Lambda(t,p(t),p'(t)) dt && \Lambda^{P}(t,p(t),p'(t)) :=    \frac{p'(t)^2}{p(t)(1-p(t))} - u(t) p(t)    \\
    & J^W[w] := \int_{\Theta} \Lambda(t,w,w') dt   && \Lambda^{W}(t,w(t),w'(t)) :=   (w'(t))^2 -  u(t) \frac{1+\sin(w(t))}{2} 
\end{align*}
A solution refers to an admissible function $x_*$ that satisfies $J[x_*] \leq J[x]$ fo all admissible functions $x$, it is also called as a minimizer.

\subsection{Existence of minimizer}
Our first question of interest is whether an optimal solution to Problem \eqref{pb:p} exists in our chosen function class $AC(\Theta,[0,1])$. At first sight, it seems challenging to establish the existence of a solution to Problem \eqref{pb:p} because its Lagrangian $\Lambda^P$ is not coercive. Luckily, the transformation above helps mitigate this difficulty. The coercivity of $\Lambda^{W}$ allows us to apply the standard technique of \textbf{direct method} to establish the existence of a solution in the class of absolutely continuous functions. We then combine our necessary condition to argue that the optimal solution is actually Lipschitz.\footnote{A function $f:[a,b] \to \mathbb{R}$ is said to be Lipschitz continuous if there exists a constant $K>0$ such that
\[
|f(t_1) - f(t_2)| \leq K|t_1 -t_2| \quad \text{ for any } t_1,t_2 \in [a,b].
\]}

We might also care whether our constraint $0 \leq p \leq 1$ is binding. It turns out that, for the optimal solution, the constraint is either always binding or not binding at all, but it can not be active on some intervals and inactive on other intervals. That is, under the optimal signal structure, the agent either learns nothing about the state and sticks to one action or she never chooses an action with certainty under whichever state. This helps us to simplify the necessary condition, so we do not need to impose an extra multiplier for the constraint.

The following statement summarizes the result, and the proof can be found in the appendix: 
\begin{thm} \label{thm:ext}
    Suppose $u$ is piecewise continuous, then Problem \eqref{pb:p} admits a solution in $Lip([0,1],[0,1])$. Furthermore, if $u$ is also single-crossing, then the optimal solution to Problem \eqref{pb:p} belongs to one of the following two kinds:
    \begin{itemize}
        \item Pure boundary solution: $p(t)=0$ for all $t \in [0,1]$ or $p(t)=1$ for all $t \in [0,1]$
        \item Strict interior solution: $0<p(t)<1$ for all $t \in [0,1]$
    \end{itemize}
\end{thm}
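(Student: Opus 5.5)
I will work entirely with Problem \eqref{pb:w} and transfer conclusions back to \eqref{pb:p} through Lemma~\ref{lem:bij}. Since $\Psi$ preserves the objective value on admissible functions, a minimizer of $J^W$ over $\mathcal{D}^W$ gives a minimizer of $J^P$. Moreover, $w$ is Lipschitz exactly when $p=\frac{1+\sin w}{2}$ is, because $p'=\frac12\cos(w)w'$ and $|\cos w|\le 1$.

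For existence I will use the direct method. The Lagrangian $\Lambda^W$ is convex in $v$ and coercive: $\Lambda^W\ge v^2-\norm{u}_\infty$, because $u$ is bounded under Assumption~\ref{ass:pwc}. Hence $J^W$ is bounded below on $\mathcal{D}^W$ (which contains the constant $0$). Take a minimizing sequence $w_n$. Then $\int (w_n')^2$ is uniformly bounded and $|w_n|\le \pi/2$, so $(w_n)$ is bounded in $H^1(0,1)$. Passing to a subsequence, $w_n'\rightharpoonup w'$ weakly in $L^2$ and $w_n\to w$ uniformly by Arzel\`a--Ascoli, with $w$ absolutely continuous and valued in $[-\pi/2,\pi/2]$. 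The norm term is weakly lower semicontinuous. The term $\int u\frac{1+\sin w_n}{2}$ converges by dominated convergence. So $w$ is a minimizer.

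For Lipschitz regularity I will write first-order conditions with respect to the obstacle constraint. On any open set where $|w|<\pi/2$, the Euler--Lagrange equation is
\[2w''=-\tfrac{u(t)}{2}\cos w.\]
It holds in the weak sense and hence almost everywhere. Its right-hand side is bounded, so $w'$ is bounded on such intervals with a bound independent of the interval. A variational inequality against one-sided variations then handles the contact set. On the contact set $w'=0$ a.e., and the bound extends to all of $[0,1]$, so $w$ is Lipschitz. Alternatively, I can regularise by noting that $w$ solves an obstacle problem with bounded forcing, which gives $w\in W^{2,\infty}$ locally.

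The dichotomy is the main obstacle. Suppose $w$ touches $\pi/2$ (that is, $p=1$) at some point but is not identically $\pi/2$. Near a point where $w=\pi/2$ the force $-\frac{u}{4}\cos w$ vanishes to first order. I would linearise in $z=\pi/2-w\ge 0$, which gives $\cos w=\sin z\approx z$ and
\[z''=\tfrac{u}{4}\sin z.\]
First, the contact set $\{z=0\}$ can only sit where the constraint pushes the right way, namely where $u\ge 0$, i.e.\ on $[t_u,1]$ by Assumption~\ref{ass:sc}. If $u<0$ near a contact point, then shifting $w$ down strictly improves the objective, a contradiction. Second, in the interior of the contact set $z=z'=0$. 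At a boundary point of the contact set the minimizer is $C^1$: the standard Weierstrass--Erdmann, or smooth-fit, argument applies because the Lagrangian is strictly convex in $v$. So at that point $z=z'=0$ and $z$ solves $z''=\frac{u}{4}\sin z$ with a Lipschitz right-hand side on each continuity piece of $u$. Uniqueness for this ODE (Gr\"onwall, continued across the finitely many jumps of $u$) forces $z\equiv 0$ on the whole component, and then on the whole interval. The delicate point is making the smooth-fit claim rigorous when the contact point coincides with a discontinuity of $u$ or with an endpoint $0$ or $1$. At the endpoints I would use the natural boundary condition $w'(0)=w'(1)=0$, which comes from the free ends. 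The case $w=-\pi/2$ is symmetric, with $u\le 0$ and the region $[0,t_u]$. Finally, $w$ cannot touch both $\pi/2$ and $-\pi/2$: touching one already forces $w$ to be constant, equal to that bound. This leaves exactly the two alternatives in Theorem~\ref{thm:ext}.
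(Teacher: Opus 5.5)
Your argument is correct in outline, and it takes a genuinely different route from the paper's for the dichotomy and the Lipschitz bound. The existence step is the same direct method: the paper cites Clarke's integral-semicontinuity lemma where you use Arzel\`a--Ascoli and weak lower semicontinuity.

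The paper never analyses the obstacle. It relaxes the constraint, poses \eqref{pb:w} on $\mathbb{R}/2\pi\mathbb{Z}$, and applies Clarke's unconstrained necessary conditions and Picard--Lindel\"of to show that an extremal cannot sit on $\pm\frac{\pi}{2}$ on an interval and then leave. It then rules out boundary-crossing global minimizers using single-crossing together with reflection and affine comparison constructions. Finally it uses strict convexity of $J^P$ (uniqueness) to transfer the conclusion back to the constrained problem.

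You instead establish smooth fit for the constrained minimizer and then run the same ODE-uniqueness step, using that the force $-\frac{u}{4}\cos w$ vanishes at $w=\pm\frac{\pi}{2}$. The paper's route buys off-the-shelf necessary conditions. Yours never uses Assumption~\ref{ass:sc}, so it gives the dichotomy for every piecewise continuous $u$. It also justifies the Euler equation for the constrained minimizer. The paper's Lipschitz corollary imports that equation from Theorem~\ref{thm:nc2}, whose proof relies on a necessary condition for problems without state constraints.

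Three repairs are needed.

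\textbf{Delete the claim that contact can only occur where $u\ge 0$.} It is never used, and its justification fails. At a contact point $\cos w=0$, so lowering $w$ by $\epsilon\phi$ gains nothing at first order. On a flat contact piece the gain $\frac{\epsilon^2}{4}\int|u|\phi^2$ can lose to the cost $\epsilon^2\int(\phi')^2$. Indeed, the same local argument would rule out the pure-boundary solution $w\equiv\frac{\pi}{2}$, which can be optimal even though it touches $\frac{\pi}{2}$ where $u<0$.

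\textbf{Prove smooth fit directly rather than citing Weierstrass--Erdmann}, which concerns unconstrained extremals. Suppose $w(t_0)=\frac{\pi}{2}$ and $(t_0,b)$ is a non-contact component.
\begin{itemize}
\item The limit $w'(t_0^+)$ exists, because $w''$ is bounded on $(t_0,b)$, and $w'(t_0^+)\le 0$ because $w(t_0)$ is a maximum.
\item Take the tent $\phi$ of height $1$ and small half-width $\delta$ centred at $t_0$. Then $w-\epsilon\phi$ is feasible for small $\epsilon>0$.
\item The right derivative of $\epsilon\mapsto J^W[w-\epsilon\phi]$ at $0$ equals $-\frac{2}{\delta}\bigl(w(t_0)-w(t_0-\delta)\bigr)+\frac{2}{\delta}\bigl(w(t_0+\delta)-w(t_0)\bigr)+O(\delta)$. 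The first term is $\le 0$, so optimality forces $w'(t_0^+)\ge 0$, hence $w'(t_0^+)=0$.
\item Use a half-tent if $t_0=0$. Mirror the argument for components to the left of $t_0$ and for the lower bound $-\frac{\pi}{2}$.
\end{itemize}
This uses only $u\in L^\infty$, so jumps of $u$ at $t_0$ are harmless. Carath\'eodory uniqueness then applies directly, because the right-hand side is Lipschitz in the state uniformly in $t$; there is no need to continue across jumps.

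\textbf{Order the steps: smooth fit first, then Lipschitz.} Your bound on $w'$ ``independent of the interval'' needs exactly the anchor $w'=0$ at a contact endpoint, or $w'(0)=w'(1)=0$. Once the dichotomy holds, Lipschitz continuity follows immediately from the Euler equation. Also, only the direction $w$ Lipschitz $\Rightarrow$ $p$ Lipschitz is true, and only that direction is needed.
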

\begin{proof}
    See Appendix~\ref{pf:thm-ext}
\end{proof}

\subsection{Uniqueness of interior minimizer}
We now study the uniqueness property of the solution to our problem. 
We have the following observation that the integrand of the minimization Problem \eqref{pb:p} is convex in $(p,p')$.\footnote{The integrand of the transformed problem \eqref{pb:w} is not convex in $(w,w')$, and this is not contradictory. Consider two candidates $w_1,w_2$ and their corresponding transformation $p_1=\frac{1+\sin{(w_1)}}{2},p_2=\frac{1+\sin{(w_2)}}{2}$, the transformation of the convex combination is in general not the convex combination of the transformations, i.e. $\frac{1+\sin{(\lambda w_1 + (1-\lambda)w_2)}}{2} \neq \lambda\frac{1+\sin{(w_1)}}{2} + (1-\lambda)\frac{1+\sin{(w_2)}}{2}$}
\begin{lemma} \label{lem:uni}
The integrand $\Lambda^{P}(t,p,p')= \frac{p'(t)^2}{p(t)(1-p(t))} - u(t)p(t)$ is  convex in $(p,p')$ for each $t$.
\end{lemma}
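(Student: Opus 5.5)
The term $-u(t)p$ is linear in $p$, hence convex, and a sum of convex functions is convex. So it suffices to show that
\[
f(p,v):=\frac{v^2}{p(1-p)}
\]
is convex on $(0,1)\times\mathbb{R}$, with the extended-value convention of the paper on the boundary $p\in\{0,1\}$. I would handle the interior by a perspective-function argument rather than by grinding through the Hessian.

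\textbf{Interior.} Split the denominator by partial fractions: $\frac{1}{p(1-p)}=\frac1p+\frac1{1-p}$, so
\[
f(p,v)=\frac{v^2}{p}+\frac{v^2}{1-p}.
\]
The map $(p,v)\mapsto v^2/p$ is the perspective of the convex function $v\mapsto v^2$, hence jointly convex on $p>0$. The second term is the same function composed with the affine map $(p,v)\mapsto(1-p,v)$, so it is convex on $p<1$. Their sum is therefore convex on $(0,1)\times\mathbb{R}$.

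As a check, one can compute the Hessian of $v^2/p$ directly. It is
\[
\begin{pmatrix} 2v^2/p^3 & -2v/p^2\\ -2v/p^2 & 2/p\end{pmatrix},
\]
which has determinant $0$ and nonnegative diagonal, so it is positive semidefinite.

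\textbf{Boundary.} The main subtlety is the convention on $p\in\{0,1\}$: the integrand is set to $0$ when $v=0$ and to $+\infty$ otherwise. I would verify that this extended function is exactly the lower semicontinuous closure of the interior function.

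\begin{itemize}
\item For $v\neq0$, $f\to+\infty$ as $p\to0$ or $p\to1$, which matches the value $+\infty$.
\item At $(0,0)$ and $(1,0)$, the value $0$ equals $\liminf f$ (take $v=0$), and $f\ge 0$ everywhere.
\end{itemize}

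The closure of a convex function is convex, so the extended integrand is convex on $[0,1]\times\mathbb{R}$.

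Alternatively, one can check the convexity inequality directly whenever an endpoint lies on the boundary. If the right-hand side is finite, the boundary point has $v=0$, and along the segment $f$ is the convex interior function extended continuously, with value $0$ at that endpoint. This confirms the inequality.

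Combining the interior convexity with the boundary closure argument, and adding the linear term $-u(t)p$, gives convexity of $\Lambda^P$ in $(p,p')$ for each fixed $t$.
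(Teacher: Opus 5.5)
Your proof is correct and follows the paper's route: the paper also splits $\frac{1}{p(1-p)}=\frac1p+\frac1{1-p}$ and proves convexity of $y^2/x$ and $y^2/(1-x)$ by a Jensen computation, which is exactly the standard proof that the perspective of $v\mapsto v^2$ is convex. The only difference is at the boundary: the paper checks the convexity inequality by hand for segments with an endpoint at $(0,0)$ or $(1,0)$, whereas your observation that the extended integrand is the lower semicontinuous closure of the interior function handles all these cases at once.
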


\begin{proof}
    See Appendix \hyperref[pf:lem-uni]{\ref{pf:lem-uni}}.
\end{proof}

Specifically, the inequality is strict in the convex relation whenever at least one of the derivatives is non-zero, so the functional $J^P$ satisfies: $$J^P[\lambda p_1 + (1-\lambda) p_2] < \lambda J^P[p_1] + (1-\lambda) J^P[p_2] \text{ for } \forall \lambda \in(0,1)$$ whenever $p'_1$ or $p'_2$ is nonzero on a set of positive measure.
Suppose two different solutions achieve the minimum and at least one of them is non-constant. The two solutions must differ on a neighborhood of positive measure, which implies that their convex combination would achieve a lower integral value, a contradiction. 
\begin{corollary}[Uniqueness]
    If the Problem \eqref{pb:p} admits a non-constant interior optimal solution, then it is the unique solution.
\end{corollary}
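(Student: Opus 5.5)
The plan is a strict-convexity argument for $J^P$ on the admissible class $\mathcal{D}^P$, which is convex, built on Lemma~\ref{lem:uni}. Let $p_1$ be the non-constant interior optimal solution, and suppose $p_2 \in \mathcal{D}^P$ is another minimizer with $p_2 \neq p_1$. Since both are continuous, they differ on an open interval, hence on a set of positive measure.

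First, I would check that $p_\mu := \mu p_1 + (1-\mu)p_2$ is admissible for $\mu \in (0,1)$. It is absolutely continuous and $[0,1]$-valued. Convexity of $\Lambda^P$ (Lemma~\ref{lem:uni}, extended to the boundary convention where the integrand is $0$ or $+\infty$) gives the pointwise bound $\Lambda^P(t,p_\mu,p_\mu') \le \mu\Lambda^P(t,p_1,p_1') + (1-\mu)\Lambda^P(t,p_2,p_2')$. Integrating this shows $J^P[p_\mu] < \infty$. Moreover, $p_\mu$ is interior, because $p_1 \in (0,1)$ everywhere.

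Second, I would show the inequality is strict on a set of positive measure. For $p \in (0,1)$, the term $p'^2/(p(1-p))$ is a perspective-type function: it is $g(p)\,p'^2$ with $g(p) = 1/(p(1-p))$, and $1/g = p(1-p)$ is concave. Its Hessian therefore degenerates only along the direction where $(\Delta p, \Delta p')$ is proportional to a specific vector determined by $(p, p')$. When $p'=0$, that direction is $(0,1)$-excluded, that is, the form is strictly convex in $p'$ alone. The linear term $-u p$ contributes nothing to strictness.

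The degenerate direction needs care. On the segment between $(p_1,p_1')$ and $(p_2,p_2')$, the quadratic form is strictly positive unless $p_2' - p_1'$ equals $\frac{p_1'(1-2p_1)}{2p_1(1-p_1)}(p_2-p_1)$, or the corresponding relation at intermediate points. I would argue as follows. If equality in the convexity inequality held almost everywhere, the pair would satisfy a first-order linear ODE relation of the form $(p_2-p_1)' = c(t)(p_2-p_1)$ almost everywhere along the segment. A cleaner route to the same conclusion is to use the parametrization $w$ and the fact that equality forces $\sqrt{p(1-p)}$-scaled derivatives to align. With the difference absolutely continuous, Grönwall then forces $p_2 - p_1 \equiv 0$ once they agree at one point, or else a contradiction through the linear term $-up$ in the tie.

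I expect this equality-case analysis to be the main obstacle. The paper's remark that strictness holds ``whenever at least one derivative is non-zero'' is too coarse as stated, since the rank-one degenerate direction must be ruled out. The fallback I would try first is to avoid pointwise strictness entirely and use the Euler–Lagrange equation. Both $p_1$ and $p_\mu$, as interior minimizers, must satisfy the same second-order ODE (in $w$-coordinates, $2w'' = -\tfrac12 u\cos w$), together with the natural boundary conditions $w'(0)=w'(1)=0$. Because $J^P$ is convex along the segment, it is constant along the segment. So every $p_\mu$ is a minimizer, and $\frac{d}{d\mu}$ of the Euler–Lagrange system gives a linearized equation for $h = p_2 - p_1$ with Neumann conditions. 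Strict convexity in $p'$ then forces $h' \equiv 0$ on $\{p_1' \neq 0\}$, so $h$ is constant. A nonzero constant $h$ is excluded because the Euler–Lagrange equation, whose term $u\cos w$ depends nontrivially on the level, cannot hold for both $p_1$ and $p_1 + h$ when $p_1$ is non-constant.

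Finally, if $J^P[p_\mu] < \mu J^P[p_1] + (1-\mu)J^P[p_2] = \min J^P$, this contradicts optimality. Hence the non-constant interior solution is the unique solution.
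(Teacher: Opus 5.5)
Your overall strategy is the paper's: convexity of $J^P$ along $p_\mu=\mu p_1+(1-\mu)p_2$, with strictness contradicting minimality. Your admissibility step is fine. The gap is the strictness step, which is the whole content of the corollary, and it comes from a miscalculation: the degenerate direction you describe does not exist. Write $h(x)=x(1-x)$ and $\phi(x,y)=y^2/h(x)$. Using $h''=-2$, one gets for any direction $(a,b)$
\[
(a,b)\,\nabla^2\phi(x,y)\,(a,b)^{\top}=\frac{2}{h^3}\Bigl[h\,a^2y^2+\bigl(a\,y\,h'-b\,h\bigr)^2\Bigr],
\]
which vanishes only when $b=0$ and $ay=0$. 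So the Hessian is positive definite whenever $y=p'\neq 0$. The null ray of the perspective $y^2/s$ is removed by the strict concavity of $s=p(1-p)$. Along your direction $b=\frac{y(1-2x)}{2x(1-x)}a$, the term $h\,a^2y^2>0$ is still present. The paper's pointwise remark is therefore correct, not too coarse. For $(p_1,p_1')\neq(p_2,p_2')$ the convexity inequality is strict unless $p_1'=p_2'=0$. The explicit boundary computations in the proof of Lemma~\ref{lem:uni} give the same conclusion when $p_2(t)\in\{0,1\}$.

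Because you believed in such a direction, neither of your substitute arguments closes the proof. The Gr\"onwall sketch is never carried out: $p_1$ and $p_2$ need not agree at any point, and the linear term $-up$ contributes no strictness at all. In the Euler--Lagrange fallback, $\Lambda^P_{p'p'}>0$ alone cannot force $h'=0$ when $p$ also varies along the segment; under your own claimed null direction with $\Delta p'\neq 0$, that step would be false. Likewise, ``$h'\equiv 0$ on $\{p_1'\neq 0\}$, so $h$ is constant'' does not follow unless you separately show that $\{p_1'=0\}$ is null. The linearized Neumann problem is never actually used.

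What is needed is short. If $J^P[p_\mu]$ equals the minimum, the nonnegative integrand $\mu\Lambda^P_1+(1-\mu)\Lambda^P_2-\Lambda^P_\mu$ vanishes a.e. By the identity above, this forces $p_1'=p_2'=0$ a.e.\ on the relatively open set $\{p_1\neq p_2\}$. On a component $I$ of that set, both functions are then constant. If $I\neq[0,1]$, continuity at an endpoint of $I$ lying outside $I$ makes the two constants equal, a contradiction. If $I=[0,1]$, then $p_1$ is constant, contradicting the hypothesis. Hence the inequality is strict on a set of positive measure. Since $J^P[p_1],J^P[p_2]<\infty$, integrating gives $J^P[p_\mu]<\min J^P$, the desired contradiction.
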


\subsection{Necessary conditions}
As discussed above, the global minimizer to Problem~(\ref{pb:w}) can either be a pure boundary solution or a strict interior solution. To describe the interior solution, it suffices to examine the necessary condition for the extremals in the problem without state constraints. It turns out that this condition is also satisfied by the pure boundary extremal. Hence, we have the following result:

\begin{thm}[Necessity]  \label{thm:nc2}
    Suppose $u$ is piecewise continuous on $\bigcup_{i=0}^{N}(t_i,t_{i+1})$.
    Let $w_{*}$ be a solution to Problem \eqref{pb:w}, then $w_{*}$ satisfies the \textbf{integral Euler equation} together with \textbf{natural boundary conditions} and is continuously differentiable at each discontinuity of $u$: \phantomsection
    \begin{equation}
       \begin{cases}
           w'_*(t) = \int_{0}^{t} -\frac{1}{4} u(s) \cos(w_*(s)) ds & \text{a.e. on } [0,1]\\
            w_*'(0) = w_*'(1) =0 
       \end{cases} 
       \tag{NC}
       \label{eq:Euler-NC2}
    \end{equation}     
\end{thm}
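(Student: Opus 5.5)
The plan is to reduce to an unconstrained first-variation argument by invoking Theorem~\ref{thm:ext}. Via the bijection of Lemma~\ref{lem:bij}, that theorem says any minimizer $w_*$ of \eqref{pb:w} is one of two kinds: a pure boundary solution $w_*\equiv \pm\frac{\pi}{2}$, or a strict interior solution with $-\frac{\pi}{2}<w_*(t)<\frac{\pi}{2}$ for all $t$.

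In the boundary case the conditions hold trivially. Here $w_*'\equiv 0$ and $\cos(w_*)\equiv 0$, so both sides of the integral Euler equation vanish identically, and the natural boundary conditions and continuous differentiability are immediate.

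In the interior case, $w_*$ is continuous on the compact set $[0,1]$, so it stays a positive distance $\delta$ away from $\pm\frac{\pi}{2}$. Hence for any Lipschitz (or $W^{1,2}$) perturbation $\eta$, with no restriction on $\eta(0)$ or $\eta(1)$, the function $w_*+\varepsilon\eta$ is admissible for $|\varepsilon|$ small. This is the step where the state constraint could cause trouble, and Theorem~\ref{thm:ext} is exactly what removes it; no multiplier is needed.

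I will then differentiate $\varepsilon\mapsto J^W[w_*+\varepsilon\eta]$ at $0$. Differentiation under the integral is justified by dominated convergence, using that $u$ is bounded (Assumption~\ref{ass:pwc}), that $w_*'\in L^2$, and that $\sin$ has bounded derivative. Minimality gives
\[
\int_0^1 2w_*'(t)\eta'(t) - \tfrac12 u(t)\cos(w_*(t))\,\eta(t)\,dt = 0 \quad\text{for all such }\eta .
\]
Set $G(t):=\int_0^t \tfrac12 u(s)\cos(w_*(s))\,ds$. This is Lipschitz because its integrand is bounded and measurable. Integrating the second term by parts gives
\[
\int_0^1 \bigl(2w_*'+G\bigr)\eta'\,dt - G(1)\eta(1)=0 .
\]
Restricting first to $\eta$ with $\eta(0)=\eta(1)=0$, the du Bois-Reymond lemma yields $2w_*'+G=c$ a.e. for some constant $c$. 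Substituting back for general $\eta$ gives $c\,(\eta(1)-\eta(0)) - G(1)\eta(1)=0$ for all $\eta$. Choosing $\eta\equiv 1$ forces $c=0$, and then choosing $\eta(t)=t$ forces $G(1)=0$.

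From $c=0$ we get $w_*'(t)=-\tfrac12 G(t)=\int_0^t -\tfrac14 u(s)\cos(w_*(s))\,ds$ a.e., which is the integral Euler equation. Its right-hand side is continuous, so $w_*'$ has a continuous representative. Therefore $w_*$ is $C^1$ on all of $[0,1]$, including at the discontinuity points $t_i$ of $u$, where it satisfies $w_*'(0)=-\tfrac12 G(0)=0$ and $w_*'(1)=-\tfrac12 G(1)=0$, the natural boundary conditions.

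I expect the main delicacy to be the admissibility of two-sided variations, which Theorem~\ref{thm:ext} settles, together with justifying the derivative under the integral sign. The rest is the standard du Bois-Reymond argument, which needs no smoothness of $u$ beyond boundedness and measurability.
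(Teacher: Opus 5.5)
Your argument is correct, and its reduction is the same as the paper's. The paper also relies on the dichotomy of Theorem~\ref{thm:ext} (see the paragraph before the statement), so that an interior solution of \eqref{pb:w} is a strong local minimizer of the problem without state constraint, and it likewise notes that the pure boundary solutions satisfy (NC) trivially.

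The difference is the tool used to extract the Euler equation. The paper cites Clarke's nonsmooth necessary conditions for the generalized Bolza problem (Theorem 18.13 in Clarke's book). After checking measurability and a growth bound on the proximal subdifferential, the Euler inclusion collapses to $q=2w_*'$, $q'=\Lambda^W_w$. Transversality with $l\equiv 0$ and $E=\mathbb{R}^2$ then yields $w_*'(0)=w_*'(1)=0$, and absolute continuity of the costate gives continuity of $w_*'$ across the jumps of $u$. You instead take the G\^ateaux derivative directly, integrate by parts against $G$, apply du Bois-Reymond, and read off the natural boundary conditions from test functions that do not vanish at the endpoints. Because $\Lambda^W$ is smooth in $(w,w')$, quadratic in $w'$ and bounded in $w$, your route is fully sufficient and self-contained. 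Clarke's theorem buys generality that is not needed here: nonsmooth Lagrangians, endpoint constraints, and the Weierstrass condition as a by-product.

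There is one bookkeeping slip. From $c(\eta(1)-\eta(0))-G(1)\eta(1)=0$, the choice $\eta\equiv 1$ gives $G(1)=0$, and $\eta(t)=t$ then gives $c=G(1)=0$. You swapped the two attributions, but the conclusion stands.

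A remark on the reduction itself. Citing Theorem~\ref{thm:ext} imports single-crossing (Assumption~\ref{ass:sc}), which the statement does not list. Also, the paper's proof of that dichotomy uses the necessary condition for unconstrained minimizers, so you should note that your variational argument applies verbatim there, where every $\eta$ is admissible. You can avoid the detour altogether. Take $T(x)=\arcsin(\sin x)$, which is $1$-Lipschitz with $|T'|=1$ a.e.\ and $\sin\circ T=\sin$. Then $J^W[T\circ w]=J^W[w]$ for every $w\in AC([0,1],\mathbb{R})$ with $w'\in L^2$, so the constrained and unconstrained infima coincide. Hence every solution of \eqref{pb:w} is a global minimizer of the unconstrained problem, and your first-variation argument applies to all $\eta\in W^{1,2}$ directly, with no case split and no appeal to single-crossing.
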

\begin{proof}
    See Appendix \ref{pf:thm-nc2}.
\end{proof}
\begin{remark}
    We might also write the integral equation as a second-order ordinary differential equation, in the sense of \textbf{Carath\'{e}odory}\footnote{A function $y$ is a solution to a differential equation in the sense of \textbf{Carath\'{e}odory} if $y$ is absolutely continuous and satisfies the differential equation almost everywhere.} \citep[\S 10]{W1998}:
    \[ w''(t) =  -\frac{1}{4} u(t) \cos(w_*(t)) \quad \text{ for almost all t.}\]
\end{remark}
\begin{remark}
    Our necessary condition is not sufficient as it is always satisfied by $w \equiv -\frac{\pi}{2}$ and $w \equiv \frac{\pi}{2}$. This is not too surprising, because for the necessary condition to be sufficient, the literature typically requires the Lagrangian $\Lambda(t,x,v)$ to be convex in both $(x,v)$, but $\Lambda^{W}(t,w,w')$ is only convex in $w'$.
\end{remark}
Our necessary condition is related to a version of the \textbf{Problem of Bolza}, and we include its proof in the Appendix for completeness. We will first discuss its implications for the regularity of the solution and then its interpretation in the next sub-section. 

\subsubsection{Regularity from Euler equation}
Standard results in the calculus of variations literature typically study conditions under which an absolutely continuous solution is also Lipschitz continuous.
The above integral Euler equation offers an easy way to prove higher order regularity for the solution and helps us to understand why Fisher information cost induces smoothness of the optimal solution. In particular, under the mild assumption that payoff is piecewise continuous, the optimal solution is twice differentiable almost everywhere. As a result, the response curve exhibits no kinks or spikes.  

\begin{thm}[Regularity] \label{thm:reg}
Let $p_{*}$ be a solution to Problem~(\ref{pb:p}). If $u$ is piecewise continuous, then $p_{*}$ is twice differentiable almost everywhere.
Furthermore, if $u \in C^{k}$ for some $k \in \mathbb{N}$, then $p_{*} \in C^{k+2}$.  
\end{thm}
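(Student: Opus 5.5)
The plan is to transfer everything to the reparametrized problem via Lemma~\ref{lem:bij}. Let $w_* = \arcsin(2p_*-1)$, which is a solution to Problem~\eqref{pb:w}. Once we know enough about $w_*$, we recover $p_*$ through the smooth map $p_* = \frac{1+\sin(w_*)}{2}$.

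By Theorem~\ref{thm:ext}, $p_*$ is either a pure boundary solution or a strict interior solution. In the boundary case $p_*$ is constant, hence $C^\infty$, and there is nothing to prove. So assume $w_*$ takes values in $(-\frac{\pi}{2},\frac{\pi}{2})$.

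By Theorem~\ref{thm:nc2}, $w_*$ satisfies
\[
w_*'(t) = -\frac14\int_0^t u(s)\cos(w_*(s))\,ds \quad \text{a.e.}
\]
The right-hand side $g(t)$ is the integral of a bounded function, since $u$ is bounded by Assumption~\ref{ass:pwc} and $|\cos|\le 1$. Hence $g$ is Lipschitz.

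Because $w_*$ is absolutely continuous, $w_*(t) = w_*(0)+\int_0^t w_*'$. We may therefore replace $w_*'$ by its continuous representative $g$, which gives $w_* \in C^1$ with $w_*' = g$ everywhere.

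The integrand $u(s)\cos(w_*(s))$ is continuous on each open interval $(t_i,t_{i+1})$, since it is the product of a piecewise continuous function and a continuous one. By the fundamental theorem of calculus, $g$ is differentiable at every point of $\bigcup_i (t_i,t_{i+1})$, with
\[
w_*''(t) = -\tfrac14 u(t)\cos(w_*(t)).
\]
The excluded set is the finite set $\{t_0,\dots,t_N\}$, so $w_*$ is twice differentiable everywhere except at finitely many points, in particular almost everywhere.

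Composition with the $C^\infty$ map $w\mapsto \frac{1+\sin w}{2}$ preserves this. By the chain rule,
\[
p_*' = \tfrac12\cos(w_*)\,w_*', \qquad p_*'' = \tfrac12\bigl(\cos(w_*)w_*'' - \sin(w_*)(w_*')^2\bigr)
\]
at every point where $w_*''$ exists. This proves the first claim.

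For the $C^k$ claim, I will bootstrap. Suppose $u\in C^k$ on $[0,1]$, so there are no discontinuity points in the interior.

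At the base step, $w_*$ is $C^1$, hence $u\cos(w_*)$ is continuous. Then $w_*' = g$ is $C^1$, so $w_*\in C^2$.

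For the inductive step, suppose $w_*\in C^m$ with $m\le k+1$. Then $u\cos(w_*)\in C^{\min(k,m)}$, so $w_*'\in C^{\min(k,m)+1}$ and $w_*\in C^{\min(k,m)+2}$. Iterating this raises the regularity until it stalls at $w_* \in C^{k+2}$.

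Finally, $p_* = \frac{1+\sin(w_*)}{2}$ inherits $C^{k+2}$ regularity as a composition with a smooth function.

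The main obstacle is the first step: justifying that the a.e. identity for $w_*'$ from Theorem~\ref{thm:nc2} upgrades to $w_*\in C^1$ with a pointwise identity. This requires carefully identifying the a.e.-defined derivative with the continuous representative through absolute continuity of $w_*$. The remaining steps are routine applications of the fundamental theorem of calculus and the chain rule.
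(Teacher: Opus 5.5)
Your proposal is correct and follows essentially the same route as the paper: pass to $w_*=\arcsin(2p_*-1)$, use the integral Euler equation from Theorem~\ref{thm:nc2}, bootstrap via the fundamental theorem of calculus when $u\in C^k$, and transfer the regularity back through the smooth map $w\mapsto \frac{1+\sin w}{2}$. The only minor difference is in the first claim: the paper argues that $w_*'$ is Lipschitz and hence differentiable almost everywhere, whereas you apply the fundamental theorem of calculus at continuity points of $u\cos(w_*)$, which gives the slightly sharper conclusion that $w_*''$ (and hence $p_*''$) exists everywhere except at the finitely many breakpoints $t_i$, with $w_*''=-\frac14 u\cos(w_*)$ there.
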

\begin{proof}
    See Appendix \ref{pf:thm-reg}.
\end{proof}

\noindent \textbf{Numerical Tractability}: 
The boundary‐value problem arising from the necessary conditions can be solved via a shooting method: one integrates the ODE with  $w'(0) =0$ while varying the initial value $w(0)$. In doing so, the task reduces to a one‐dimensional root‐finding problem. In future work, we will examine the well‐posedness of this initial‐value formulation to rigorously establish its numerical traceability.

\subsection{Connection to a pendulum system:} The necessary condition induces a 2nd order pendulum-type ordinary differential equation with boundary value constraints
\[
\begin{cases}
    w''(t) = -\frac{1}{4} u(t) \cos(w(t)) \text{ a.e.}\\
    w'(0) = w'(1) =0
\end{cases}
\]

This system admits a physical interpretation as a unit-mass object attached to a pivot by a rigid, unit-length rod, and subject to a time-varying external force acting in the horizontal direction.\footnote{We assume there is no gravitational force acting on the object.} The interval $[0,1]$ represents time, with $w(t)$ denoting the angle between the rod and the vertical axis at time $t$, and $\tilde{u}(t)=\frac{1}{4} u(t)$ representing the applied force. Angles to the left of the vertical midline are taken to be negative, and horizontal forces directed to the right are taken to be negative.

A standard force decomposition and Newton’s second law imply that the object experiences a tangential acceleration of $\tilde{u}(t)\cos(w(t))$ along its circular trajectory. The boundary conditions $w'(0)=w'(1)=0$ then translate into the requirement that the object starts and ends at rest, i.e., with zero angular velocity at both endpoints. Figure~\ref{fig:pendulum} illustrates the setup. 

\begin{figure}[H] 
    \centering
 \begin{tikzpicture}[scale=2.5] 
        \coordinate (A) at (-1.5,0);
        \coordinate (B) at (1.5,0);
        \coordinate (C) at (0,2);
        \draw[->,dashed,gray] (-2,0) -- (2,0);
        \draw[->,dashed, gray] (0,2) -- (0,-2) node[coordinate] (O) {};
        \draw (-0.5,2) -- (0.5,2);
        \draw[support] (-0.5,2)  rectangle++ (1,0.25);

        \draw[line width=2pt, gray!60] (C) -- (A);
        \draw[line width=0.5pt, black] (C) -- (A);

        \draw[->] (A) -- ++(0:1.25) node[above right]{{$\hspace{-7pt}\tilde{u}(t)$}};

        \draw[->] (A) -- ++(-36:1) node[below ]{{$\tilde{u}(t)\cos(\theta)$}};
        \draw[->] (A) -- ++(54:0.75) node[above left]{\tiny{$\empty$}};

        \draw[->] (A) -- ++(-126:0.75) node[right,rotate=-30]{\tiny{$\empty$}};

        \draw[dashed,gray] ++(A) ++(-36:1) -- ++(54:0.5);
        \draw[dashed,gray] ++(A) ++(54:0.75) -- ++(-36:1);

        \draw[dashed] (-1.5,0) .. controls (-1,-0.5) and (1,-0.5) .. (1.5,0);
        \draw[fill = black] (A) circle (4pt);
        \draw[dashed] (B) circle (4pt); 

        \tkzMarkAngle[-, size=0.5](A,C,O)
        \tkzLabelAngle[pos=0.6](A,C,O){{$\theta$}}
    \end{tikzpicture}
    
    \caption{A pendulum with varying force}
    \label{fig:pendulum}
\end{figure}

\section{Application}
This physical interpretation provides useful intuition for anticipating the qualitative shape of the solution and analyzing comparative statics. For instance, consider the case where the payoff is strictly positive or strictly negative, corresponding to a force that never changes sign. In such scenarios, bringing the object to rest at the final time requires it to start at one of the boundary angles, either $w = -\frac{\pi}{2}$ or $w = \frac{\pi}{2}$. This results in a constant solution—$w(t) \equiv -\frac{\pi}{2}$ or $w(t) \equiv \frac{\pi}{2}$, which implies that no information is acquired.

Now consider a more interesting case in which the payoff $u(t)$ is \emph{single–crossing from below} (Assumption~\ref{ass:sc}). This corresponds to a force that first pushes to the left and later reverses direction to the right. In this case, the object rotates counterclockwise, first speeds up and then slows down, but never reverses its direction of rotation. Because if it reverses to rotate clockwise when the force is pulling to the left and acting in the same direction, then it would not end up at rest. As a consequence, the angle $w(t)$ must evolve in a single direction over time, hence is monotone. 

\begin{prop} \label{prop:monotone}
    If $u$ is single–crossing from below, then the solution to Problem~\eqref{pb:p} is increasing. Likewise, if $u$ is single-crossing from above, then the solution is decreasing.
\end{prop}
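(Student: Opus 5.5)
By Theorem~\ref{thm:ext}, the solution is either a pure boundary solution or a strict interior solution. A constant solution is trivially (weakly) increasing, so only the interior case needs work. Since $\Psi(w)=\frac{1+\sin w}{2}$ is strictly increasing on $[-\frac{\pi}{2},\frac{\pi}{2}]$, Lemma~\ref{lem:bij} reduces the claim to showing that the corresponding $w_*$ is nondecreasing. For an interior solution we have $w_*(t)\in(-\frac{\pi}{2},\frac{\pi}{2})$ for all $t$, hence $\cos(w_*(t))>0$ on $[0,1]$. By Theorem~\ref{thm:nc2}, $w_*$ satisfies the integral Euler equation
\[
w_*'(t)=-\tfrac14\int_0^t u(s)\cos(w_*(s))\,ds, \qquad w_*'(0)=w_*'(1)=0 .
\]
I will take the right-hand side as the continuous representative of $w_*'$, which Theorem~\ref{thm:reg} makes legitimate. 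Define $g(t):=-\frac14 u(t)\cos(w_*(t))$. Under Assumption~\ref{ass:sc} (with the normalization $\Theta=[0,1]$ and $t_u\in(0,1)$), $g>0$ on $[0,t_u)$ and $g<0$ on $(t_u,1]$.

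The key step is a sign argument on $F(t):=\int_0^t g(s)\,ds$, which equals $w_*'(t)$.
\begin{itemize}
\item On $[0,t_u]$, $F$ is nondecreasing with $F(0)=0$, so $F\ge 0$ there.
\item On $[t_u,1]$, $F$ is nonincreasing and $F(1)=0$ by the natural boundary condition, so $F(t)\ge F(1)=0$ there as well.
\end{itemize}
Hence $w_*'\ge 0$ everywhere. Integrating the absolutely continuous $w_*$ gives that $w_*$ is nondecreasing, and therefore $p_*=\Psi(w_*)$ is nondecreasing. In fact $F>0$ on $(0,1)$, since $g\neq 0$ almost everywhere, so the monotonicity is strict. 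This is exactly the pendulum intuition: the angular velocity starts at zero, can only increase while the force pushes one way, and can only decrease afterward, yet must end at zero, so it never becomes negative.

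For the single-crossing-from-above case, replace $u$ by $-u$. One can either redo the sign argument with all inequalities reversed, or observe that $\tilde p(t)=1-p(t)$ solves the problem with payoff $-u$, up to an additive constant in the objective, because the Fisher term is invariant under $p\mapsto 1-p$. Either route shows the solution is decreasing.

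The main subtlety I expect is the degenerate case: checking that the single-crossing point lies in the interior, and that the boundary case from Theorem~\ref{thm:ext} is handled cleanly. Also, Theorem~\ref{thm:nc2} gives the Euler equation only almost everywhere. I would handle this by working with the continuous integral representative of $w_*'$, which agrees with the true derivative a.e. and thus still determines $w_*$ through $w_*(t)=w_*(0)+\int_0^t w_*'$.
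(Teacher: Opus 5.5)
Your proposal is correct and follows the paper's proof essentially step for step: the boundary/interior dichotomy from Theorem~\ref{thm:ext}, then the integral Euler equation with $w_*'(0)=w_*'(1)=0$, then the sign argument showing $w_*'\ge 0$; your strictness observation and the $p\mapsto 1-p$ symmetry for the single-crossing-from-above case are valid additions that the paper leaves implicit. The one step the paper makes explicit and you skip is the non-uniform prior: there, $p_*(t)=\tilde p_*(G(t))$ with $G(t)=\int_a^t d\tau/\pi(\tau)$ strictly increasing, and the transformed payoff $(\pi\circ G^{-1})^2\, u\circ G^{-1}$ is still single-crossing, so monotonicity carries over from the normalized problem.
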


\begin{proof}
  See Appendix~\ref{pf:monotone}.
\end{proof}

\begin{remark}\label{rem:mono_vs_MI}
Proposition~\ref{prop:monotone} only requires the payoff $u$ to be single-crossing, and allows weak inequalities\footnote{That is, there exists $t_{0}\in[0,1]$ such that
$u(t)\le 0$ for $t\le t_{u}$ and $u(t)\ge 0$ for $t\ge t_{u}$.}. This is in contrast to the mutual information cost function, which requires the payoff to be monotone as well.
This is because the Fisher information cost cumulatively captures the cost of distinguishing nearby states and penalizes local variation of the response curve. 

By contrast, under the mutual-information cost, each state is treated separately and the cost is the cumulation of reduction in entropy at each state. As a result, having a higher precision\footnote{By higher precision, we mean the probability of choosing the more likely action is higher. In the binary case, the entropy $H(p)=-p\log(p)-(1-p)\log(1-p)$ is decreasing with $p$ on $[1/2,1]$.} 
at a state always incurs a higher cost with mutual information cost, while this may not be the case for Fisher information cost if the signal structure also has high precision at the states nearby. 
\end{remark}

\subsection{Predicting the psychometric curve}
As an application in predicting the shape of the solution, we provide two sets of sufficient conditions on the payoffs to induce a sigmoidal profile for the choice probability $p$ in Problem~\eqref{pb:p}, which is widely observed in perceptual decision-making experiments.

First, consider the case where the payoff is single-crossing and symmetric in time. In particular, suppose the external force $u$ is antisymmetric around $t = 0.5$, i.e., $u(t) = -u(1 - t)$ and $u(t) <0$ on $[0,\frac{1}{2})$. In such a setting, we can show that Problem~\eqref{pb:w} admits a local interior minimizer. Specifically, by choosing an initial angle sufficiently close to the horizontal (i.e., near the ceiling), the object would cross the vertical midline at $t = 0.5$, so that the angle changes sign together with the force. The time and angular symmetry then imply that the total impulse from the external force cancels out, allowing the object to come to rest at the endpoint.
 This would induce a sigmoidal profile for the choice probability $p(t)$ in Problem~\eqref{pb:p}—a pattern widely observed in perceptual decision-making experiments.

\begin{prop} \label{prop:S-shape1}
    Suppose the prior is uniform over $\Theta$, the payoff is antisymmetric and single-crossing, i.e. $u(t) = -u(1-t) \text{ on } t \in [0,1]$ and $u(t)<0$ on $[0,1/2)$, then the optimal solution is interior. 
    Let $p_{*}$ be the interior maximizer of Problem \eqref{pb:p}. Then: 
    \[p_{*}''(t) >0  \text{ on } [0,1/2), \text{ and }  p_{*}''(t) <0 \text{ on } (1/2,1].\] 
\end{prop}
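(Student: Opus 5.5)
The plan has three steps. First I show that the optimum is interior. Then I use uniqueness to make the solution antisymmetric about $t=1/2$. Finally I read off the sign of $p_*''$ from the Euler equation in Theorem~\ref{thm:nc2}.

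\textbf{Step 1 (interiority).} By Theorem~\ref{thm:ext}, the optimum is either a pure boundary solution or a strict interior one, so it suffices to beat the constants $w\equiv \pm\frac{\pi}{2}$. Antisymmetry gives $\int_0^1 u\,dt=0$, so both constants have objective value $J^W=0$. Fix an increasing odd-about-$\frac12$ test function $\varphi$, for instance $\varphi(t)=t-\frac12$, and set $w_\varepsilon=\varepsilon\varphi$. Using $\int_0^1 u=0$ and $\sin(\varepsilon\varphi)=\varepsilon\varphi+O(\varepsilon^3)$,
\[
J^W[w_\varepsilon]=\varepsilon^2\int_0^1(\varphi')^2\,dt-\frac{\varepsilon}{2}\int_0^1 u\varphi\,dt+O(\varepsilon^3).
\]
Since $u<0$ where $\varphi<0$ and $u>0$ where $\varphi>0$, we have $\int_0^1 u\varphi>0$. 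The first-order gain therefore dominates the second-order cost, and $J^W[w_\varepsilon]<0$ for small $\varepsilon$. By Lemma~\ref{lem:bij}, the optimum is a strict interior solution. This step is the crux. The key point is to perturb around $p=\frac12$, where the payoff gain is first order, rather than around a boundary constant, where gain and cost are of the same order.

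\textbf{Step 2 (symmetry and sign of $w_*$).} If $p_*$ is optimal, then so is $\tilde p(t):=1-p_*(1-t)$. The cost is invariant under $t\mapsto 1-t$ and $p\mapsto 1-p$. For the payoff, $\int_0^1 u(t)\tilde p(t)\,dt=\int_0^1 u\,dt-\int_0^1 u(t)p_*(1-t)\,dt=\int_0^1 u(s)p_*(s)\,ds$, using $u(t)=-u(1-t)$ and $\int_0^1 u=0$. Since $p_*$ is interior and non-constant, the Uniqueness Corollary gives $\tilde p=p_*$, so $w_*(1-t)=-w_*(t)$ and $w_*(1/2)=0$.

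By Proposition~\ref{prop:monotone}, $w_*$ is increasing, so $w_*\le0$ on $[0,1/2]$. Suppose $w_*(t_0)=0$ for some $t_0<1/2$. Then $w_*\equiv0$ on $[t_0,1/2]$, so $w_*''=0$ a.e. there. But the Euler equation gives $w_*''=-\frac14 u\cos w_*=-\frac14 u>0$ on that interval, a contradiction. Hence $w_*<0$ on $[0,1/2)$, and by symmetry $w_*>0$ on $(1/2,1]$. Interiority also gives $\cos w_*>0$ throughout.

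\textbf{Step 3 (convexity and concavity).} From $p=\frac{1+\sin w}{2}$ and $w''=-\frac14 u\cos w$,
\[
p_*''=\frac12\Bigl(-\sin(w_*)\,(w_*')^2-\frac14\,u\cos^2(w_*)\Bigr).
\]
On $[0,1/2)$ we have $\sin w_*<0$, $u<0$ and $\cos w_*>0$. So the second term is strictly positive and the first is nonnegative, giving $p_*''>0$. On $(1/2,1]$ all signs flip and $p_*''<0$. At $t=0$, where $w_*'(0)=0$, the second term alone is strictly positive, and similarly at $t=1$.

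One regularity point remains. Theorem~\ref{thm:reg} ensures $p_*''$ exists a.e., and everywhere where $u$ is continuous. At the finitely many discontinuities of $u$, I would interpret the inequality through one-sided second derivatives, both of which have the stated sign by the same formula.
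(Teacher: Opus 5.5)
Your proposal is correct and follows essentially the same route as the paper: the same small-slope perturbation $w=\beta(t-\tfrac12)$ to beat the constant solutions, the same reflection-plus-uniqueness argument giving $w_*(t)=-w_*(1-t)$, and the same sign reading of $p_*''=\tfrac12\bigl(-\sin(w_*)(w_*')^2-\tfrac14 u\cos^2(w_*)\bigr)$. Your version is slightly tighter in two places. First, you check that $1-p_*(1-t)$ is itself a minimizer; the paper only notes that the reflection solves the Euler--Lagrange equation, and the uniqueness corollary does not cover that. Second, you derive $w_*<0$ on $[0,\tfrac12)$ from monotonicity and the Euler equation, where the paper simply asserts it.
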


\begin{proof} 
See Appendix \ref{pf:s-shape1}
\end{proof}

Our second set of conditions only requires the payoff function to be monotone.
The proof identifies subintervals where the second derivative $p''$ has fixed sign (positive on the left, negative on the right), and then shows $p''$ decreases on the intermediate interval by examining the third derivative (in the sense of distributions if not everywhere differentiable). Hence $p''$ changes sign exactly once, and $p'$ is unimodal.

\begin{prop}\label{prop:S-shape2}
    Suppose the prior is uniform over $\Theta$, the payoff $u$ is increasing, piecewise-continuous, and an interior optimal solution $p^*$ exists for the original decision problem (p), 
    then, the optimal solution $p^*$ exhibits sigmoidal shape, that is, there exists a $\tau \in (0,1)$ such that $p'(t)$ is increasing on $[0,\tau)$ and decreasing on $(\tau,1]$.
\end{prop}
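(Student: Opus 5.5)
The plan is to work in the angular variable $w_*=\arcsin(2p^*-1)$ and express $p^{*\prime\prime}$ through the Euler equation of Theorem~\ref{thm:nc2}. Since $p^*$ is interior (Theorem~\ref{thm:ext}), $w_*(t)\in(-\frac{\pi}{2},\frac{\pi}{2})$ and $\cos w_*>0$ on $[0,1]$. Differentiating $p=\frac{1+\sin w}{2}$ twice and substituting $w''=-\frac14 u\cos w$ gives, almost everywhere,
\[
2p^{*\prime\prime}(t) \;=\; -\sin(w_*(t))\,(w_*'(t))^2 \;-\; \tfrac14\, u(t)\cos^2(w_*(t)) \;=:\; g(t).
\]
Since $u$ is increasing, it is single-crossing in the weak sense of the remark after Proposition~\ref{prop:monotone}. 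It must genuinely change sign: otherwise the pendulum argument forces a constant (boundary) solution, contradicting interiority. So let $t_u\in(0,1)$ be the switch point, with $u\le 0$ before it and $u\ge 0$ after it.

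From $w_*'(t)=-\frac14\int_0^t u\cos w_*\,ds$ and $w_*'(1)=0$, I get $w_*'>0$ on $(0,1)$. Before $t_u$ the integrand is nonpositive, so the integral is negative; after $t_u$ the integral equals $-\int_t^1$ of a nonnegative quantity, hence $w_*'>0$ there too (strictly, because $u\not\equiv 0$ near the ends). Hence $w_*$ is strictly increasing. Interiority together with $w_*'(0)=w_*'(1)=0$ should also force $w_*(0)<0<w_*(1)$, since otherwise the boundary condition at the far end cannot be met. Let $t_w$ be the unique zero of $w_*$.

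Next I split $[0,1]$ into three regions.
\begin{itemize}
\item On $[0,\min(t_u,t_w))$ we have $\sin w_*<0$ and $u\le 0$, so both terms of $g$ are $\ge 0$. The first is strictly positive for $t>0$, and near $t=0$ the second is positive where $u<0$, so $p^{*\prime\prime}>0$.
\item Symmetrically, on $(\max(t_u,t_w),1]$ we get $p^{*\prime\prime}<0$.
\item On the intermediate interval $I$ between $t_u$ and $t_w$, $u$ and $\sin w_*$ have (weakly) opposite signs, and I will show that $g$ is strictly decreasing there.
\end{itemize}

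The main obstacle is making the monotonicity of $g$ on $I$ rigorous when $u$ is only piecewise continuous and increasing. I will treat $u$ as a function of bounded variation whose distributional derivative $Du$ is a nonnegative measure. Since $w_*\in C^1$ with $w_*'$ absolutely continuous, the product rule for BV functions gives, as measures,
\[
Dg \;=\; \Big(-\cos w_*\,(w_*')^3 \;+\; u\,\sin w_*\cos w_*\, w_*'\Big)\,dt \;-\; \tfrac14 \cos^2 w_*\, Du .
\]
Here the terms $-2\sin w_*\,w_*' w_*''$ and $\frac12 u \cos w_*\sin w_*\,w_*'$ combine into $u\sin w_*\cos w_*\,w_*'$ after substituting the Euler equation. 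On $I$, $u\sin w_*\le 0$ and $Du\ge 0$, so $Dg\le -\cos w_*(w_*')^3\,dt$, which is strictly negative on $I\subset(0,1)$. At jumps of $u$ this says that $g$ jumps downward, which is consistent.

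Hence $g$, and with it $p^{*\prime\prime}$, is positive on the left region, strictly decreasing across $I$, and negative on the right region, so it changes sign exactly once at some $\tau\in(0,1)$. Integrating, $p^{*\prime}$ is increasing on $[0,\tau)$ and decreasing on $(\tau,1]$. Two points need care at the end: the monotonicity of $p^{*\prime}$ must be derived from the sign of $g$ a.e. together with absolute continuity of $p^{*\prime}$ (Theorem~\ref{thm:reg}), and intervals where $u\equiv 0$ must be handled separately, where the strict term $-\cos w_*(w_*')^3$ still gives strict decrease.
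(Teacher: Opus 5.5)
The gap is your claim that interiority together with $w_*'(0)=w_*'(1)=0$ forces $w_*(0)<0<w_*(1)$. This claim is false. The boundary condition at the far end only requires $\int_0^1 u\cos w_*\,dt=0$, and that can hold with $w_*>0$ on all of $[0,1]$, i.e.\ when the agent leans toward $R$ in every state.

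For a concrete instance, take $u=-A$ on $[0,t_u)$ and $u=B$ on $(t_u,1]$. The first integrals are $\tfrac12(w')^2=\tfrac A4(\sin w-\sin w_0)$ on the left and $\tfrac12(w')^2=\tfrac B4(\sin w_1-\sin w)$ on the right. They let you prescribe any $0<w_0<w_u<w_1<\tfrac{\pi}{2}$. Continuity of $w'$ at $t_u$ fixes $A/B$. The two travel-time integrals are finite because $\cos w_0,\cos w_1>0$, and they fix $t_u\in(0,1)$ and the overall scale. By convexity of $\Lambda^P$ (Lemma~\ref{lem:uni}), any interior solution of the Euler equation with natural boundary conditions is the global optimum. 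This one never crosses zero, so your $t_w$ need not exist. Your argument therefore covers only the crossing case, which is the paper's Case I.

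The repair uses your own machinery, and it is what the paper does in its Cases II and III.

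Suppose $w_*\ge 0$ on $[0,1]$. Your left region is then empty, but the natural boundary condition still fixes the sign there. For $t$ near $0$, $(w_*')^2$ is small while $-\tfrac14 u\cos^2 w_*$ is bounded below by a positive constant, so $g>0$. On $(0,t_u)$ you have $u\sin w_*\le 0$, so your identity gives $Dg\le -\cos w_*\,(w_*')^3\,dt$, and $g$ is strictly decreasing there. On $(t_u,1]$ both terms of $g$ are negative. Hence $g$ changes sign exactly once, at some $\tau\in(0,t_u]$.

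The case $w_*\le 0$ is symmetric. There $g>0$ on $[0,t_u)$, and $g<0$ near $t=1$ because $w_*'(1)=0$ and $u>0$ there. On $(t_u,1]$ you have $u\sin w_*\le 0$, so $Dg\le 0$.

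Equivalently, allow $t_w\in\{0,1\}$ in your three-region scheme and read off the sign at the degenerate end from $w_*'=0$ there. With this addition, the rest of your proof agrees with the paper's argument. That includes the formula for $g$, the BV product rule producing the $-\tfrac14\cos^2 w_*\,Du$ term, and recovering monotonicity of $p^{*\prime}$ from its absolute continuity.
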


\begin{proof}
    See Appendix \ref{pf:s-shape2}
\end{proof}

\begin{example} \label{eg:2AFC}
In a standard two–alternative forced–choice perception task, a subject is presented with two stimuli (e.g., lights, tones, or weights) and asked which is stronger along the relevant dimension (brighter, higher, heavier). The difference in stimulus strength can be considered as the state. If differences are sampled uniformly, it might be safe to model the prior over the state as uniform. Suppose the subject receives a fixed payoff \(U>0\) for a correct response. Then the payoff can be written as
\[
u(t)=
\begin{cases}
-\,U, & t\in[0,\frac{1}{2}),\\[2pt]
\phantom{-}U, & t\in(\frac{1}{2},1],
\end{cases}
\]
This \(u\) is antisymmetric and single-crossing so the assumptions of Proposition~\ref{prop:S-shape1} hold; consequently, the optimal choice probability \(p^*(t)\) is sigmoidal.
\end{example}

\subsection{Predicting higher incentive increase precision}
As another application of the physical intuition, we establish a simple comparative static: scaling up payoffs (equivalently, scaling down information costs) increases the decision maker’s probability of choosing the correct action.

Although the conclusion is intuitive, the pendulum analogy also provides a helpful way to visualize why it must be true and how a proof can proceed. Under our symmetry assumption, the boundary-value problem can be viewed as selecting an initial position for the object so that it passes through the vertical midline at time $t=\frac{1}{2}$, i.e., $w(\frac{1}{2})=0$. When the force is scaled up, the object must start farther from the vertical position. 
Equivalently, it can be considered as starting from time $t=0.5$ at $w(\frac{1}{2})=0$ and choose an intermediate speed $w'(\frac{1}{2})$ so that the object comes to rest at $t=1$. Under the stronger pullback, to travel a larger angular distance naturally requires a larger intermediate speed.

\begin{prop} \label{prop:scale-u}
    Suppose the prior is uniform over $\Theta$, the payoff is antisymmetric and single-crossing, i.e. $u(t) = -u(1-t) \text{ on } t \in [0,1]$ and $u(t)<0$ on $[0,1/2)$.
    Let $0<\lambda_1<\lambda_2$ be two (cost) scaling parameters of Problem \eqref{pb:p}, and 
    $p_{*,1}, p_{*,2}$ be the corresponding maximizer. Then: 
    \[p_{*,1}(0) < p_{*,2}(0) <\frac{1}{2} <p_{*,2}(1) <p_{*,1}(1), \text{ and }  0<p'_{*,2}(\frac{1}{2}) < p'_{*,1}(\frac{1}{2}),\] 
    moreover, 
    \[p_{*,1}(t) < p_{*,2}(t) <\frac{1}{2} \text{ on } [0,\frac{1}{2}), \text{ and } \frac{1}{2} < p_{*,2}(t) <p_{*,1}(t)\text{ on }(\frac{1}{2},1]. \]
\end{prop}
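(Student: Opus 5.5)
We work throughout with the reparametrized Problem~\eqref{pb:w}. By the normalization in Appendix~\ref{pf:normalize}, the cost parameter $\lambda$ can be absorbed into the force: the Euler equation of Theorem~\ref{thm:nc2} becomes
\[
w''(t)=-\tfrac{\kappa}{4}\,u(t)\cos w(t),\qquad w'(0)=w'(1)=0,\qquad \kappa=1/\lambda .
\]
Write $w_i=\arcsin(2p_{*,i}-1)$ and $\kappa_i=1/\lambda_i$, so that $\kappa_1>\kappa_2$.

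\textbf{Step 1: interiority, uniqueness and symmetry.} By Proposition~\ref{prop:S-shape1}, each optimum is interior, so $|w_i|<\pi/2$. By the uniqueness corollary, each optimum is the unique solution. The map $w(t)\mapsto -w(1-t)$ preserves the objective of \eqref{pb:w}, because $u(1-t)=-u(t)$ and $\sin$ is odd. Uniqueness therefore forces $w_i(1-t)=-w_i(t)$, and in particular $w_i(1/2)=0$. By Proposition~\ref{prop:monotone}, $w_i$ is increasing. Hence $w_i>0$ on $(1/2,1]$ and $w_i<0$ on $[0,1/2)$. It thus suffices to prove, on $[1/2,1]$,
\[
w_1'(1/2)>w_2'(1/2)\quad\text{and}\quad w_1(t)>w_2(t)\ \text{for } t\in(1/2,1].
\]
Indeed, $p'(1/2)=\tfrac12 w'(1/2)$ gives the slope claim, and the claims on $[0,1/2)$ follow by antisymmetry. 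On $(1/2,1]$ we have $u>0$ and $\cos w_i>0$, so each $w_i$ is concave there.

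\textbf{Step 2: a comparison (continuation) lemma.} The key tool is the following forward comparison. Suppose that at some $s\in[1/2,1)$ we have $w_1(s)\le w_2(s)$ and $w_1'(s)\le w_2'(s)$. Then $w_1'<w_2'$ and $w_1<w_2$ on $(s,1]$.

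The proof is a continuation argument. As long as $w_1\le w_2$, both angles lie in $[0,\pi/2)$, so $\cos w_1\ge\cos w_2>0$. Combined with $\kappa_1>\kappa_2$ and $u>0$, this gives $w_1''=-\tfrac{\kappa_1}{4}u\cos w_1<-\tfrac{\kappa_2}{4}u\cos w_2=w_2''$ almost everywhere. Integrating, $w_1'-w_2'$ is strictly decreasing, so $w_1'<w_2'$ just after $s$. This in turn keeps $w_1<w_2$. The set where both inequalities hold is therefore open and closed in $(s,1]$, which proves the lemma. Because the equation holds only in the Carath\'eodory sense, all comparisons are done with the integral form $w'(t)=w'(s)+\int_s^t w''$. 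The single crossing point $t_u=1/2$ lies at the left endpoint, so it causes no difficulty.

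\textbf{Step 3: the slope at $1/2$.} Suppose $w_1'(1/2)\le w_2'(1/2)$. Since $w_1(1/2)=w_2(1/2)=0$, the lemma applied at $s=1/2$ yields $w_1'(1)<w_2'(1)$. This contradicts $w_1'(1)=w_2'(1)=0$. Hence $w_1'(1/2)>w_2'(1/2)$, that is, $p'_{*,1}(1/2)>p'_{*,2}(1/2)$. Positivity of $p'_{*,2}(1/2)$ follows from monotonicity together with interiority: if $w_2'(1/2)=0$, concavity on $[1/2,1]$ would give $w_2\le 0$ there.

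\textbf{Step 4: the order on $(1/2,1]$.} By Step 3, $d=w_1-w_2$ satisfies $d(1/2)=0$ and $d'(1/2)>0$, so $d>0$ just to the right of $1/2$. Suppose $d$ vanishes somewhere in $(1/2,1]$, and let $t_0$ be the first such point. Then $d'(t_0)\le 0$, so $w_1(t_0)=w_2(t_0)$ and $w_1'(t_0)\le w_2'(t_0)$. If $t_0<1$, the lemma at $s=t_0$ gives $w_1'(1)<w_2'(1)$, which is again a contradiction. If $t_0=1$, we can choose $s$ slightly before $1$ with $d'(s)<0$ and use strict decrease of $d'$ near $1$ to derive the same contradiction. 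Hence $w_1>w_2>0$ on $(1/2,1]$.

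Translating back through $p=(1+\sin w)/2$ gives $\tfrac12<p_{*,2}<p_{*,1}$ on $(1/2,1]$, including the endpoint inequality at $t=1$. The statements on $[0,1/2)$ and at $t=0$ then follow from antisymmetry.

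The main obstacle is the rigorous continuation lemma in Step~2 in the Carath\'eodory setting, together with keeping the angles strictly inside $(-\pi/2,\pi/2)$ so that the cosine comparison is valid. Interiority from Proposition~\ref{prop:S-shape1} handles the latter.
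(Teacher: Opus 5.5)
Your proof is correct, but it takes a different route from the paper.

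\textbf{The paper's route.} It sets up a shooting problem from $t=\tfrac12$ with $w(\tfrac12)=0$, $w'(\tfrac12)=m$, and studies the terminal speed $V(m,\rho)=w'(1;m,\rho)$. It invokes differentiable dependence of Carath\'eodory solutions on data. It then writes variational equations for $\partial_m w$ and $\partial_\rho w$ to show that $V$ is strictly increasing in $m$ and decreasing in $\rho$, which yields the slope ordering at $\tfrac12$. For the level ordering, it differentiates the family of boundary-value solutions in $\rho$ and runs a maximum-principle argument to get $\partial_\rho w\ge 0$ on $[\tfrac12,1]$.

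\textbf{Your route.} You compare the two optima directly. Step 2 is a Sturm-type comparison lemma: a stronger decelerating force, with no head start in position or speed, keeps $w_1$ strictly slower and lower afterwards. You use it once at $s=\tfrac12$ for the slope, and again, through a first-crossing argument, for the ordering.

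\textbf{What each buys.} Your approach is more elementary and better suited to the Carath\'eodory setting:
\begin{itemize}
\item it uses only the integral Euler equation and continuity of $w_i'$;
\item it needs no parameter-differentiability theorem and no implicit-function step for the boundary-value family;
\item it avoids evaluating $\nu''$ pointwise at a minimum, where it need not exist;
\item it uses the sign information $0\le w_i<\tfrac{\pi}{2}$ on $[\tfrac12,1]$ only for the two actual optima, whereas the paper's sign arguments invoke it for a general shot $w(\cdot;m,\rho)$.
\end{itemize}
The paper's approach buys more structure in return: global monotonicity of the shooting map, which is relevant to the numerical shooting procedure, and an infinitesimal comparative static $\partial_\rho w\ge 0$.

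\textbf{Two details to tighten.}
\begin{itemize}
\item In Step 2, when $d(s)=d'(s)=0$, start the continuation from the open condition $\kappa_1\cos w_1>\kappa_2\cos w_2$. It holds at $s$ and hence on a right neighborhood of $s$. Do not start from $w_1\le w_2$, which is the very property you are propagating.
\item In Step 1, the strict sign $w_i>0$ on $(\tfrac12,1]$ needs strict monotonicity, not just the weak monotonicity you cite. It does follow from the Euler equation, since $u\cos w_i\neq 0$ a.e. together with $w_i'(0)=w_i'(1)=0$ forces $w_i'>0$ on $(0,1)$.
\end{itemize}
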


\begin{proof}
    See Appendix \ref{pf:prop-scale-u}
\end{proof}

\subsubsection{Mapping predictions into psychometric parameters}
The implication also connects with how psychophysics typically summarizes perception data.
Choice frequency data from perception experiments are often fit parametrically by a strictly increasing sigmoid function with lapse rate:
\[p(t) = \gamma + (1- \gamma - \kappa) F(t;\alpha,\beta),\] 
where the \textit{upper lapse rate} $(\kappa)$ captures the probability of an incorrect response even at high stimulus levels, and the \textit{lower lapse rate} $(\gamma)$ captures stimulus-independent errors in the opposite tail (and is often called the guess rate when the stimulus is nonnegative, i.e. reporting `Yes' in the absence of a stimulus). The \text{shift parameter} $(\alpha)$ captures the threshold stimulus level where the choice probability $(p)$ reaching some specific level, e.g. $50\%$ and the \text{scale parameter} $(\beta)$ measures the sensitivity of the choice to the stimulus.
A standard specification is
\[p(t) = \gamma + (1- \gamma - \kappa) F\left(\frac{t-\alpha}{\beta}\right),\] 
where $(F)$ is the CDF of a Gaussian or logistic distribution. The result shows that scaling up incentives decreases the lapse rate and increases sensitivity.

To illustrate how this prediction would translate into the parameters commonly reported in psychophysics, we conduct a simulation exercise in which the predicted optimal response curve is treated as the data-generating process. 
We consider the two-alternative forced-choice environment in Example \ref{eg:2AFC}, with a uniform prior and a symmetric step payoff
\[
u(t)=
\begin{cases}
-\,U, & t\in[0,\frac{1}{2}),\\[2pt]
\phantom{-}U, & t\in(\frac{1}{2},1],
\end{cases}
\]
where $(U>0)$ indexes the incentive for a correct response. For each value of $U \in \{20,40,80,160\}$, we solve for the optimal response rule $(p^{*}_{U}(t))$.
We then generate binomial responses-count data using ($p^*_U(t)$) as the success probability with $N=100$ trials at a grid of 19 stimulus levels. On the grid of stimulus levels, we fit by maximum likelihood the logistic specification 
\[
\widehat p(t)=\gamma+(1-\gamma-\kappa)
\frac{1}{1+\exp\left(-(t-\alpha)/\beta\right)}.
\]
Here ($\alpha$) is the threshold parameter, ($\beta$) is the scale parameter, so that lower ($\beta$) corresponds to higher sensitivity, and ($\gamma$) and ($\kappa$) are the lower and upper lapse rates. The fitted parameters and Root Mean Squared Error (RMSE) with $90\%$ bootstrap confidence interval are reported in Table \ref{tab:logit-fit}.

\begin{table}[H]
    \centering
    \caption{Bootstrap Estimates of Psychometric Parameters under Scaled Incentives}
    \label{tab:logit-fit}
    \begin{tabular}{lcccc}
        \toprule
        & \multicolumn{4}{c}{Incentive Level ($U$)} \\
        \cmidrule(lr){2-5}
        Parameter & $U=20$ & $U=40$ & $U=80$ & $U=160$ \\
        \midrule
        
        Sensitivity ($1/\beta$) 
        & 7.807 & 9.050 & 11.598 & 15.842 \\
        & [4.323, 11.779] & [6.957, 11.542] & [9.802, 13.688] & [13.683, 18.305] \\
        \addlinespace
        
        Threshold ($\alpha$)    
        & 0.498 & 0.500 & 0.500 & 0.500 \\
        & [0.436, 0.558] & [0.473, 0.528] & [0.484, 0.517] & [0.489, 0.511] \\
        \addlinespace
        
        Lower Lapse ($\gamma$)  
        & 0.208 & 0.097 & 0.031 & 0.009 \\
        & [0.105, 0.275] & [0.048, 0.139] & [0.005, 0.054] & [0.000, 0.021] \\
        \addlinespace
        
        Upper Lapse ($\kappa$)  
        & 0.209 & 0.096 & 0.031 & 0.009 \\
        & [0.107, 0.275] & [0.044, 0.140] & [0.008, 0.055] & [0.000, 0.022] \\
        \midrule
        
        RMSE                    
        & 0.040 & 0.035 & 0.029 & 0.022 \\
        & [0.029, 0.054] & [0.025, 0.047] & [0.020, 0.040] & [0.014, 0.032] \\
        \bottomrule
        \multicolumn{5}{p{14cm}}{\footnotesize \textit{Notes:} Point estimates represent the mean across 1,000 bootstrap iterations. 90\% confidence intervals [5th, 95th percentiles] are reported in brackets below each estimate. Simulated data consists of 100 binary choices generated at 19 uniformly spaced interior stimulus levels ($t \in [0.05, 0.95]$) for each condition. The true response probabilities were derived numerically from the Fisher information boundary-value problem.}
    \end{tabular}
\end{table}

 The finite-sample exercise shows how the model’s structural comparative statics would be recovered from the type of choice-frequency data observed in a laboratory psychophysics experiment. As incentives increase, the fitted sensitivity measure ($1/\beta$) increases and both lapse-rate estimates decrease while the threshold remains approximately fixed in the symmetric environment.
 Figure~\ref{fig:scale-u} illustrates these comparative statics and shows that the fitted logistic curves closely approximate the structural Fisher curves.

\begin{figure}[H]
    \centering
    \includegraphics[width=\linewidth]{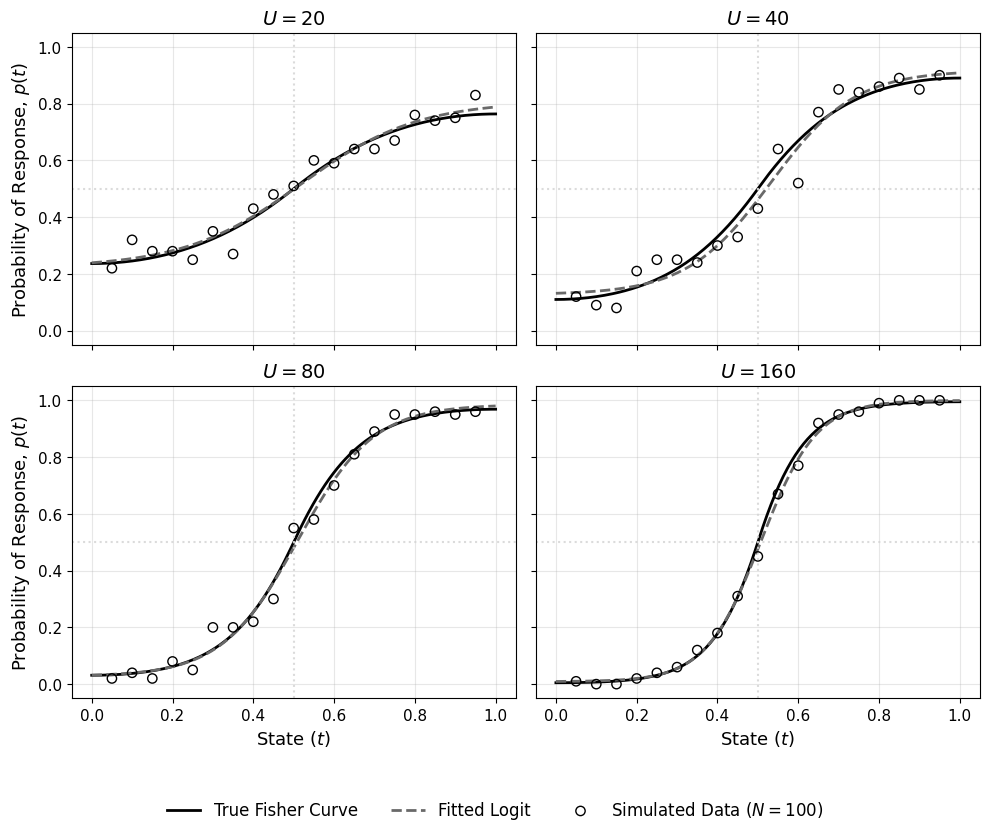}
    \caption{Optimal Response Curves and Simulated Experimental Fits}
    \label{fig:scale-u}
\end{figure}

\bibliographystyle{plainnat}  %
\bibliography{References} 

\newpage
\appendix

\section{Proof of technical lemmas and theorems}
\subsection{Proof of Lemma \ref{lem:bij} (Bijection)} \label{pf:lem-bij}
\begin{proof}
If $\bm{ w\in AC(\Theta,[-\frac{\pi}{2},\frac{\pi}{2}])}$ and $\bm{\int_{\Theta} (w'(t))^2 dt<\infty}$:

Because $\sin(\cdot)$ is a Lipschitz function and the composition of a Lipschitz function with an absolutely continuous function is again absolutely continuous. We know $p(t)=\frac{1+\sin(w(t))}{2}$ is absolutely continuous on $\Theta$. Meanwhile, $p(t)\cdot  (1-p(t))=\frac{1-\sin(w(t))^2}{4}=\frac{\cos(w(t))^2}{4}$ and $p'(t)=\cos(w(t)) \cdot w'(t)$ is defined almost everywhere on $\Theta$. The following equality always hold: $\frac{p'(t)^2}{p(t)(1-p(t))}=w'(t)^2$.
Hence, $\int_{\Theta} \frac{p'(t)^2}{p(t)(1-p(t))} dt =\int_{\Theta} (w'(t))^2 dt<\infty$ and $p \in \mathcal{D}^{P}$.

If $\bm{p \in AC(\Theta,[0,1])}$ and $\bm{\int_{\Theta} \frac{p'(t)^2}{p(t)(1-p(t))} dt<\infty}$: 

Case I: if $p$ is uniformly strictly bounded away from $0$ and $1$, that is, there exists some $0<\epsilon<\frac{1}{2}$ such that $p(t) \in [-1+2 \epsilon, 1-2 \epsilon]$ for all $t \in [0,1]$. Because $\arcsin(\cdot)$ is Lipschitz on $[-1+2 \epsilon, 1-2 \epsilon]$ and the composition of a Lipschitz function with an absolutely continuous function is also absolutely continuous. We know $w$ is absolutely continuous. Meanwhile, $w'(t)=\frac{2p'(t)}{\sqrt{1-(2p(t)-1)^2}}=\frac{p'(t)}{\sqrt{p(t)(1-p(t))}}$ is defined almost everywhere on $\Theta$. 
Hence, $\int_{\Theta} (w'(t))^2 dt=\int_{\Theta} \frac{p'(t)^2}{p(t)(1-p(t))} dt <\infty$ and $w \in \mathcal{D}^{W}$.

Case II: if $p$ takes the value $0$ or $1$. The derivative of $\arcsin(\cdot)$ is not defined at $0$ or $1$, we will truncate and use the property of Sobolev space. The \textbf{Sobolev space} $W^{1,2}(\Theta)$ is defined to be the space of functions that have finite $L^2$ norm and whose \textbf{weak derivative} has finite $L^2$ norm as well:
\[
W^{1,2}(\Theta) = \{f: f \in L^p(\Theta), \text{ and } \exists g \in L^p(\Theta) s.t. \int_{\Theta} f \phi = -\int_{\Theta} g \phi', \forall  \phi \in C_c^1(\Theta) \}
\]
and it is equipped with the norm $\norm{f}_{W^{1,2}} = (\norm{f}_{L^2}^2 +\norm{f'}_{L^2}^2)^{\frac{1}{2}}$.
Equipped this metric, it is a complete space \citep[Proposition 8.1]{B2011}. Moreover, on a bounded interval $\Theta$, every $f \in W^{1,2}(\Theta)$ has a absolutely continuous representation \citep[Theorem 8.2]{B2011}.\footnote{That is, there exists a absolutely continuous function $\tilde{f}$ such that $f=\tilde{f}$ almost everywhere.} 

We truncate $p$ to be $p_{\epsilon} = \max\{ \min\{p,1-\epsilon\} ,\epsilon\}$ and $w_\epsilon = \arcsin(2p_{\epsilon}-1)$. According to Case I, $w_{\epsilon}$ is absolutely continuous. Moreover, $w_{\epsilon}$ is a constant when $p<\epsilon$ or $p>1-\epsilon$, so $w'_{\epsilon}(t) = \frac{p'(t)}{\sqrt{p(t)(1-p(t))}} \mathbf{1}_{\{\epsilon<p(t)<1-\epsilon \}}$ a.e.. Hence, $\int_{\Theta} (w'_{\epsilon}(t))^2 dt=\int_{\Theta} \frac{p'(t)^2}{p(t)(1-p(t))} \mathbf{1}_{\{\epsilon<p(t)<1-\epsilon \}}dt \leq \int_{\Theta} \frac{p'(t)^2}{p(t)(1-p(t))} dt <\infty$.

Now we consider a sequence $\epsilon_n \to 0$, we will show that $w_{\epsilon}$ is a Cauchy sequence in $W^{1,2}(\Theta)$ and the limit of the sequence is $w=\arcsin(2p-1)$. Firstly, we notice that $p_{\epsilon}$ converges to the un-truncated $p$ point-wise, and together with continuity of $\arcsin(\cdot)$, $w_{\epsilon}$ also converges point-wise to $w$. So $\norm{w - w_{\epsilon_n}}_{L^2} \to 0$ and $\norm{w_{\epsilon_m} - w_{\epsilon_n}}_{L^2} \leq \norm{w - w_{\epsilon_m}}_{L^2} + \norm{w - w_{\epsilon_n}}_{L^2} \to 0$; Secondly, $\norm{w'_{\epsilon_m} - w'_{\epsilon_n}}_{L^2} = 
\int_{\Theta} \frac{p'(t)^2}{p(t)(1-p(t))} |\mathbf{1}_{\{\epsilon_{m}<p(t)<1-\epsilon_{m} \}} -  \mathbf{1}_{\{\epsilon_n<p(t)<1-\epsilon_n \}}| \to 0$ because $\int_{\Theta} \frac{p'(t)^2}{p(t)(1-p(t))}<\infty$ and dominated convergence theorem. Lastly, each $w_{\epsilon_n}$ is in $W^{1,2}(\Theta)$ and because $W^{1,2}(\Theta)$ is complete, the limit $w$ is also in $W^{1,2}(\Theta)$. Hence, it is without loss of generality to claim $w$ is absolutely continuous. Moreover, $\norm{w' - w_{\epsilon_n}}_{L^2} \to 0$ and $\norm{w_{\epsilon_n}}_{L^2}<\infty$ implies $\norm{w'}_{L^2} = \int_{\Theta} (w'(t))^2 dt <\infty$, $w \in \mathcal{D}^{W}$.
\end{proof}

\subsection{Proof of Lemma \ref{lem:uni} (Convexity)} \label{pf:lem-uni}
\begin{proof}
    Since for each $t$, the second term $u(t)p(t)$ is linear in $p$, we only need to prove the convexity of the following function: 
    \[ \phi(x,y) = \begin{cases}
    \frac{y^2}{x (1-x)} \;  & \text{if } x \in (0,1) \\
    0  \; & \text{if } y=0 \text{ and } x =0 \text{ or } 1 \\
    +\infty  \; & \text{otherwise }.
    \end{cases}\] 
    Notice that its effective domain $\text{dom}\phi:=\{(x,y): \phi(x,y) < + \infty\} = (0,1) \times \mathbb{R} \cup \{0,1\} \times \{0\}$ is a non-empty convex set. We only need to prove the convexity of $\phi$ on its effective domain.

    Consider $(x_1,y_1), (x_2,y_2) \in (0,1) \times \mathbb{R}$, $\lambda \in (0,1)$, and notice that $\phi(x,y) = \frac{y^2}{x}+\frac{y^2}{1-x}$, it suffices to prove that $\phi_1(x,y)=\frac{y^2}{x}$ and $\phi_2(x,y)=\frac{y^2}{1-x}$ are both convex.

    We have: \begin{align*}
        &\phi_1(\lambda x_1 + (1-\lambda)x_2, \lambda y_1 +(1-\lambda)y_2)\\
        =& (\lambda x_1 + (1-\lambda)x_2) (\frac{\lambda y_1 +(1-\lambda)y_2}{\lambda x_1 + (1-\lambda)x_2})^2\\
        =&(\lambda x_1 + (1-\lambda)x_2)[\frac{\lambda x_1 \frac{y_1}{x_1} }{\lambda x_1 + (1-\lambda)x_2} + \frac{(1-\lambda) x_2 \frac{y_2}{x_2} }{\lambda x_1 + (1-\lambda)x_2}]^2\\
        \leq & (\lambda x_1 + (1-\lambda)x_2)[\frac{\lambda x_1 }{\lambda x_1 + (1-\lambda)x_2}(\frac{y_1}{x_1})^2 + \frac{(1-\lambda) x_2 }{\lambda x_1 + (1-\lambda)x_2}(\frac{y_2}{x_2})^2] \\
        =&\lambda \frac{y_1^2}{x_1} +(1-\lambda) \frac{y_2^2}{x_2}
    \end{align*}
    where the inequality is due to Jensen's inequality. Similarily:
    \begin{align*}
        &\phi_2(\lambda x_1 + (1-\lambda)x_2, \lambda y_1 +(1-\lambda)y_2)\\
        =& (\lambda (1-x_1) + (1-\lambda)(1-x_2)) (\frac{\lambda y_1 +(1-\lambda)y_2}{\lambda (1-x_1) + (1-\lambda)(1-x_2)})^2\\
        =&(\lambda (1-x_1) + (1-\lambda)(1-x_2))[\frac{\lambda (1-x_1) \frac{y_1}{1-x_1} }{\lambda (1-x_1) + (1-\lambda)(1-x_2)} + \frac{(1-\lambda) (1-x_2) \frac{y_2}{1-x_2} }{\lambda (1-x_1) + (1-\lambda)(1-x_2)}]^2\\
        \leq & (\lambda (1-x_1) + (1-\lambda)(1-x_2))[\frac{\lambda (1-x_1) }{\lambda (1-x_1) + (1-\lambda)(1-x_2)}(\frac{y_1}{1-x_1})^2 \\ & \quad + \frac{(1-\lambda) (1-x_2) }{\lambda (1-x_1) + (1-\lambda)(1-x_2)}(\frac{y_2}{1-x_2})^2] \\
        =&\lambda \frac{y_1^2}{1-x_1} +(1-\lambda) \frac{y_2^2}{1-x_2}
    \end{align*}

    Now, we consider the case where $(x_2,y_2)=(0,0)$ and $(x_1,y_1) \in (0,1) \times \mathbb{R}$:
    \begin{align*}
    &\phi(\lambda x_1 + (1-\lambda) x_2 , \lambda y_1 + (1-\lambda) y_2 ) = \phi(\lambda x_1 , \lambda y_1 ) \\
    &= \frac{\lambda y_1^2}{x_1 (1- \lambda x_1)} \leq \frac{\lambda y_1^2}{x_1 (1-  x_1)}
    =\lambda \phi(x_1,y_1) +(1-\lambda) \phi(x_2,y_2) ,
    \end{align*}
    and the case where $(x_2,y_2)=(1,0)$ and $(x_1,y_1) \in (0,1) \times \mathbb{R}$:
    \begin{align*}
    &\phi(\lambda x_1 + (1-\lambda) x_2 , \lambda y_1 + (1-\lambda) y_2 ) = \phi(1 - \lambda (1-x_1) , \lambda y_1 ) \\
    &= \frac{\lambda y_1^2}{(x_1 + (1-\lambda)) (1-  x_1)} \leq \frac{\lambda y_1^2}{x_1 (1-  x_1)}
    =\lambda \phi(x_1,y_1) +(1-\lambda) \phi(x_2,y_2),
    \end{align*}
    convexity also hold for the case where $(x_1,y_1), (x_2,y_2) \in \{0,1\} \times \{0\}$ because $\phi(x,y) =0 $ if $y=0$.
\end{proof}

\subsection{Proof of Theorem \ref{thm:nc2} (Necessity)} \label{pf:thm-nc2}
\begin{proof}
 An admissible function $w_{*}$ is said to be a \textit{strong local minimizer} if it is a minimizer within the neighborhood defined as:
\[J^W[w_{*}] \leq J^{W}[w] \; \text{ for } \forall \, w \in  \{ w \in AC([0,1], \mathbb{R}): \norm{w-w_{*}}_{\infty} \leq \epsilon
\} .\] 
Certainly, a global minimizer for the unconstrained problem is also a local strong minimizer and satisfies our necessary condition below, which is from Theorem 18.13 of \cite{C2013}, and we state it here for completeness. 

In the following, $\partial_P \Lambda$ refers to \textbf{proximal subdiferential} of $\Lambda(t,x,v)$ with respect to only $(x,v)$,\footnote{Let $X$ be a normed space, and $f: X \to \mathbb{R} \bigcup \{+\infty\}$, $x \in \text{dom}f$. We say that $\xi \in X^*$ is a \emph{proximal subgradient} of $f$ at $x$ if for some $\sigma \geq 0$ and for some neighborhood $V$ of $x$, we have \[f(y) - f(x) + \sigma(y) \norm{y-x}^2 \geq \langle \xi, y-x \rangle \quad \forall y \in V.\] The \emph{proximal subdifferential} of $f$ at $x$ is the set of proximal subgradient at $x$.} and $\partial_L \Lambda$ refers to \textbf{limiting subdifferential} of $\Lambda(t,x,v)$ with respect to only $(x,v)$.\footnote{The \emph{limiting subdifferentiial} is defined as the set of limits of proximal subgradient in the following sense:
\[\partial_L f(x) =\{ \xi = \lim \xi_i: \xi_i = \partial_P f(x_i), x_i \to x, f(x_i) \to f(x) \}.\]} 
    \begin{theorem}[{\citealt[Theorem~18.13]{C2013}}]
      Consider the minimization problem:
      \[J(x) = l(x(a),x(b)) + \int_{a}^{b} \Lambda(t,x,x'(t)) dt\] over $x: [a,b] \to \mathbb{R}^n$, subject to boundary condition $(x(a),x(b)) \in E$, where $E \subset \mathbb{R}^n \times \mathbb{R}^n$ is closed, the Lagrangian $\Lambda$ is assumed to be finite-valued, \textbf{LB} measurable in $t$ and $(x,v)$\footnote{In the sense that, it is measurable with respect to the $\sigma$-algebra generated by the products of Lebesgue measurable subsets of $[a,b]$ and Borel measurable subsets $\mathbb{R}^2$.} and $l$ is locally Lipschitz. 
      
      If one further suppose that the Lagrangian $\Lambda$ satisfies the following condition: for every bounded subset $S \subseteq \mathbb{R}^n$, there exists a constant $c$ and a $L_1$ integrable function $d$, such that, for almost all $t$, for almost all $(x,v) \in S \times \mathbb{R}^n$, one has:
      \[
      \frac{|\zeta|}{1+|\psi|} \leq c( |v| + |\Lambda(t,x,v)| ) + d(t), \; \;\forall (\zeta,\psi) \in \partial_P \Lambda(t,x,v) 
      \] 
     Let $x_{*}$ be a strong local minimizer of the above problem. There exists an absolutely continuous costate function $q$ satisfying the \textbf{Euler inclusion}:
      \[q'(t) \in \text{co} \{ r: (r,q(t)) \in \partial_L \Lambda(t,x_{*}(t),x_{*}'(t)) \}  \; t \in [a,b]\;\; a.e., \]
      together with the \textbf{Weierstrass condition}: for almost every $t$,
      \[\Lambda(t,x_{*}(t), v) - \Lambda(t,x_{*}(t),x'_{*}(t)) \geq \langle q(t), v-x'_{*}(t) \rangle, \; \forall v \in \mathbb{R}^n\]
      and the \textbf{transversality condition}:
      \[
      (q(a),-q(b)) \in \partial_L l( x_{*}(a), x_{*}(b)) + N_{E}^{L}(x_{*}(a),x_*(b)).
      \]
    \end{theorem}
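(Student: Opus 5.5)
Since this is the general necessary-condition theorem of \citet[Theorem~18.13]{C2013}, I would follow the standard nonsmooth-analysis route: reduce to a penalized free-endpoint problem, solve smooth or regular approximations exactly, and pass to the limit using the growth hypothesis on $\partial_P\Lambda$. The costate $q$ will arise as a limit of costates $q_i$ of the approximating problems. The Euler inclusion, the Weierstrass condition and the transversality condition will each be inherited through the closure properties of the limiting subdifferential $\partial_L$ and the limiting normal cone $N_E^L$.

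\textbf{Step 1 (localization and penalization).} Fix $\epsilon$ so that $x_*$ is optimal in the $\epsilon$-tube $\norm{x-x_*}_\infty\le\epsilon$. Replace the endpoint constraint by the exact penalty $K\,d_E(x(a),x(b))$; this is legitimate because $l$ is locally Lipschitz and $E$ is closed, which gives a calmness argument for large $K$. I would then add a localizing term $\int_a^b |x-x_*|^2\,dt$, making $x_*$ the unique minimizer in the tube. Next I would truncate the velocity by replacing $\Lambda$ with $\Lambda_i$, which equals $\Lambda$ for $|v|\le i$ and grows linearly beyond. Ekeland's variational principle in $W^{1,1}$ then yields near-minimizers $x_i\to x_*$ of perturbed problems with Lipschitz-in-$v$ data, at the price of an additional term $\delta_i\int|x'-x_i'|$.

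\textbf{Step 2 (conditions for the approximations).} For each approximate problem I would derive a proximal Euler inclusion $(q_i',q_i)\in\partial_P\Lambda_i(t,x_i,x_i')+$(small error), together with a pointwise minimization of $v\mapsto \Lambda_i(t,x_i,v)-\langle q_i,v\rangle$ up to small error. The latter follows from needle (measurable-selection) variations. The endpoint relation $(q_i(a),-q_i(b))\in\partial_P[l+K d_E](\cdot)$ follows by free-endpoint variations. Here I would use that proximal subgradients of a sum are approximated by sums of proximal subgradients at nearby points (the fuzzy sum rule).

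\textbf{Step 3 (passing to the limit; main obstacle).} The hypothesis
\[
\frac{|\zeta|}{1+|\psi|}\le c\bigl(|v|+|\Lambda|\bigr)+d(t)
\]
is exactly what bounds $|q_i'|$ by $(1+|q_i|)$ times an integrable function. The $q_i(a)$ are bounded by transversality, so Gronwall gives $\sup_i\norm{q_i}_\infty<\infty$. The $q_i'$ are then uniformly integrable, and Dunford--Pettis together with Arzel\`a--Ascoli gives $q_i\to q$ uniformly and $q_i'\rightharpoonup q'$ weakly in $L^1$. Mazur's lemma and the graph-closedness of $\partial_L$ at points where $\Lambda(t,x_i,x_i')\to\Lambda(t,x_*,x_*')$ yield $q'(t)\in\text{co}\{r:(r,q(t))\in\partial_L\Lambda\}$. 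This convexification is the source of the ``co''. The delicate point, which I expect to be the main obstacle, is guaranteeing this convergence of the Lagrangian values a.e.; I would secure it using strong $W^{1,1}$ convergence $x_i'\to x_*'$ along a subsequence, which Ekeland's principle together with the uniqueness from the localization term provides. The Weierstrass condition passes to the limit pointwise in $v$, using $\Lambda_i\uparrow\Lambda$ and lower semicontinuity. Transversality follows from the closure of limiting subdifferentials and the inclusion $\partial_L d_E\subset N_E^L$, which produces the term $N^L_E(x_*(a),x_*(b))$ after dropping $K$.
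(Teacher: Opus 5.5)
The paper gives no proof of this statement. It quotes Clarke (2013, Thm.~18.13) ``for completeness'' and applies it only to $\Lambda^W$, which is $C^1$ in $(w,w')$, quadratic in $w'$, and has bounded $w$-derivative; for the paper's purposes a classical first variation plus du Bois-Reymond would already give (NC). Judged on its own, your sketch has a genuine gap at its core: Step~2 is essentially the whole theorem, and the tools you name do not produce it. Minimality of $x_i$ plus a fuzzy sum rule yields at best a subgradient of the \emph{integral functional} $x\mapsto\int_a^b\Lambda_i(t,x,x')\,dt$ at a nearby arc, which is an element of the dual of the arc space. Moreover, $W^{1,1}$, where your Ekeland step lives, is not even Asplund, so fuzzy sum rules are not available there in general. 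There is no general rule turning that subgradient into the pointwise inclusion $(q_i'(t),q_i(t))\in\partial_P\Lambda_i(t,x_i(t),x_i'(t))$, because $x(t)$ and $x'(t)$ cannot be perturbed independently. Separating them is exactly what a decoupling argument (Clarke) or the Ioffe--Rockafellar machinery is for. The classical integral-functional route gives only the fully convexified inclusion $(q',q)\in\partial_C\Lambda$, and only for $\Lambda$ Lipschitz in $(x,v)$; that is strictly weaker than the partially convexified inclusion in the statement. Needle variations do not give the Weierstrass inequality independently either. A needle displaces the arc on the whole remaining interval, and its first-order effect on $\int\Lambda$ can only be evaluated with some control of $\Lambda$ in $x$ and a costate that already satisfies the Euler inclusion. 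So both conditions must come out of one argument with the same $q$.

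Your Step~3 (Gronwall from the growth bound, Dunford--Pettis, Mazur, closedness of $\partial_L$) has the right architecture, but Step~1 does not supply its inputs. Truncating for $|v|>i$ does not make $\Lambda$ Lipschitz in $v$: on $|v|\le i$ it is still merely finite and LB measurable, and the hypotheses allow jumps in $v$ (e.g.\ $\Lambda=\mathbf{1}_{\{v\neq0\}}$ satisfies them). Since $\Lambda_i\uparrow\Lambda$ means $\Lambda_i\le\Lambda$, fast arcs get cheaper, and $x_*$ is no longer known to be an $\varepsilon_i$-minimizer of the truncated problem with $\varepsilon_i\to0$. Showing $\inf J_i\to J(x_*)$ is a relaxation/Lavrentiev-type question that needs lower semicontinuity and compactness a nonconvex, non-coercive $\Lambda$ does not provide, and uniqueness from the localizing term is no substitute. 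Without it Ekeland gives no $W^{1,1}$ convergence $x_i\to x_*$. You then lose both the uniform integrability of $q_i'$ and $\Lambda_i(t,x_i,x_i')\to\Lambda(t,x_*,x_*')$; the latter would not follow from a.e.\ convergence of $(x_i,x_i')$ anyway when $\Lambda$ is discontinuous. A standard way around this goes the other way: impose $|x'-x_*'(t)|\le R$ as a hard constraint so that $x_*$ stays optimal. Then derive the conditions for this velocity-constrained problem (Weierstrass only on the ball), and use the generalized Tonelli--Morrey bound to get costate estimates uniform in $R$ before letting $R\to\infty$. Two smaller steps also fail as stated. First, exact penalization by $K d_E$ needs $\int\Lambda$ to change by $O(\delta)$ under the corrections that move the endpoints onto $E$; this holds for $\Lambda$ Lipschitz in $(x,v)$ but does not follow from $l$ being locally Lipschitz and $E$ closed. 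Second, passing $\Lambda_i(t,x_i,v)\ge\Lambda_i(t,x_i,x_i')+\langle q_i,v-x_i'\rangle-e_i$ to the limit needs $\limsup_i\Lambda_i(t,x_i(t),v)\le\Lambda(t,x_*(t),v)$, an \emph{upper} semicontinuity in $x$; lower semicontinuity points the wrong way.
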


    First, we verify that the hypothesis of the theorem is satisfied. In our context, there is no endpoint constraint for the value of $w$, so we can set $E$ to be $\mathbb{R} \times \mathbb{R}$ and $l(w(a),w(b)) = 0$ for any $(w(a),w(b))$. Besides, our Lagrangian $\Lambda^W(t,w,w')=(w')^2 - u(t) \frac{1+\sin(w)}{2}$ is finite on $[a,b]$ for each $w \in AC([0,1],[-\frac{\pi}{2},\frac{\pi}{2}])$. It is a \textbf{Carath\'{e}odory function}, as $\Lambda^W$ is Lebesgue measurable in $t$ for any absolutely continuous $w$ and it is continuous in $(w(t),w'(t))$ for each fixed $t$. Hence, we know it is \textbf{LB} measurable by \citealt[Proposition 6.35]{C2013}. \quad Furthermore, for each fixed $t$, we notice that $\Lambda^{W}(t, w, w')$ is smooth in $(w, w')$, so whenever it is non-empty, its \textbf{proximal subdifferential} $\partial_P$ coincides with the its gradient \citep[Corollary 7.32]{C2013}, given by 
    \[( \Lambda^{W}_w(t, w, w'), \; \Lambda^{W}_{w'}(t, w, w'))
    =(\frac{1}{2} u(t) \cos(w),\; 2 w' ),
    \] We can choose $c=0$, $d(t)=|u(t)|$, which is $L_1$ integrable and satisfy the inequality: 
    \begin{align*}
            \frac{|\zeta|}{1+|\psi|} = \frac{\frac{1}{2}|u(t) \cos(w)|}{1 + |2 w'|} \leq |u(t)| = c( |w'| + |\Lambda^W(t,w,w')| ) + d(t)
    \end{align*}
    for $\forall (\zeta,\psi) \in \partial_P \Lambda(t,x,v)$ and all $t \in [a,b]$ and all $(w,w') \in \mathbb{R}^2$. 

    Next, we interpret each of the necessary conditions. 
    
    Since $\Lambda^W(t,w,w')$ is continuously differentiable in $(w,w')$, the \textbf{limiting subdifferential} $\partial_L \Lambda^W$ reduces to its gradient, which is a singleton \citep[Proporsition 11.12]{C2013}. So our \textbf{Euler inclusion} condition actually implies the \textbf{Euler equation}:
    \begin{align*}
        & q'(t) = \Lambda^W_w(t,w,w') =-\frac{1}{2}u(t)\cos(w(t)),  \quad q(t)= \Lambda^W_{w'}(t,w,w') = 2 w'(t) \\
        \Rightarrow & w'(t) = c +\int_0^{t} \frac{-1}{4} u(\tau) \cos(w(\tau)) d\tau \quad  \text{for some constant } c
    \end{align*}
    Besides, the \textbf{Weierstrass condition} is automatically satisfied, as $\Lambda^{W}$ is convex in $w'$:     
    \[\Lambda^W(t,w_{*}(t), v) - \Lambda^W(t,w_{*}(t),w'_{*}(t)) \geq \langle \Lambda^W_{w'}(t,w_{*}(t),w'_{*}(t)), v-w'_{*}(t) \rangle, \; \forall v \in \mathbb{R}\]
    Lastly, since there is no endpoint constraint for $w$, we let $l(w(a),w(b)) \equiv 0$ and $E = \mathbb{R} \times \mathbb{R}$, so the limiting subdifferential is 
    \[ \partial_L l(w_*(a),w_*(b)) = \{(0,0)\}, \] 
    and the limiting normal cone is also a singleton: \[
    N_E^L(w_*(a),w_*(b)) =  \{(0,0)\},
    \]
    Our \textbf{Transversality condition} implies:
    \[
    (q(0), -q(1)) = (2w'(0), -2w'(1) ) = (0,0).
    \]
    Together with the fact that the above costate function $q$ is absolutely continuous, we have:
    \begin{equation}
       \begin{cases}
           w'(t) = -\frac{1}{4} \int_{0}^{t}  u(s) \cos(w(s)) ds & \text{a.e. on } [0,1]\\
           w'(t_{i}^{-}) =  w'(t_{i}^{+}) & \text{ for }= 1,2,\cdots N-1 \\
            w'(0) = w'(1) =0 
       \end{cases} 
    \end{equation}   
    
\end{proof}

\subsection{Proof of Theorem \ref{thm:ext} (Existence)} \label{pf:thm-ext}
\subsubsection{Existence of AC solution}
    We will apply the direct method in calculus of variations to establish the existence of an absolutely continuous solution. The method shares the same spirit of Weierstrass' extreme value theorem, which says that a lower semicontinuous function attains its infimum on a compact support. Compactness comes from the coercive condition and proper choice of the topology, while lower semi-continuity is established by the \textbf{Integral Semicontinuity} lemma from \citet[Theorem 6.38]{C2013}, which we state without proof.

Let $\Omega$ be an open subset of $\mathbb{R}^m$, and $F: \Omega \times \mathbb{R}^l \times \mathbb{R}^n \to \mathbb{R}$ is the integrand, whose arguments are denoted by $(t,x,v)$. We also use $Q \subseteq \Omega \times \mathbb{R}^l$ to describe the constraint on function $x: \Omega \to \mathbb{R}^l$, i.e. the state constraint:  \[(t,x(t)) \in Q \subseteq \Omega \times \mathbb{R}^l  \;\; t \in \Omega \; a.e.\]
We want to show the semicontinuity of the integral functional  \[J(x,v) = \int_{\Omega} F(t,x(t),v(t)) dt .\]
\begin{lemma}[Integral Semicontinuity] \label{thm:int-lsc}
    Suppose the following hypotheses holds:
    \begin{enumerate}[label=(\alph*)]
    \item $F$ is lower semicontinous in $(x,v)$ and convex with respect to $v$;
    \item For every measurable $x: \Omega \to \mathbb{R}^l$ having $(t,x(t)) \in Q$ a.e., and for every measurable $v: \Omega \to \mathbb{R}^n$, the function $t \mapsto F(t,x(t),v(t))$ is measurable;
    \item There exists $\alpha \in L^1(\Omega)$, $\beta \in L^{\infty}(\Omega,\mathbb{R}^n)$ such that \[
    F(t,x,v) \geq \alpha(t) + \langle \beta(t) ,  v\rangle\;\; \forall (t,x) \in Q, v \in \mathbb{R}^n
    \]
    \item Q is closed in $\Omega \times \mathbb{R}^l$.
    \end{enumerate}
    Let $\{ x _i\}$ be a sequence of measurable functions on $\Omega$ satisfying $(t,x(t)) \in Q$ a.e. which converges almost everywhere to a limit $x_*$. Let $\{v_i\}$ be a sequence of functions converging weakly in $L^r(\Omega, \mathbb{R}^n)$ to $v_*$, where $r>1$. Then
    \[
    J(x_*,v_*) \leq \underset{i \to \infty}{\lim \inf} J(x_i,v_i)
    \]
    \end{lemma}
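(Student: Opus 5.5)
The plan is to run the classical Tonelli–Ioffe argument in three stages: first normalize the integrand, then handle the case where $F$ does not depend on $x$ (so that only convexity in $v$ and weak convergence matter), and finally absorb the $x$-dependence by a localization and convexification. I will assume $\Omega$ has finite measure, which is the only case used in the paper ($\Omega=(0,1)$), so that $\beta\in L^\infty\subset L^{r'}(\Omega)$.

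\textbf{Normalization and subsequence.} Since $\beta\in L^{r'}$ and $v_i\rightharpoonup v_*$ weakly in $L^r$, we have $\int_\Omega\langle\beta,v_i\rangle\,dt\to\int_\Omega\langle\beta,v_*\rangle\,dt$. Replacing $F$ by $F-\alpha(t)-\langle\beta(t),v\rangle$ therefore changes neither hypotheses (a), (b), (d) nor the conclusion, and it lets me assume $F\ge 0$ on $Q\times\mathbb{R}^n$ by (c). I then pass to a subsequence along which $J(x_i,v_i)$ converges to $\ell:=\liminf_i J(x_i,v_i)$, assumed finite. By Egorov's theorem, for every $\eta>0$ there is a measurable $E_\eta\subseteq\Omega$ with $|\Omega\setminus E_\eta|<\eta$ on which $x_i\to x_*$ uniformly. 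Since $Q$ is closed by (d), $(t,x_*(t))\in Q$ a.e.

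\textbf{$x$-independent case via Mazur.} Suppose first that $F=F(t,v)$ is nonnegative, lower semicontinuous and convex in $v$. By Mazur's lemma there are convex combinations $w_j=\sum_{i\ge j}\lambda^j_i v_i$ converging to $v_*$ strongly in $L^r$, hence a.e. along a further subsequence. Convexity gives
\[
\int_E F(t,w_j)\,dt\le\sum_{i\ge j}\lambda^j_i\int_E F(t,v_i)\,dt\le\sup_{i\ge j}\int_E F(t,v_i)\,dt ,
\]
while Fatou's lemma together with lower semicontinuity gives $\int_E F(t,v_*)\le\liminf_j\int_E F(t,w_j)$. Hence $\int_E F(t,v_*)\,dt\le\limsup_i\int_E F(t,v_i)\,dt$ for every measurable $E$.

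\textbf{Freezing $x$ (main obstacle).} The difficulty is that convexity applies to $F(t,x_*(t),\cdot)$, whereas the integrals involve $F(t,x_i(t),\cdot)$; hypothesis (a) gives only joint lower semicontinuity, not convexity in $(x,v)$. To handle this, for $k\in\mathbb{N}$ I define
\[
G_k(t,v):=\inf\{F(t,y,v):(t,y)\in Q,\ |y-x_*(t)|\le 1/k\},
\]
and let $\widehat G_k(t,\cdot)$ be the lower semicontinuous convex hull of $G_k(t,\cdot)$. On $E_\eta$, for $i$ large we have $|x_i-x_*|\le1/k$, so $F(t,x_i,v_i)\ge\widehat G_k(t,v_i)\ge0$. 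Applying the $x$-independent step to $\widehat G_k$ on $E_\eta$ yields
\[
\int_{E_\eta}\widehat G_k(t,v_*)\,dt\le\ell .
\]
It remains to show that $\widehat G_k(t,v)\uparrow F(t,x_*(t),v)$ as $k\to\infty$ for a.e. $t$. Then monotone convergence in $k$, followed by $\eta\downarrow0$ (again by monotone convergence, using $F\ge0$), gives $J(x_*,v_*)\le\ell$.

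This convergence is where I expect the real work to lie. I would argue it by contradiction using the convex minorant description. If $\sup_k\widehat G_k(t,v)<F(t,x_*,v)$, Carathéodory's theorem produces, for each $k$, points $v^k_0,\dots,v^k_n$ and points $y^k$ with $|y^k-x_*|\le 1/k$ whose convex combination equals $v$ and whose averaged $F$-values stay below $F(t,x_*,v)-\delta$. If the $v^k_m$ carrying non-negligible weight stay bounded, lower semicontinuity of $F$ in $(x,v)$ and convexity of $F(t,x_*,\cdot)$ give a contradiction. Escaping points have to be controlled; I would first try to do this by proving the lemma under the extra growth assumption $F\ge c|v|-d(t)$, which is true in the paper's application since $\Lambda^W$ is quadratic in $w'$, and then remove that assumption by perturbing $F$ to $F+\varepsilon|v|^r$ and letting $\varepsilon\downarrow0$. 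Finally, measurability of $t\mapsto\widehat G_k(t,v_*(t))$ must be checked via (b) and a normal-integrand/measurable-selection argument; this is routine but tedious.
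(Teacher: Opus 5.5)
The paper gives no proof of this lemma; it is quoted from Clarke, Theorem 6.38. Your argument therefore has to stand on its own. Its architecture is a legitimate Ioffe-type route: normalize to $F\ge 0$, use Mazur for the part convex in $v$, then localize in $x$ and convexify. The problem is the step you flag as the main obstacle. The claim $\widehat{G}_k(t,v)\uparrow F(t,x_*(t),v)$ is false in general, and it stays false under the linear growth $F\ge c|v|-d(t)$ you plan to assume first.

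\emph{Without growth.} Take $l=n=1$, $x_*=0$, $F(t,y,v)=\max\{0,1-yv\}$. Then $G_k(v)=\max\{0,1-|v|/k\}$, whose convex envelope is identically $0$, while $F(t,0,v)=1$.

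\emph{With linear growth.} Take $F(t,0,v)=\max\{|v|,2v+2\}$ and $F(t,y,v)=\max\{|v|,\,|y|^{-1}+(1-|y|)v\}$ for $y\neq 0$. This $F$ is finite, jointly lsc (it blows up as $y\to 0$ with $y\neq 0$), convex in $v$, and satisfies $F\ge|v|$. For $0<|y|\le\min\{1/k,1/2\}$ put $M=|y|^{-2}$ and $\lambda=1/(2M)$. Then $F(t,y,M)=M$, and writing $0=\lambda M+(1-\lambda)v_2$ gives $\widehat{G}_k(0)\le\tfrac12+(1-\lambda)F(t,0,v_2)=\tfrac32-2\lambda$. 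Hence $\sup_k\widehat{G}_k(0)\le\tfrac32<2=F(t,0,0)$.

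The mechanism is an escaping Carath\'eodory point with $\lambda^k_m\to0$ but $\lambda^k_m v^k_m\not\to 0$. It shifts the barycenter, while linear growth charges it only about $c|\lambda^k_m v^k_m|$.

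What rescues the argument is superlinear growth, and your own perturbation supplies it. So drop the linear-growth detour and run the convexification directly on $F_\varepsilon:=F+\varepsilon|v|^r$.

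\emph{Repairing the key step.} Suppose a near-optimal representation $v=\sum_m\lambda^k_m v^k_m$ for $\widehat{G}^\varepsilon_k(t,v)$ has cost at most $C$. Then $\lambda^k_m\varepsilon|v^k_m|^r\le C$, so every escaping point has $\lambda^k_m|v^k_m|\le (C/\varepsilon)|v^k_m|^{1-r}\to0$. The bounded points therefore carry full weight with barycenter $v$ in the limit. Joint lsc plus convexity of $F_\varepsilon(t,x_*(t),\cdot)$ then give $\sup_k\widehat{G}^\varepsilon_k(t,\cdot)=F_\varepsilon(t,x_*(t),\cdot)$. Since $\{v_i\}$ is bounded in $L^r$, you get $J(x_*,v_*)\le J_\varepsilon(x_*,v_*)\le\ell+\varepsilon\sup_i\|v_i\|_r^r$, and you can let $\varepsilon\downarrow0$.

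\emph{Measurability.} Measurability of $t\mapsto\widehat{G}_k(t,v_*(t))$ is not routine. Hypothesis (b) is only superposition measurability, and $G_k$ is an infimum over uncountably many $y$ of a function merely lsc in $y$, so no countable reduction is available. You can bypass it entirely. With the Mazur weights, define $\Phi_j(t):=\sum_{i\ge j}\lambda^j_iF_\varepsilon(t,x_i(t),v_i(t))$, which is measurable by (b). For a.e.\ $t$ and each $k$, $\widehat{G}^\varepsilon_k(t,w_j(t))\le\Phi_j(t)$ once $j$ is large. Hence $F_\varepsilon(t,x_*(t),v_*(t))\le\liminf_j\Phi_j(t)$ along the subsequence where $w_j\to v_*$ a.e. Fatou then yields $J_\varepsilon(x_*,v_*)\le\limsup_iJ_\varepsilon(x_i,v_i)$, and Egorov is not needed.

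\emph{The paper's application.} In the paper's only use of the lemma, $\Lambda^W$ is $\tfrac12\|u\|_\infty$-Lipschitz in $x$ uniformly in $(t,v)$. There the $x$-freezing is trivial by dominated convergence, and your Mazur step alone suffices.
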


\medskip 
\begin{proof}[Proof of the existence of AC solution]
    Firstly, we notice that the infimum of the value of problem \eqref{pb:w} is finite. This is because our Lagrangian is bounded from below: \[\Lambda^W(t,x,v) = v^2  - u \frac{1+\sin(x)}{2} \geq -\norm{u}_{\infty} > - \infty  .\] Meanwhile, the constant $0$ function achieves a finite $J^W[\mathbf{0}]=0$. Therefore, there exists a sequence of absolutely continuous function $w_i$ such that
    \[ \lim_{i \to \infty} J^W[w] = \underset{w \in AC([0,1],[-\frac{\pi}{2},\frac{\pi}{2}])}{\inf}  J^W[w]\]

    Besides, our Lagrangian satisfies the following coercive condition: 
    \[
    \Lambda^W(t,x,v) = (v)^2  - u(t) \frac{1+\sin(x)}{2}  \geq |v|^2 - \norm{u}_{\infty} \; \; \forall (t,x,v) \in [0,1] \times  [-\frac{\pi}{2},\frac{\pi}{2}] \times \mathbb{R}, \]
    For $i$ large enough, we have:
    \[
      \int_{0}^{1} (|w_i'(t)|^2 -\norm{u}_{\infty}) \, dt  \leq \int_{0}^{1} \Lambda^W(t,w_i(t),w_i'(t)) \, dt \leq \underset{w \in AC([0,1],[-\frac{\pi}{2},\frac{\pi}{2}])}{\inf}  J^W[w]+1
    \]
     This implies that the sequence of derivatives $\{w_i'\}$ is a bounded sequence in $L^2[0,1]$.  Every bounded sequence in $L^2[0,1]$ has a subsequence that converges weakly in $L^2[0,1]$. To see this, note that $L^2[a,b]$ is a reflexive space, and Banach–Alaoglu theorem says that the closed unit ball in the dual space of a normed vector space is compact in the weak-* topology\footnote{
    Let $(X,Y, \langle, \rangle)$ be a pairing of topological vector space, with the duality $(x,y) \mapsto \langle x,y\rangle$ from $X \times Y$ to $\mathbb{R}$. The \textbf{weak topology} on $X$ induced by $Y$ (denoted $\sigma(X,Y)$) is the coarsest topology on $X$ that makes the mappings $\langle \cdot,y \rangle: X \to \mathbb{R}$ continuous for all $y \in Y$. Equivalently, it is characterized by the \textbf{weak convergence} $\overset{w}{\to}$: for any net, $x_{\alpha} \to x \ \Longleftrightarrow\ \langle x_{\alpha}, y \rangle \to \langle x , y \rangle, \;\; \forall y \in Y$; When $Y=X^*$ is the continous dual of $X$, the \textbf{weak-* topology} on $Y$ is the coarsest topology on $Y$ making the mappings $\langle x, \cdot \rangle: Y \to \mathbb{R}$ continuous for all $x \in X$. Equivalently, it is characterized by the \textbf{weak-* convergence} $\overset{w^*}{\to}$: for any net, $y_{\alpha} \to y \ \Longleftrightarrow\ \langle x, y_{\alpha} \rangle \to \langle x , y\rangle, \;\; \forall x \in X$.
    }. Hence, every closed unit ball in $L^2[0,1]$ is also compact in the  weak-* topology on $L^2[0,1]$. Then, by Eberlein-\v{S}mulian Theorem, we know that in a normed space, weak compactness coincides with weak sequential compactness, so a subset $K$ being compact in the weak topology implies that every sequence in $K$ has a subsequence that converges weakly to some point in $K$. Lastly, a bounded sequence is contained by some closed ball in $L^{2}[0,1]$ and must admit a convergent subsequence in $L^{2}[0,1]$.    
    Without loss of generality we can replace the sequence with that subsequence and denote the limit as $v_*$. 
    Besides, we have the unilateral state constraint: 
    \[w(t) \in [-\frac{\pi}{2},\frac{\pi}{2}], \; t \in [0,1]. \] 
    Since $[-\frac{\pi}{2},\frac{\pi}{2}]$ is a compact subset of $\mathbb{R}$, it is also without loss of generality to assume $\lim_{i \to \infty} w_i(0)$ exists.

    Together, we define an absolutely continuous $w_*$ as following:
    \[
    w_*(t) = \lim_{i \to \infty} w_i(0) + \int_{0}^{t} v_*(s) ds, \;\; t \in [0,1],
    \] 
    It can be shown that the sequence $\{w_i\}$ converge pointwise to $w_*$,  by choosing the characteristic function as test function. Let $\mathbf{1}_{(a,t)}(s):=\begin{cases}
        1 & s \in (a,t) \\
        0 & \text{otherwise}
    \end{cases}$, by weak-* convergence:
    \begin{align*}
        &\lim_{i \to \infty} \int_{0}^{1} w_i'(s) \mathbf{1}_{(0,t)}(s) ds = \int_{0}^{1} v_*(s) \mathbf{1}_{(a,t)}(s) ds \\
       \Rightarrow &  \lim_{i \to \infty} w_i(t)=\lim_{i \to \infty} w_i(0) + \int_{0}^{t} w_i'(s) ds = w_*(0) + \int_{a}^{t} v_*(s) ds = w_*(t)
    \end{align*}
    Since each $w_i$ satisfies the state constraint, and $S$ is closed, we know the state constraint is satisfied by $w_*$ at each point: 
    \[w_*(t) =\lim_{i \to \infty} w_i(t)  \in S, \;\; t \in [0,1]. \]

    Then, we will verify that each hypothesis of the \textbf{Integral Semicontinuity} lemma is satisfied.  For (a), it is easy to see our Lagrangian $\Lambda^W(t,x,v)$ is continuous in $(x,v)$ and convex in $v$. \quad For (b), $u$ being piecewise continuous implies $\Lambda^W(t,x,v)$ is measurable in $t$, together with $\Lambda^W(t,x,v)$ being continuous in $(x,v)$, we know the mapping $t \mapsto \Lambda^W(t,w_i(t),w_i'(t))$ is measurable by \citet[Prop. 6.34, 6.35]{C2013}. \quad For (c), we can choose $\alpha(t)=-\norm{u}_{\infty}$ and $\beta(t)=0$, they satisfy
    $\alpha \in L^1([0,1])$, $\beta(x) \in L^{\infty}([0,1])$ and $\Lambda^W(t,x,v) \geq -\norm{u}_{\infty} \; \forall (t,u) \in Q, v \in \mathbb{R}$. \quad For (d),  we take $Q=[0,1] \times [-\frac{\pi}{2},\frac{\pi}{2}]$, and it is closed in $\Omega \times \mathbb{R}$. 

    Lastly, $\{w_i\}$ converges pointwise to $w_*$ implies it converges almost everywhere to $w_*$, and $\{w_i'\}$ also converges weakly to $v_*$ in $L^2[0,1]$,  by Lemma \ref{thm:int-lsc}, we know that the infimum is achieved by $w_*$:
    \[J^W[w_*] \leq \lim_{i \to \infty } J^W[w_i] = \underset{w \in AC([0,1],[-\frac{\pi}{2},\frac{\pi}{2}])}{\inf}  J^W[w].\]
    Since $w_*$ is an absolutely continuous function by construction and satisfies the unilateral state constraint, it is indeed a global minimizer for problem \eqref{pb:w}.
\end{proof}

\subsubsection{Existence of Lipschitz solution}

Next, we establish the existence of a solution in the class of Lipschitz continuous functions through regularity induced by the Euler integral equation:

\begin{corollary}
 Problem \eqref{pb:w} has a solution in $Lip([0,1],[-\frac{\pi}{2},\frac{\pi}{2}])$.
\end{corollary}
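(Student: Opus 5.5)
Take the absolutely continuous minimizer $w_*$ of Problem \eqref{pb:w} obtained above by the direct method, and show that it is already Lipschitz. No new minimizer needs to be constructed. The tool is the integral Euler equation of Theorem~\ref{thm:nc2}.

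\textbf{Step 1: applying Theorem~\ref{thm:nc2}.} The theorem is stated for the problem without the state constraint $w\in[-\frac{\pi}{2},\frac{\pi}{2}]$, so I first remove that constraint. Define $\tilde w$ by reflecting $w$ into the interval, for instance $\tilde w=\arcsin(\sin w)$. Then $\sin\tilde w=\sin w$, and $|\tilde w'|=|w'|$ almost everywhere, since the map $x\mapsto\arcsin(\sin x)$ is $1$-Lipschitz with slope $\pm1$. It follows that $J^W[\tilde w]\le J^W[w]$ for every absolutely continuous $w$ with values in $\mathbb{R}$. Hence $w_*$ is a global minimizer of the unconstrained problem, and in particular a strong local minimizer. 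Theorem~\ref{thm:nc2} then applies and gives
\[
w_*'(t)=-\frac14\int_0^t u(s)\cos(w_*(s))\,ds \quad \text{for a.e. } t\in[0,1].
\]

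\textbf{Step 2: bounding the derivative.} By Assumption~\ref{ass:pwc}, $u$ is bounded on the compact interval, and $|\cos(w_*)|\le 1$. The right-hand side above is therefore bounded in absolute value by $\frac14\int_0^1|u(s)|\,ds\le\frac14\norm{u}_\infty$. So $|w_*'|\le \frac14\norm{u}_\infty$ almost everywhere.

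\textbf{Step 3: concluding Lipschitz continuity.} Since $w_*$ is absolutely continuous, $w_*(t_2)-w_*(t_1)=\int_{t_1}^{t_2}w_*'(s)\,ds$. Step 2 then gives $|w_*(t_2)-w_*(t_1)|\le \frac14\norm{u}_\infty|t_2-t_1|$, so $w_*\in Lip([0,1],[-\frac{\pi}{2},\frac{\pi}{2}])$. In fact the right-hand side of the Euler equation is a continuous function of $t$, so $w_*$ is even $C^1$.

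\textbf{Main obstacle.} The only delicate point is the legitimacy of invoking Theorem~\ref{thm:nc2} when the constraint may be active. The reflection argument in Step 1 is what I would use. An alternative is to extend the Lagrangian periodically in $w$: $\sin$ is already defined on all of $\mathbb{R}$, so the Lagrangian of \eqref{pb:w} makes sense without the constraint, and the constrained minimizer remains optimal among all real-valued competitors because of the reflection inequality.
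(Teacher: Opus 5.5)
Your argument is correct and matches the paper's proof: both take the absolutely continuous minimizer from the direct method, invoke the integral Euler equation of Theorem~\ref{thm:nc2}, and use boundedness of $u$ to bound $w_*'$ uniformly, which gives Lipschitz continuity. Your reflection step via $\arcsin(\sin w)$ is a sound addition. It shows that the constrained minimizer also minimizes the unconstrained problem, which is the setting the proof of Theorem~\ref{thm:nc2} actually uses, and the paper leaves this justification implicit.
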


\begin{proof}
    Since $u$ is a piecewise continuous function on a compact support, we know it is bounded by some constant $M$ almost everywhere except at finite points. Meanwhile, there exists an absolutely continuous solution that satisfies the Euler Integral Equation:
    \begin{align*}
        & w'(t) = \int_{0}^{t} \frac{1}{4} u(s) \cos(w(s)) ds & \text{a.e. on } [0,1] \\
    \Rightarrow &|w'(s) - w'(t)| = |\int_{t}^{s} u(\tau) \cos(w(\tau))\,d\tau| \leq M |t-s|, 
    \end{align*}
    so we know $w'$ is uniformly bounded on [0,1] and hence $w  \in Lip$.
\end{proof}

\subsubsection{Non-existence of partial boundary solution}
The idea of the proof is to first consider the problem without state constraint, where a global minimizer exists, and the extremals satisfy the set of necessary conditions described in Theorem~\ref{thm:nc2}. We then argue that any candidate that stays at the boundary and then leaves can not be an extremal in the unconstrained problem. Later, we show that any candidate that crosses the boundary can not be a global minimum of the unconstrained problem. Hence, it is a contradiction if a partial boundary candidate can be the global minimizer in the constrained problem. 

\noindent \textbf{(Existence of solution to the unconstrained problem):}
More specifically, we first consider the following optimization problem without state constraint:
\begin{equation}
    \underset{w \in AC(\Theta,\mathbb{R})}{\text{minimize}} \int_{\Theta}  (w'(t))^2  - u(t) \frac{1+\sin(w(t))}{2} \, dt. \tag{relaxed-$W$} \label{pb:relaxed-w}
\end{equation}

We first observe the functional is translation-invariant in the sense that $J^W[w]=J^W[w+2\pi]$. As a result, we may lose compactness due to translation. To account for this, we consider two admissible functions $w_1,w_2$ as equivalent if $w_1(t) - w_2(t)=2k \pi \; a.e.$ for some $k \in \mathbb{Z}$, and pick a representative in each equivalent class that satisfies $w(a) \in [-\pi,\pi)$. The existence proof for an $AC(\Theta,\mathbb{R})$ optimal solution proceeds almost exactly as in the case with state constraint, with the only twist that $[-\pi,\pi)$ is closed in the quotient space $\mathbb{R}/2\pi\mathbb{Z}$ and hence $\lim_{i \to \infty} w_i(a)$ exists and we can construct $w_*$ from the minimizing sequence in the exact same way. 

The solution, as a global minimizer, satisfies the necessary conditions in Theorem~\ref{thm:nc2} and is governed by the following boundary value problem (BVP)
    \begin{equation}
       \begin{cases}
           w''(t) = -\frac{1}{4} u(t) \cos(w_*(t))  & \text{ a.e. on } [0,1]\\
            w'(0) = w'(1) =0 
       \end{cases} 
    \end{equation}  
    
 \noindent \textbf{(Implication of the necessary conditions):} The above equations do not admit solutions that stay at the boundary $\pm \frac{\pi}{2}$ and then leave. To see this, we assume that $w$ is a solution to the BVP and $w(t)=-\frac{\pi}{2}$ on some interval $(t_0,t_1) \subsetneq [0,1]$. Then, by the continuity of $w$ and $w'$, we know $w(t_1)=-\frac{\pi}{2}$ and $w'(t_1)=0$. We get the following initial value problem (IVP)
\begin{equation}
 \begin{aligned}
    & \begin{cases}
         y_1(t) = w(t) \\ 
         y_2(t) = w'(t)
     \end{cases}     \quad
     \begin{cases}
         y_1'(t) = y_2(t) \\ 
         y_2'(t) = -\frac{1}{4} u(t) \cos(y_1(t))
     \end{cases} \\
     & (y_1(t_1), y_2(t_1)) = (-\frac{\pi}{2},0) \\
     \Rightarrow & y' = F(t,y), \quad F(t,y):= (y_2(t),-\frac{1}{4} u(t) \cos(y_1(t))) 
 \end{aligned}
\end{equation}
 We note that $F: [0,1] \times \mathbb{R}^2 \to \mathbb{R}^2$ is a \textbf{Carath\'{e}odory function}: it is Lebesgue measurable in $t$ for each $y$ and continuous in $(y_1,y_2)$ for each $t$. Moreover,  $F$ is Lipschitz in $y$ with a Lipschitz constant independent of $t$ in the strip $[0,1] \times [-\pi,\pi] \times\mathbb{R}$:
 \begin{align*}
     \norm{F(t,y) - F(t,\tilde{y})} &= \norm{(\tilde{y}_2 - y_2,-\frac{1}{4} u(t)(\cos(y_1) -\cos(\tilde{y}_1)) )}  \\
     &=\norm{(\tilde{y}_2 - y_2,-\frac{1}{4} \sin(c) u(t)(y_1 -\tilde{y}_1) )} \text{ for some } c \in (y_1,\tilde{y}_1) \\
     & \leq \max_{ t \in [0,1]} |u(t)| \; \norm{y-\tilde{y}}
 \end{align*}
 By a version of \textbf{Picard-Lindel\"{o}f} Theorem \citep[\S 10.XVIII]{W1998}, we know there exists a unique solution to the IVP. Meanwhile, it is easy to see that $w\equiv -\pi/2$ and $w'\equiv0$ on $[t_1,1]$ solve the IVP. The argument for $[0,t_0]$ and for $w=\frac{\pi}{2}$ is almost identical and we omit it here. Hence, if $w(t)=-\frac{\pi}{2}$ on some interval $(t_0,t_1)$, it must be on the boundary for the whole interval $[0,1]$. Therefore, any extremal of the unconstrained problem must be one of the following three types up to addition or subtraction of multiple of $2\pi$:
     \begin{itemize}
        \item Pure boundary extremal: $w(t) \equiv -\frac{\pi}{2}$ on $[0,1]$ or $w(t) \equiv \frac{\pi}{2}$ on $[0,1]$
        \item Strict interior extremal: $-\frac{\pi}{2}<w(t)<\frac{\pi}{2}$ for all $t \in [0,1]$
        \item Boundary-crossing extremal: $w(t) \in \bigl(-\frac{\pi}{2},\frac{\pi}{2}\bigr)$ on a subset of $[0,1]$ of positive measure, 
    and $w(t) \notin \bigl[-\frac{\pi}{2},\frac{\pi}{2}\bigr]$ on another subset of positive measure, while 
    the contact set $\{t \in [0,1] : |w(t)| = \frac{\pi}{2}\}$ has measure zero.
    \end{itemize}

\noindent \textbf{(Nonexistence of boundary-crossing minimizer):} We proceed to argue that, any boundary-crossing extremal, if it ever exists, can not be a global minimizer of the unconstrained problem. 
Since $J^W$ is invariant under addition of $2 k \pi $, the unconstrained problem is naturally posed on $\mathbb{R}/(2 \pi \mathbb{Z})$. For the compactness argument we identify functions that differ by a constant multiple of $2 \pi$, and select a representative with $w(0) \in [-\pi,\pi)$. In the arguments below, we work with this fixed representative, so all interval comparisons are made in $\mathbb{R}$, not on the quotient.
Recall Assumption~\ref{ass:sc} of $u$ being single crossing, we have the following useful observations: 
\begin{observation}
       If for some integer $k$,  $w \in [-\pi+2k\pi,-\frac{\pi}{2}+2k\pi)$ on some non-trivial interval on $[0,t_u)$ or $w \in (\frac{\pi}{2}+2k\pi,\pi+2k\pi]$ on some non-trivial interval on $(t_u, 1]$, then $w$ can not be a global minimizer. 
\end{observation}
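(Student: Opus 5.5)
The idea is a reflection argument. On the stated intervals the integrand can be strictly improved pointwise by mirroring $w$ across the boundary angle, without touching $\int (w')^2$. Take the case $t\in[0,t_u)$ with $w(t)\in[-\pi+2k\pi,-\frac{\pi}{2}+2k\pi)$ on a nontrivial interval $I\subset[0,t_u)$. Since $J^W$ is invariant under shifts by $2\pi$, we may take $k=0$.

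Define the reflected candidate
\[
\tilde w(t)=\begin{cases} -\pi - w(t), & \text{if } w(t)\in[-\pi,-\tfrac{\pi}{2}) \text{ and } t\in[0,t_u),\\ w(t), & \text{otherwise.}\end{cases}
\]
A cleaner global version is $\tilde w=\max\{w,\,-\pi-w\}$ restricted to the set where $w\in[-\frac{3\pi}{2},-\frac{\pi}{2}]$. More simply, $\tilde w(t)=-\frac{\pi}{2}+\bigl|w(t)+\frac{\pi}{2}\bigr|$ on $[0,t_u)$ wherever $w\ge-\frac{3\pi}{2}$, and $\tilde w=w$ elsewhere.

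The key facts are these:
\begin{itemize}
\item $\sin(-\pi-w)=\sin(w)$, so the reflection leaves $\sin$ unchanged on its own. This means the naive reflection alone gains nothing.
\end{itemize}
The plan therefore uses a different modification. On $[0,t_u)$ we have $u<0$, so the integrand $-u\frac{1+\sin w}{2}$ is minimized by making $\sin w$ as small as possible, namely $w=-\frac{\pi}{2}$. We replace $w$ by its truncation $\hat w=\max\{w,-\frac{\pi}{2}\}$ on the set where $w\in[-\pi,-\frac{\pi}{2})$.

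The truncation has three properties:
\begin{itemize}
\item Since $w$ is continuous and the set $\{t\in I: w(t)<-\frac{\pi}{2}\}$ is open, $\hat w$ is absolutely continuous. Its derivative equals $w'$ off that set and is zero on it, so $\int(\hat w')^2\le\int (w')^2$.
\item At points where $w\in[-\pi,-\frac{\pi}{2})$ we have $\sin w>-1=\sin(-\frac{\pi}{2})$ strictly, except at no points, since $\sin(-\pi)=0>-1$. Together with $u<0$ on $[0,t_u)$, this gives $-u(1+\sin \hat w)<-u(1+\sin w)$ strictly on a positive-measure set.
\item Hence $J^W[\hat w]<J^W[w]$, so $w$ is not a global minimizer.
\end{itemize}

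The main obstacle is ensuring the modification stays local and consistent. The component of $\{w<-\frac{\pi}{2}\}$ containing $I$ may extend past $t_u$ or below $-\pi$.

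Extension past $t_u$ is handled by truncating only on the connected component $(\alpha,\beta)$ of $\{t: w(t)\in(-\frac{3\pi}{2},-\frac{\pi}{2})\}$ meeting $I$, intersected with $[0,t_u)$. If the component extends past $t_u$, we instead set $\hat w$ equal to $-\frac{\pi}{2}$ on $(\alpha,\min\{\beta,t_u\})$ and glue at $t_u$. This gluing is not continuous, so we must modify the plan. One option is to reflect $w\mapsto -\pi-w$ on the part beyond $t_u$. There $u>0$, and reflection preserves $\sin$, so the cost on that part is unchanged; continuity is then restored at points where $w=-\frac{\pi}{2}$.

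The cleanest statement of the combined construction is
\[
\hat w=-\tfrac{\pi}{2}-\bigl|w+\tfrac{\pi}{2}\bigr|\ \text{ on } (\alpha,\beta) \cap [t_u,1].
\]
This is equal-cost there. Since $w<-\frac{\pi}{2}$ on $(\alpha,\beta)$, it equals $w$ itself, so no change is needed. Continuity at $t_u$ fails only if $w(t_u)<-\frac{\pi}{2}$. In that case, rather than clipping, we use the pointwise map $w\mapsto -\frac{\pi}{2}+\lambda(w+\frac{\pi}{2})$ with a cutoff, or we argue directly with the first-order condition.

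The same argument, mirrored, handles $w\in(\frac{\pi}{2},\pi]$ on $(t_u,1]$, where $u>0$ favours $\sin w=1$.
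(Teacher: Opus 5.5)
There is a genuine gap, and it sits exactly at the case you flag as the main obstacle.

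Your truncation step is sound, and it is essentially the paper's device: the paper replaces $w$ by the constant $-\frac{\pi}{2}$ on a maximal interval. For it to work, you must apply it on the whole connected component $(\alpha,\beta)$ of $\{t: w(t)<-\frac{\pi}{2}\}$ that contains the given interval. Truncating only where $w\in[-\pi,-\frac{\pi}{2})$, or on a component of $\{w\in(-\frac{3\pi}{2},-\frac{\pi}{2})\}$, creates jumps wherever $w$ exits through $-\pi$ or $-\frac{3\pi}{2}$. On the full component you still have $1+\sin w\ge 0$, with strict inequality on the given interval, so the gain survives. This proves the claim whenever $\beta\le t_u$, for instance when $w(t_u)\ge-\frac{\pi}{2}$.

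The remaining case is the one you isolate yourself: the component runs past $t_u$, so $w(t_u)<-\frac{\pi}{2}$. Neither a rescaling with cutoff nor the Euler equation can close it, because in that case the statement is false. The paper's proof has the same hole. Its cases (i)--(ii) presuppose that the maximal interval is closed off by $w=-\frac{\pi}{2}$ (or by $t=0$), and the straddling case is silently omitted.

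Here is a counterexample. Take $u$ as in Proposition~\ref{prop:S-shape1}, e.g.\ the step payoff of Example~\ref{eg:2AFC}. Then $t_u=\frac12$, and the minimizer $w_*$ of \eqref{pb:w} is interior with $-\frac{\pi}{2}<w_*<0$ on $[0,\frac12)$ and $0<w_*<\frac{\pi}{2}$ on $(\frac12,1]$. The map $x\mapsto\arcsin(\sin x)$ is $1$-Lipschitz and leaves $\sin x$ unchanged. So folding any $w\in AC([0,1],\mathbb{R})$ into $[-\frac{\pi}{2},\frac{\pi}{2}]$ does not increase $J^W$, and $w_*$ is also a global minimizer of \eqref{pb:relaxed-w}.

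Now set $\tilde w:=-\pi-w_*$. The identity $\sin(-\pi-x)=\sin x$ that you noted, together with $(\tilde w')^2=(w_*')^2$, gives $J^W[\tilde w]=J^W[w_*]$. So $\tilde w$ is a global minimizer too. Yet $\tilde w\in(-\pi,-\frac{\pi}{2})$ on $[0,\frac12)$, which is the first hypothesis with $k=0$. Also $\tilde w\in(-\frac{3\pi}{2},-\pi)$ on $(\frac12,1]$, which is the second hypothesis with $k=-1$. Moreover $\tilde w(0)\in[-\pi,\pi)$, so $\tilde w$ is even the representative the paper fixes. Here $\{\tilde w<-\frac{\pi}{2}\}=[0,1]$ and $\tilde w(t_u)=-\pi$, which is exactly your unresolved case.

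So the observation holds only with an added hypothesis. For example, require that the component of $\{w<-\frac{\pi}{2}+2k\pi\}$ (resp.\ $\{w>\frac{\pi}{2}+2k\pi\}$) containing the interval does not cross $t_u$. Under that hypothesis your truncation argument is a complete proof.
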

\begin{proof}
    Suppose $k=0$ and $w \in [-\pi,-\frac{\pi}{2})$ on some maximal interval $(t_0,t_1) \subseteq [0,t_u)$ with two possible cases: i) $w(t_0)=w(t_1)=-\frac{\pi}{2}$ and $t_0>0$, or ii) $w(t_1)=-\frac{\pi}{2}>w(t_0)$ and $t_0=0$, such an interval exists because $w$ is continuous, we then construct 
    \begin{equation*} \tilde{w}(t) =
        \begin{cases}
            -\frac{\pi}{2}  & \text{ on } (t_0,t_1) \\
            w & \text{ otherwise}
        \end{cases}
    \end{equation*}
    the new function $\tilde{w}$ is still absolutely continuous and strictly dominates $w$:
    \[J^{W}[w] - J^{W}[\tilde{w}] 
    = \int_{t_0}^{t_1} (w')^2 + \overbrace{\frac{u}{4}}^{<0}\overbrace{(\sin(-\frac{\pi}{2}) -\sin(w))}^{<0}dt >0\]
    Therefore, $w$ can not be a global minimizer. The proof for $w \in (\frac{\pi}{2},\pi]$ on some non-trivial interval on $(t_u, 1]$ is almost identical and the extension to arbitrary $k$ follows from $2\pi$-translation invariance of $J^W$
\end{proof}
\begin{observation}
If up to addition or subtraction of some multiple of $2\pi$, $w(t_u) \notin [-\frac{\pi}{2},\frac{\pi}{2}]$, then $w$ can not be a global minimizer.  
\end{observation}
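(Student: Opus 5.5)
The plan is to reduce to the case in which $w$ never touches the lines $\pm\frac{\pi}{2}+2k\pi$, and then derive a contradiction from the sign structure of the Euler equation together with the previous Observation. Throughout, $w$ is a global minimizer of \eqref{pb:relaxed-w}. By $2\pi$-translation invariance we may assume $w(t_u)\in(\frac{\pi}{2},\frac{3\pi}{2}]$, and we suppose this for contradiction.

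\textbf{Step 1 (folding).} Introduce the folding map $F(x):=\arcsin(\sin x)$. It maps $\mathbb{R}$ onto $[-\frac{\pi}{2},\frac{\pi}{2}]$, is piecewise linear with slopes $\pm1$, and satisfies $\sin F(x)=\sin x$. Hence $\hat w:=F\circ w$ is absolutely continuous, with $|\hat w'|=|w'|$ a.e. (on the level sets $\{w=\pm\frac{\pi}{2}+2k\pi\}$ both derivatives vanish a.e.). Consequently $J^W[\hat w]=J^W[w]$, so $\hat w$ is also a global minimizer. Theorem~\ref{thm:nc2} and the integral Euler equation then imply that $\hat w'$ is continuous.

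Now suppose $w$ hits some level $\frac{\pi}{2}+k\pi$ at a point $s$. Continuity of $\hat w'$ across the fold forces $w'(s)=0$, since folding flips the sign of the derivative. Then $(w(s),w'(s))=(\pm\frac{\pi}{2}+2k\pi,0)$, which is an equilibrium of the Carath\'eodory system $y'=F(t,y)$. By the Picard--Lindel\"of uniqueness argument already used for partial boundary solutions, $w$ must be constant at that level on all of $[0,1]$. This contradicts $w(t_u)\in(\frac{\pi}{2},\frac{3\pi}{2}]$ unless $w\equiv\frac{3\pi}{2}$, which is a pure boundary extremal and is excluded by the convention that the statement concerns $w(t_u)\notin[-\frac{\pi}{2},\frac{\pi}{2}]$ modulo $2\pi$. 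So in the remaining case $w$ stays strictly inside $(\frac{\pi}{2},\frac{3\pi}{2})$ on the whole interval $[0,1]$.

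\textbf{Step 2 (sign argument).} On $(\frac{\pi}{2},\frac{3\pi}{2})$ we have $\cos w<0$, so the Euler equation reads $w''=\frac{1}{4}|u|\,|\cos w|\cdot\operatorname{sign}(u)$. By Assumption~\ref{ass:sc}, $w''<0$ a.e. on $[0,t_u)$ and $w''>0$ a.e. on $(t_u,1]$. With $w'(0)=0$ this gives $w'<0$ on $(0,t_u]$, so $w$ is strictly decreasing there.

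The previous Observation (with $k=1$ on $[0,t_u)$) rules out $w\in[\pi,\frac{3\pi}{2})$ on any nontrivial subinterval of $[0,t_u)$. Hence $w<\pi$ near $t_u^-$, and since $w$ is decreasing, $w<\pi$ on all of $(0,t_u]$. Symmetrically, on $(t_u,1]$ we have $w'' >0$ and $w'(1)=0$, so $w'<0$ there and $w$ is decreasing on $[t_u,1)$. The Observation (with $k=0$) also forbids $w\in(\frac{\pi}{2},\pi]$ on a nontrivial subinterval of $(t_u,1]$. Therefore $w>\pi$ just to the right of $t_u$, while $w(t_u)<\pi$ and $w$ is decreasing there, which is a contradiction. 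A degenerate alternative is that $w'\equiv0$ on both sides; this contradicts $w''\neq 0$ a.e. whenever $u\neq0$.

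\textbf{Main obstacle.} The delicate step is Step 1, specifically justifying that $\hat w$ is a minimizer to which Theorem~\ref{thm:nc2} applies, so that $\hat w'$ is continuous, and handling points where $w$ merely touches a fold level rather than crossing it transversally. I would first try to argue directly from the $C^1$ property: at a touching point, $w'=0$ holds anyway, so the uniqueness argument applies regardless of whether the crossing is transversal. I would also double-check the measure-zero level-set claim for $|\hat w'|=|w'|$ using the standard fact that $w'=0$ a.e. on any level set of an absolutely continuous function.
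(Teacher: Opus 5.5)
Your Step~1 and the monotonicity part of Step~2 are sound. The argument breaks where you invoke the previous Observation, and it cannot be repaired there, because the statement itself is false.

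Besides $w\mapsto w+2\pi$, the functional $J^W$ is invariant under the reflection $w\mapsto \pi-w$, since $((\pi-w)')^2=(w')^2$ and $\sin(\pi-w)=\sin w$. Take the setting of Proposition~\ref{prop:S-shape1} (e.g.\ Example~\ref{eg:2AFC}) and let $v$ be its strictly interior minimizer of \eqref{pb:w}, which is increasing with $v(\tfrac12)=0$. By your own folding argument, \eqref{pb:w} and \eqref{pb:relaxed-w} have the same infimum, so both $v$ and $w:=\pi-v$ are global minimizers of \eqref{pb:relaxed-w}. Yet $w(\tfrac12)=\pi\notin[-\tfrac{\pi}{2},\tfrac{\pi}{2}]+2\pi\mathbb{Z}$. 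Moreover $w\in(\pi,\tfrac{3\pi}{2})$ on $[0,\tfrac12)$, so $w$ also violates the previous Observation with $k=1$.

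That Observation's proof assumes the maximal interval on which $w\in[-\pi,-\tfrac{\pi}{2})$ ends at a point where $w=-\tfrac{\pi}{2}$. For the representative $-\pi-v$ it instead ends at $t_u$ with value $-\pi$, so setting $w\equiv-\tfrac{\pi}{2}$ there creates a jump. The paper's two-line proof of this statement rests on the same Observation and fails in the same way. Your reduction is more careful, but it inherits the defect.

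In fact your Step~2 exhibits the counterexample. Once $w$ is confined to $(\tfrac{\pi}{2},\tfrac{3\pi}{2})$, you show it is strictly decreasing with $w'(0)=w'(1)=0$. That is exactly $\pi-v$ for an increasing interior solution $v$ of the same Euler equation, with the same cost.

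What your Step~1 does prove is the correct substitute. A global minimizer of \eqref{pb:relaxed-w} is either constant at some level $\tfrac{\pi}{2}+k\pi$, or lies strictly between two consecutive such levels; in the latter case $\arcsin(\sin w)$ is a strictly interior minimizer of \eqref{pb:w}. This has two useful consequences:
\begin{enumerate}
\item No boundary-crossing extremal is a global minimizer.
\item Apply the same argument to a minimizer of \eqref{pb:w} itself. Any contact point with $\pm\tfrac{\pi}{2}$ is an interior extremum or an endpoint, so $w'=0$ there automatically. This gives the pure-boundary versus strictly-interior dichotomy of Theorem~\ref{thm:ext} directly, which is what this Observation is used for.
\end{enumerate}
The Observation should therefore be restated: a global minimizer with $w(t_u)\notin[-\tfrac{\pi}{2},\tfrac{\pi}{2}]+2\pi\mathbb{Z}$ is, up to a $2\pi$-translation, of the form $\pi-v$ with $v$ a strictly interior minimizer of \eqref{pb:w}.
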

\begin{proof}
    Suppose $w(t_u) \in [-\pi,-\frac{\pi}{2})$, then by the continuity of $w$, there exist an open interval in $[0,t_u)$ such that $w(t) \in [-\pi, -\frac{\pi}{2})$ on that interval, and according to our above observation, it can not be a global minimizer. The case for $w(t_u) \in (\frac{\pi}{2},\pi]$ is almost identical.
\end{proof}

With these two observation, we only need to consider extremals that satisfy $w(t) \in [-\frac{\pi}{2}, \pi]$ on $[0,t_u)$, $w(t) \in [-\pi,\frac{\pi}{2}]$ on $(t_u,1]$ and $w(t_u) \in [-\frac{\pi}{2},\frac{\pi}{2}]$. Now, suppose that $w \in (\frac{\pi}{2},\pi]$ on some interval in $[0,t_u)$, we will first consider the following reflection:
    \begin{equation*} \tilde{w}(t) =
        \begin{cases}
            \pi -w   & \text{ if $w \in (\frac{\pi}{2},\pi]$ } (t_0,t_1) \\
            w & \text{ otherwise}
        \end{cases}
    \end{equation*}
We know $(w')^2=(\tilde{w}')^2$ almost everywhere, and combining with the symmetry of $\sin(\cdot)$ function around $\frac{\pi}{2}$, we also have $\sin(w) =\sin(\tilde{w})$ for all $t$, this implies $J^{W}[w] =J^{W}[\tilde{w}]$. We will show that $\tilde{w}$ is strictly dominated by the following construction:
\[
\hat{w} = \begin{cases}
    \min\{\tilde{w}, \frac{\tilde{w}(t_u)-\tilde{w}(0)}{t_u} t+\tilde{w}(0)  \} & \text{ on } [0,t_u)\\
    w  & \text{ on } [t_u,1]
\end{cases}
\]
Since affine function $w=A+\frac{B-A}{t_1-t_0} t$ is the solution to the following problem:
\begin{align*}
   & \text{minimize} \int_{t_0}^{t_1} (w'(t))^2 dt \\
   & \text{subject to } \; w(t_0) =A, \quad w(t_1) =B,
\end{align*}
we know $\int_{t_2}^{t_3} (\tilde{w}'(t))^2 - (\hat{w}'(t))^2 dt >0$ within each interval where the two disagree, that is $(t_2, t_3) \in (t_0,t_1)$ where we have $\tilde{w}(t_0) =\hat{w}(t_0)$ and $\tilde{w}(t_1) =\hat{w}(t_1)$ and $\tilde{w} \neq \hat{w}$ on $(t_0,t_1)$.
Meanwhile, we have $-\frac{\pi}{2} \leq \hat{w} \leq \tilde{w} \leq \frac{\pi}{2}$, which implies
$\sin(\hat{w}) \leq \sin(\tilde{w})$ on $[0,t_u]$. We therefore have
\begin{align*}
    &J^{W}[w] - J^{W}[\hat{w}]=  J^{W}[\tilde{w}]- J^{W}[\hat{w}] \\
    &= \int_{0}^{t_u} \overbrace{(\tilde{w}'(t))^2 - (\hat{w}'(t))^2}^{>0} 
    -\overbrace{\frac{u(t)}{4}}^{<0}\overbrace{(\sin(\tilde{w})(t) -\sin(\hat{w})(t))}^{\geq 0} dt >0
\end{align*}
The case for $w \in [-\pi,\frac{\pi}{2})$ on some interval in $(t_u,1]$ is similar, we first reflect the part $[-\pi,\frac{\pi}{2})$ around $-\frac{\pi}{2}$ to construct a $\tilde{w}$, then let \[
\hat{w} = \begin{cases}
    \max\{\tilde{w}, \frac{\tilde{w}(t_u)-\tilde{w}(1)}{t_u-1} t+\tilde{w}(1)  \} & \text{ on } (t_u,1]\\
    w  & \text{ on } [0,t_u]
\end{cases}
\]
Therefore, we have shown that any boundary-crossing extremal can not be a global minimizer.

\noindent \textbf{(Nonexistence of partial boundary minimizer):} Now, suppose that for the constrained problem, the global minimizer is some $w^*$ such that $w^*=\frac{\pi}{2}$ on some strict subset of $[0,1]$. Together with our uniqueness result (Corollary~\ref{lem:uni}), this implies that neither pure boundary extremal or strict interior extremal can be a global minimizer for the unconstrained problem. Hence, by the existence of the global minimizer to the unconstrained problem, we know it must be a boundary-crossing extremal, contradicting to what we have just shown above.

\subsection{Proof of Theorem \ref{thm:reg} (Regularity)} \label{pf:thm-reg}
\begin{proof}
    A constant boundary solution $p_*=0$ or $p_*=1$ is smooth, and the conclusion is trivially satisfied. Now, suppose the solution is not a boundary one, let $w_*=\arcsin{(2p_*-1)}$ be the solution to Problem~(\ref{pb:w}), it has to satisfy the integral Euler equation in \eqref{eq:Euler-NC2}.
    Suppose $u$ is piecewise continuous, then it is also bounded, that is $|u(t)| < M$ for some constant $M$ and for all $t$. Therefore \[
    |w_*'(s) - w_*'(t)| = |-\frac{1}{4}\int_{t}^{s} u(\tau) \cos(w_*(\tau))\,d\tau| \leq M |t-s|,
    \]
    so $w_*'$ is Lipschitz continuous and hence it is differentiable almost everywhere, implying that $w_*$ is twice differentiable almost everywhere.
    
    If we further suppose $u$ is continuous, the fundamental theorem of calculus and continuity of $u(t) \cos(w(t))$ then also implies:
    \[
    w_*''(t) = \lim_{h \to 0} \frac{-\frac{1}{4}\int^{t+h}_{t}  u(\tau) \cos(w_*(\tau)) \, d\tau }{h} = -\frac{1}{4}u(t)\cos(w_*(t)).
    \] 
    Higher order regularity of $w_*$ can be similarly derived from the the identity $w_*''(t) = -\frac{1}{4} u(t) \cos(w_*(t))$.
    Lastly, since the function $\sin(\cdot)$ is smooth, the result hold equally for $p_*$.
\end{proof}

\subsection{Proof of Proposition \ref{prop:monotone}} \label{pf:monotone}
\begin{proof}
    Suppose the solution is a pure boundary solution, then it is a constant and is certainly weakly monotone. Otherwise, if it is a strict interior solution, then the corresponding solution $w_*$ to Problem~\eqref{pb:w} must satisfy the integral Euler equation in \eqref{eq:Euler-NC2}. In this case, $w_*(t) \in (-\frac{\pi}{2},\frac{\pi}{2})$ and $w_*'(t)=\int_{0}^{t} -\frac{u(s)}{4} \cos(w_*(s)) ds$ imply that $w_*'$ is strictly increasing when $u(t)<0$ and strictly decreasing when $u(t)>0$. Combining with the boundary condition $w_*'(0)=w_*'(1)=0$, we know $w_*'(t) \geq 0$ for all $t \in [0,1]$.

    If the prior is uniform on $[0,1]$, then $p_*=\frac{\sin(w_*)+1}{2}$ is also increasing. If the prior is not uniform, we will show that monotonicity is preserved under the normalization described in Appendix~\ref{pf:normalize}. Denote the solution to the normalized problem as $\tilde{p}_*$, and define $G(t)=\int^{t} \frac{1}{\pi(\tau)} d\tau$. We can recover the solution through $p_*(t) = \tilde{p}_*(G(t))$, and $p'_*(t)=\tilde{p}'_*(G(t)) \frac{1}{\pi(t)}$ implies that $p_*(t)$ is monotone as long as $\tilde{p}_*(t)$ is.
\end{proof}

\subsection{Proof of Proposition \ref{prop:S-shape1}} \label{pf:s-shape1}
\begin{proof}
\textbf{Existence of interior solution:} Consider a candidate function $w = \beta \cdot (t -  0.5)$ with $\beta$ close enough to $0$ so that the value of the integral is approximately:
\begin{align*}
    & \int_{0}^{1} \beta^2 - u(t)\frac{1+ \sin( \beta \cdot (t-0.5))}{2} dt \\
    & = \int_{0}^{1} \beta^2 - u(t)\frac{1+ \beta \cdot (t-0.5) + O( \beta^3 \cdot(t-0.5)^3)}{2} dt \\
    &= \beta^2 -\frac{ 1}{2}\int^{1}_{0} u(t) dt - \frac{1}{2} \beta \int_{0}^{1} u(t)  (t-0.5) dt - \frac{1}{2}  O(\beta^3) \int_{0}^{1} u(t)  (t-0.5)^3 dt \\
    & = \beta^2 - \frac{1}{2} \beta \int_{0}^{1} u(t)  (t-0.5) dt  +  O(\beta^3) \leq 0
\end{align*}
in the last equality, the term $\int_0^1 u(t) dt$ is dropped because $u$ is odd-symmetry around $t=\frac{1}{2}$, and the remainder is absorbed in $O(\beta^3)$ because $u$ is bounded.
Moreover, the coefficient $\int_{0}^{1} u(t)  (t-0.5) dt $ is positive because $u(t) \leq 0$ on $[0,0.5)$ and $u(t) \geq 0$ on $(0.5,1]$. Hence, for $\beta>0$ close enough to zero, the term $- \frac{1}{2} \beta \int_{0}^{1} u(t)  (t-0.5) dt$ outweighs $\beta^2$ and the integral achieves a lower value than $0$, which is lower than that given by constant solution $w(t) \equiv \frac{\pi}{2}$ or $w(t) \equiv -\frac{\pi}{2}$. Lastly, Theorem \ref{thm:ext} implies that the optimal solution exists, and since it is not a boundary one, it must be interior. 

\textbf{Sigmoidal shape of the solution:} To see this, consider another function $z(t) = -w_{*}(1-t)$. Its derivatives satisfy:
\[z'(t) = w'_{*}(1-t) \text{ and } z''(t) = - w_{*}''(1-t) ,\]
Substituting into the Euler–Lagrange equation gives:
\[
-\frac{1}{4}\cos(z(t)) u(t) = \frac{1}{4} \cos(w_{*}(1-t))u(1-t) = -w_{*}''(1-t) =z''(t)
\]
which means that $z$ is another solution to the Euler–Lagrange equation. But by the uniqueness of the interior maximizer of Problem \eqref{pb:p}, we must have
\[w_{*}(t) = z(t) = -w_{*}(1-t).\]

Because $u(t) <0$ on $[0,1/2]$ and $w_{*}$ is a minimizer to Problem \eqref{pb:w}, we have $w(1/2) =0$, $w(t) <0$ on $[0,1/2)$ and $w(t) >0$ on $(1/2,1]$.
Recall that $p(t) = \frac{1 + \sin(w_(t))}{2}$, 
\begin{align*}
p''(t) &= \frac{1}{2}[ -\sin(w(t)) (w'(t))^2  + \cos(w(t)) w''(t)] \\
   &= \frac{1}{2}[-\sin(w(t)) (w'(t))^2 - \frac{1}{4} \cos^2(w(t)) u(t)]
\end{align*}
so we have $p_{*}''(t) >0$ $[0,1/2)$ and $p_{*}''(t) <0$ on $(1/2,1]$.
\end{proof}

\subsection{Proof of Proposition \ref{prop:S-shape2}} \label{pf:s-shape2}

   To highlight the main idea of the argument, we first present the proof under a stronger assumption that the payoff function $u$ is absolutely continuous. We then extend the result to the general case where we do not require continuity.
    
    Recall the transformation: $p(t) = \frac{1+\sin(w(t))}{2}$, which implies \begin{align*}
        p' &= \frac{1}{2} \cos(w) w' \\
        p'' &= \frac{1}{2} \cos(w) w'' - \frac{1}{2} \sin(w) (w')^2 \\
        p''' &=\frac{1}{2}( \cos(w) w''' - 3\sin(w)w'w'' -\cos(w) (w')^3 )
    \end{align*}
    
    Meanwhile, we make use of the E-L equation:
    \begin{align*}
        w'' &= -\frac{1}{4} \cos(w) u(t) \\
        w''' &= -\frac{1}{4} \cos(w) u'(t) + \frac{1}{4} u(t) \sin(w) w'
    \end{align*}
    And substitute $w'',w'''$ in the expression of $p'''$:
    \begin{align*}
        p'''=\frac{1}{2} \cos(w) ( u(t)\sin(w) w'  - \frac{1}{4} u'(t) \cos(w) -(w')^3 )
    \end{align*}
    
    By our assumption, there exists $t_u$ such that $u(t) <0$ if $t \in[0,t_u)$ and $u(t) >0$ if $t \in (t_u,1]$. 
   Meanwhile, by our previous result, $u$ being strictly single-crossing implies $w$ is strictly increasing on $[0,1]$ and $w'(t)>0$ on $(0,1)$. We divide our discussion into 3 cases:

   \noindent \textbf{Case I: $w$ crosses 0:}
   If there also exists a $t_w$ such that $w(t) <0$ if $t \in[0,t_w)$ and $w(t) >0$ if $t \in (t_w,1]$.
    
    On $[0,\min\{t_u,t_w\})$, we have 
    \[
    p'' = -\frac{1}{8} \overset{> 0}{\cos^2(w)} \overset{< 0}{u(t)} - \frac{1}{2} \overset{<0}{\sin(w)} \overset{> 0}{(w')^2} > 0
    \]

    Similarly on $(\max\{t_u,t_w\},1]$, we have:
    \[ p'' = -\frac{1}{8} \overset{> 0}{\cos^2(w)} \overset{> 0}{u(t)} - \frac{1}{2} \overset{> 0}{\sin(w)} \overset{> 0}{(w')^2} < 0 \]

    On $(\min\{t_u,t_w\}, \max\{t_u,t_w\})$, we know $u(t)\sin(w(t)) <0$ and hence:
    \[
    p'''=\frac{1}{2} \overbrace{\cos(w)}^{> 0} ( \overbrace{u(t)\sin(w)}^{<0} \overbrace{w'}^{> 0}  - \frac{1}{4} \overbrace{u'(t)}^{\geq 0} \overbrace{\cos(w)}^{> 0} -\overbrace{(w')^3}^{> 0} ) < 0
    \]
    Since $p''(\min\{t_u,t_w\}) \geq 0$ and $p''(\max\{t_u,t_w\}) \leq 0$,  
    and from above we know $p''$ decreases on $(\min\{t_u,t_w\}, \max\{t_u,t_w\})$. Therefore, there exists a $\tau$ such that $p''(t) \geq 0$ if $t \in [0,\tau)$ and  $p''(t) \leq 0$ if $t \in (\tau,1]$.

    \noindent \textbf{Case II: $w$ stays above 0:}
    Suppose $w(t) \geq 0$ for all $t$.
    Firstly we notice that $p''$ is continuous almost everywhere because boundedness of $u$ implies $w'$ is Lipschitz-continuous and $u$ being continuous almost everywhere implies $w''$ is also continuous almost everywhere. Meanwhile, since $w'(0)=0 \Rightarrow p''(0)=-\frac{1}{8} \cos^2(w(0))u(0)>0$, and $w'(1)=0 \Rightarrow p''(1_{-})=-\frac{1}{8} \cos^2(w(1))u(1)<0$, we can assume \[
    t_{p} = \sup\{s: p''(t) >0 \text{ on } [0,s)\}
    \]
  On $[t_u,1]$, we have:
    \[p''=-\frac{1}{8}\overset{> 0}{\cos^2(w)} \overset{> 0}{u(t)} - \frac{1}{2} \overset{> 0}{\sin(w)}\overset{> 0}{(w')^2}<0\]
    this implies that $t_p < t_u$
  On $[0,t_u)$, since $u(t)<0$ and $\sin(w)>0$ we again have: 
   \[
    p'''=\frac{1}{2} \overbrace{\cos(w)}^{> 0} ( \overbrace{u(t)\sin(w)}^{<0} \overbrace{w'}^{> 0}  - \frac{1}{4} \overbrace{u'(t)}^{\geq 0} \overbrace{\cos(w)}^{> 0} -\overbrace{(w')^3}^{> 0} ) < 0
    \]
    So $p''$ decreases on $[0, t_u)$. Therefore, as we have assumed, $p''(t) \geq 0$ if $t \in [0,t_p)$, and $p''(t) \leq 0$ if $t \in (t_p,1]$.
    
    \noindent \textbf{Case III: $w$ stays below 0:}
    On $[0,t_u]$, we have:
    \[p''=-\frac{1}{8}\overset{> 0}{\cos^2(w)} \overset{< 0}{u(t)} - \frac{1}{2} \overset{< 0}{\sin(w)}\overset{> 0}{(w')^2}>0,
    \]
     Similarly, we assume \[
    t_{p} = \sup\{s: p''(t) >0 \text{ on } [0,s)\},
    \] and we must have $t_p>t_u$.
    
    On $(t_u,1]$, $u(t)\sin{(w)} <0$ we have:
    \[
    p'''=\frac{1}{2} \overbrace{\cos(w)}^{> 0} ( \overbrace{u(t)\sin(w)}^{<0} \overbrace{w'}^{> 0}  - \frac{1}{4} \overbrace{u'(t)}^{\geq 0} \overbrace{\cos(w)}^{> 0} -\overbrace{(w')^3}^{> 0} ) < 0
    \]
    Hence, we know $p''(t) \geq 0$ if $t \in [0,t_p)$, and $p''(t) \leq 0$ if $t \in (t_p,1]$.

    More generally, we know $u$ is of bounded variation because it is increasing, and hence can be considered as a Radon measure. By Lebesgue's Decomposition theorem, we can write $u = u_a + u_s$, where $u_a$ is the absolutely continuous part and has derivative almost everywhere on $[0,1]$, while $u_s$ is the singular part, capturing the countable discontinuities. Furthermore, in the sense of the distributional derivative
    \footnote{That is, for any smooth function $\phi$ supported in $[0,1]$, i.e. $\phi \in \mathcal{C}^{\infty}_{c}([0,1])$, we have:
    \begin{align*}
         \int_{0}^{1} \phi(t) du(t) = \int_{0}^{1} \phi(t) u_a'(t)dt + \int_{0}^{1} \phi(t) du_s(t)
    \end{align*}}
    , we have: $du = u'_a dt + du_s$,
    and we know $u'_a \geq 0$ for almost all $t$ and $du_s$ is a non-negative measure on $[0,1]$. 

    Meanwhile, our previous regularity result implies that $w'$ is Lipschitz-continuous and hence $w''$ exists almost everywhere, and we still have:
    \[
    p'' = \frac{1}{2} \cos(w) w''  -\frac{1}{2} \sin{w} (w')^2
    \]
    
    Now, for higher-order derivatives, we have:
    \begin{align*}
            d w'' &= \frac{1}{4} u  \sin{(w)} \, w'\, dt - \frac{1}{4} \cos{(w)} \,du \\
        dp'' &= \frac{1}{2} \cos(w) (u \sin(w) w'dt -\frac{1}{4}\cos(w) du(t) -(w')^3 dt)
    \end{align*}
    since $du$ is a non-negative measure, we know $dp''$ is a non-positive measure restricted on the interval $(\min\{t_u,t_w\}, \max\{t_u,t_w\})$ for the case where $w$ crosses $0$; on the interval $[t_u,1]$ when $w$ stays above $0$; on the interval $[0,t_u]$ when $w$ stays below $0$. Hence, $p''$ is weakly decreasing on the interval in each of the above cases.

\subsection{Proof of Proposition \ref{prop:scale-u}} \label{pf:prop-scale-u}
\begin{proof}
    Denote $\rho_1 =\frac{1}{\lambda_1}$ and $\rho_2 =\frac{1}{\lambda_2}$. We assume that $\rho_1 > \rho_2 >0$.
    
   \noindent \textbf{Proving $\bm{p'_{*,1}(\frac{1}{2})>p'_{*,2}(\frac{1}{2})}$:} By symmetry, we can solve our boundary value problem (BVP) \[\begin{cases}
    w''(t) = -\frac{\rho}{4} u(t) \cos(w(t)) \text{ a.e. on } [0,1] \\
    w'(0) = w'(1) = 0
    \end{cases} \]
    through shooting method with the initial value problem IVP on $[\frac{1}{2},1]$:
    \[\begin{cases}
     w''(t) = -\frac{\rho}{4} u(t) \cos(w(t)) \text{ a.e. on } [\frac{1}{2},1] \\
     w(\frac{1}{2}) =0\\
    w'(\frac{1}{2}) = m
    \end{cases} \]
    that is, we want to find an initial speed $m$ at $t=\frac{1}{2}$ such that the solution to the above IVP satisfy $w'(1)=0$. 
    Again, since we only assume $u$ to be piecewise continuous, the function $F(t,y,\rho)$ may not be continuous in $t$, and the solution is in the sense of \textbf{Carath\'{e}odory}. Our argument relies on results about differentiable dependence of the Carath\'{e}odory solution's in parameter, which can be found in \citet[Theorem 2.1]{SM2000} or \citet[Thorem 3.1]{KS2011}.

    Denote $V(m,\rho) := w'(1;m,\rho)$ to be the terminal speed of the solution corresponding to the IVP with parameter $(m,\rho)$. We will first show that $V$ is strictly increasing in $m$ and strictly decreasing in $\rho$.

    (\textbf{$\bm{V(m,\rho)}$ is strictly increasing in $\bm{m}$:}) Denote $\mu(t):=\frac{\partial}{\partial m} w(t;m,\rho)$, by our differential dependence lemma, we have the following variational equation:
        \[
    \begin{cases}
         \mu(\frac{1}{2})&=0 \\
          \mu'(\frac{1}{2}) &=1 \\
          \mu''(t) &=\frac{\rho}{4} u(t) \sin(w(t;m,\rho))) \mu(t)
    \end{cases}
    \]
    Meanwhile, by our assumption on the payoff that $u(t)\geq 0$ on $[\frac{1}{2},1]$ and our previous result that $w(t;m,\rho) \in [0,\frac{\pi}{2})$ on $[\frac{1}{2},1]$ implies $\sin(w(t;m,\rho)) \geq 0$ on $[\frac{1}{2},1]$. We can prove $\mu'(t)>0$ on $[\frac{1}{2},1]$ by proof of contradiction: Let 
    \[
    \tau :=\sup\{t \in[\frac{1}{2},1]: \mu'(s) >0 \text{ for all } s \in[\frac{1}{2},t] \}
    \]
    we know $\tau \geq \frac{1}{2}$ because $\mu'(\frac{1}{2})=1>0$,
    and now suppose $\tau <1$, then by continuity of $\mu'(t)$ in $t$, we know $\mu'(\tau)=0$. However, on $[\frac{1}{2},\tau]$, we have $\mu'(t)>0$ and $\mu(\frac{1}{2})=0$ together imply $\mu(t) >0$ on $[\frac{1}{2},\tau]$. Therefore, the second derivative $\mu''(t)=\frac{\rho}{4} u(t) \sin(w(t;m,\rho))) \mu(t) \geq 0$ on $[\frac{1}{2},\tau]$ as well, implying $\mu'(\tau) = \mu'(\frac{1}{2})+ \int_{\frac{1}{2}}^{\tau} \mu''(s) ds \geq 1 >0$, a contradiction. As a result, we know $\frac{\partial}{\partial m} V(m,\rho) :=\frac{\partial}{\partial m} w'(1;m,\rho) >0$.
    
    \medskip
    
    (\textbf{$\bm{V(m,\rho)}$ is strictly decreasing in $\bm{\rho}$:}) Denote $\nu(t):=\frac{\partial}{\partial \rho} w(t;m,\rho)$, we have:
    \[
    \begin{cases}
         \nu(\frac{1}{2})&=0 \\
          \nu'(\frac{1}{2}) &= 0 \\
          \nu''(t) &= -\frac{1}{4} u(t) \cos(w(t;m,\rho)) + \frac{\rho}{4} u(t) \sin(w(t;m,\rho)) \nu(t)
    \end{cases}
    \]
    We will prove that $\nu'(t) \leq 0 \text{ for all } s \in[\frac{1}{2},1]$ again by contradiction. Let 
    \[
    \tau :=\inf\{t \in[\frac{1}{2},1]: \nu'(t) > 0 \}
    \] 
    and now suppose $\tau <1$, then by continuity of $\nu'(t)$ in $t$ we know $\nu'(\tau)=0$, and by the definition of infimum we know, for any $\epsilon>0$, there exists some $t \in (\tau,\tau+\epsilon)$ such that $\nu'(t)>0$, in contradiction with the second derivative being negative at $\tau$, i.e. 
    \[ \nu''(\tau) =-\frac{1}{4} \overbrace{u(\tau)}^{\geq 0} \overbrace{\cos(w(\tau;m,\rho))}^{\geq 0} + \frac{\rho}{4} \overbrace{u(\tau)}^{\geq 0}  \overbrace{\sin(w(\tau;m,\rho))}^{\geq 0} \overbrace{\nu(\tau)}^{\leq 0} \leq 0.\]
    We know $\nu(\tau) = \int_{\frac{1}{2}}^{\tau} v'(s) ds \leq 0$ because we assumed $\nu'(s) \leq 0$ on $[\frac{1}{2},\tau)$.
    As a result, we showed $\nu'(t) \leq 0 \text{ for all } s \in[\frac{1}{2},1]$ , hence $\frac{\partial}{\partial \rho} V(m,\rho) :=\frac{\partial}{\partial \rho} w'(1;m,\rho) =\nu'(1) \leq 0$.
    More specifically, as long as $u(t)$ is positive on a positive measure set on $[\frac{1}{2},1]$, we have  $\frac{\partial}{\partial \rho} V(m,\rho) <0$.

    Now, for the solution of the BVP, we know $V(w'_{*,1}(\frac{1}{2}),\rho_1) = V(w'_{*,2}(\frac{1}{2}),\rho_2)=0$, together with our above monotonicity result, $\rho_1 > \rho_2 \Rightarrow w'_{*,1}(\frac{1}{2}) > w'_{*,2}(\frac{1}{2})>0 \Rightarrow  p'_{*,1}(\frac{1}{2})=\frac{1}{2}\cos(0) w'_{*,1}(\frac{1}{2})>  \frac{1}{2}\cos(0) w'_{*,2}(\frac{1}{2})=p'_{*,2}(\frac{1}{2})>0$.

 \bigskip   
\noindent \textbf{Proving $\bm{p_{*,1}(t) < p_{*,2}(t) <\frac{1}{2} \text{ on } [0,\frac{1}{2})}$:} We would again apply the differential dependence result and ideas from maximum principle to the following BVP:
    \[\begin{cases}
     w''(t) = -\frac{\rho}{4} u(t) \cos(w(t)) \text{ a.e. on } [\frac{1}{2},1] \\
     w(\frac{1}{2}) =0\\
    w'(1) = 0
    \end{cases} \] 
    Denote $\nu(t):=\frac{\partial}{\partial \rho} w(t;m,\rho)$, similarly we have:
    \[
    \begin{cases}
         \nu(\frac{1}{2})&=0 \\
          \nu'(1) &= 0 \\
          \nu''(t) &= -\frac{1}{4} u(t) \cos(w(t;m,\rho)) + \frac{\rho}{4} u(t) \sin(w(t;m,\rho)) \nu(t)
    \end{cases}
    \]
    We will prove by contradiction to show that $\nu(t) \geq 0$ on $[\frac{1}{2},1]$. Now suppose there exists $s$ such that $\nu(s)<0$, by continuity, we know there exists some $\tau \in [\frac{1}{2},1]$ where $\min_{t \in [\frac{1}{2},1]} \nu(t) <0$ is achieved at $\tau$. There are two possible cases. 
    
    Case I: if $\tau \in (\frac{1}{2},1)$, then it being a minimum implies $\nu'(\tau)=0$ and $\nu''(\tau)>0$, however, both the term $-\frac{1}{4} u(\tau) \cos(w(\tau;m,\rho))$ and $\frac{\rho}{4} u(\tau) \sin(w(\tau;m,\rho)) \nu(\tau)$ are negative, contradicting to the equality that
    \[ \nu''(\tau) = -\frac{1}{4} u(\tau) \cos(w(\tau;m,\rho)) + \frac{\rho}{4} u(\tau) \sin(w(\tau;m,\rho)) \nu(\tau) \]
    
    Case II: if $\tau=1$, since $\nu'(1)=0$, for $\nu(1)$ to be the minimum,  we need $\nu''(1)\geq 0$. However, by our assumption that $u >0$ on $(1/2,1]$ we have $\nu''(1)<0$ from the equality above, a contradiction. 

    Therefore, we know $\nu(t) \geq 0$ on $[0,1]$ and the inequality is strict when $u$ is non-zero on some set of positive measure. Hence the corresponding solutions to our optimization problem satisfies $w_{*,1}(t)>w_{*,2}(t)>0 \Rightarrow p_{*,1}(t)> p_{*,2}(t)>\frac{1}{2}$ on $(\frac{1}{2},1]$. By symmetry we also have $p_{*,1}(t) < p_{*,2}(t) < \frac{1}{2}$ on $[0,\frac{1}{2})$.
\end{proof}

\section{Existence of interior solution}
In this subsection, we describe a heuristic argument for when the problem admits an interior solution.  Recall the BVP induced by our necessary condition
\begin{equation}
  w''(t) \;=\; -\frac{u(t)}{4}\cos\bigl(w(t)\bigr),
  \qquad w'(0)=w'(1)=0.
  \label{eq:BVP}
\end{equation}
The goal is to decide whether the above equation admits a solution satisfying  $-\frac{\pi}{2}<w(t)<\frac{\pi}{2}$ for all $t$. 
And recall our assumption that $u$ is piecewise continuous, increasing, and
crosses zero once at $t_u\in(0,1)$, so $u<0$ on $[0,t_u)$ and
$u>0$ on $(t_u,1]$.

The idea is to treat the problem as a two-sided shooting and phase-matching problem at $t_u$.
More precisely, for each initial value $w_0 \in[-\pi/2,\pi/2]$ we consider
the forward IVP
\[
 w''(t)=-\tfrac{u(t)}{4}\cos w(t),\quad
 w(0)=w_0,\; w'(0)=0,\qquad t\in[0,t_u],
\]
and for each terminal value $w_1\in[-\pi/2,\pi/2]$ we also consider the backward IVP
\[
 w''(t)=-\tfrac{u(t)}{4}\cos w(t),\quad
 w(1)=w_1,\; w'(1)=0,\qquad t\in[t_u,1].
\]
Because the right-hand side is globally Lipschitz in $(w,w')$, the solution at time $t_u$ depends continuously on the initial value. Varying $w_0$ therefore traces out a continuous curve in the phase plane $(w'(t_u),w(t_u))$, and varying $w_1$ traces out another one. Meanwhile, to keep the solution strictly inside the constraint, we would focus on the segment of the curve with $w(t_u)$ within the boundary.  For valid forward shots we obtain a \emph{forward phase curve}
$\Gamma_-$; valid backward shots give a \emph{backward phase curve} $\Gamma_+$.  An interior solution of \eqref{eq:BVP} corresponds exactly to an intersection point of $\Gamma_-$ and $\Gamma_+$.

\begin{example} 
    To build intuition, let us start with a symmetric, piecewise-constant example where
\[
  u(t) = \begin{cases}
    -800 & t\in[0,\tfrac12),\\
    800  & t\in(\tfrac12,1],
  \end{cases}
\]
The phase curve in $(w'(t_u),w(t_u))$-plane is in Fig . Note that only the highlighted part (red and purple part) is valid as they represent the choice of $w_0$ or $w_1$ that ensure make $w$ not hitting the boundary; and not surprisingly, the two curves intersect at $w(t_u)=0$ by symmetry, this example shows that symmetry of $u$ produces an interior solution.

\begin{figure}[H]
    \centering
    \includegraphics[width=\linewidth]{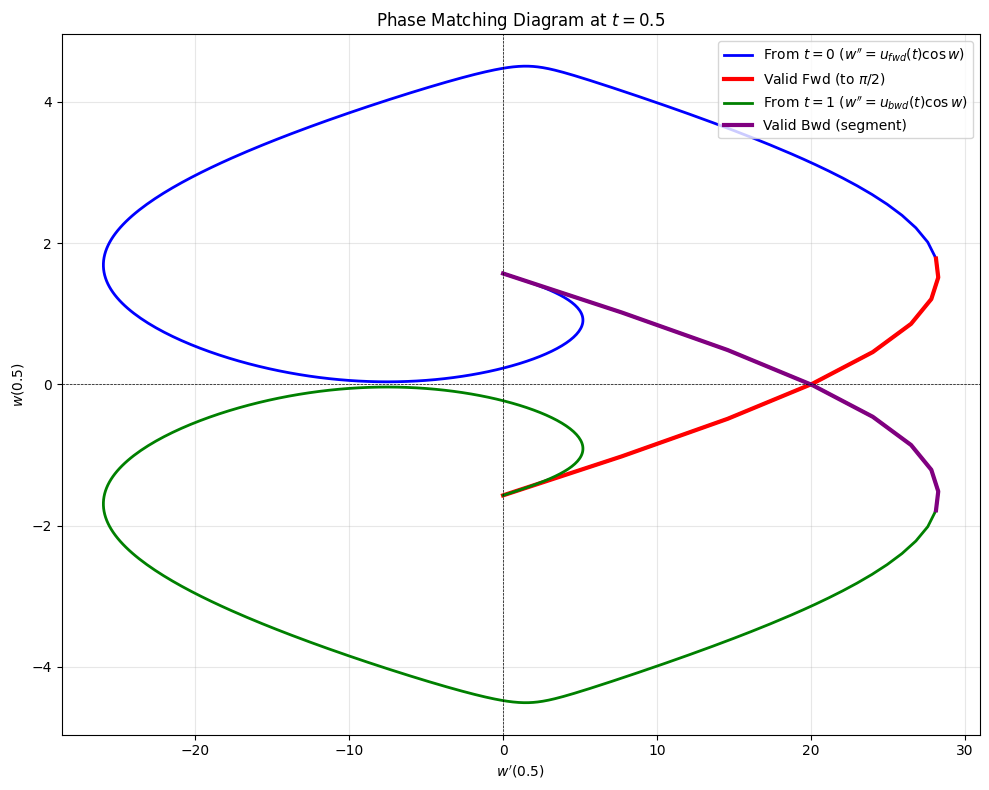}
    \label{fig:ext_eg-1}
\end{figure}

\end{example}

For exploration, we will first focus on the IVP with a constant coefficient:
\[
  w''(t) = A\cos w(t),\qquad
  w(0)=w_1,\; w'(0)=0,\quad t\in[0,\tfrac{1}{2}],
\]
Two qualitatively different regimes appear in the numerics:
\begin{example}
  When $A$ is small (e.g.\ $A=5$), the solution never hits the boundary
  before time $t=\tfrac12$, no matter how close $w_0$ is to $\pi/2$.
  The plot of $w(\tfrac12)$ as a function of $w_0$ approaches $\pi/2$ as
  $w_0\uparrow\pi/2$, but never crosses it, and the entire shooting curve
  stays inside the strip $|w|<\pi/2$.
The upshot is that, if the forcing on a half interval is too weak, then all shots remain interior and the corresponding phase curve does not reach the opposite boundary.

\begin{figure}[H]
    \centering
    \includegraphics[width=\linewidth]{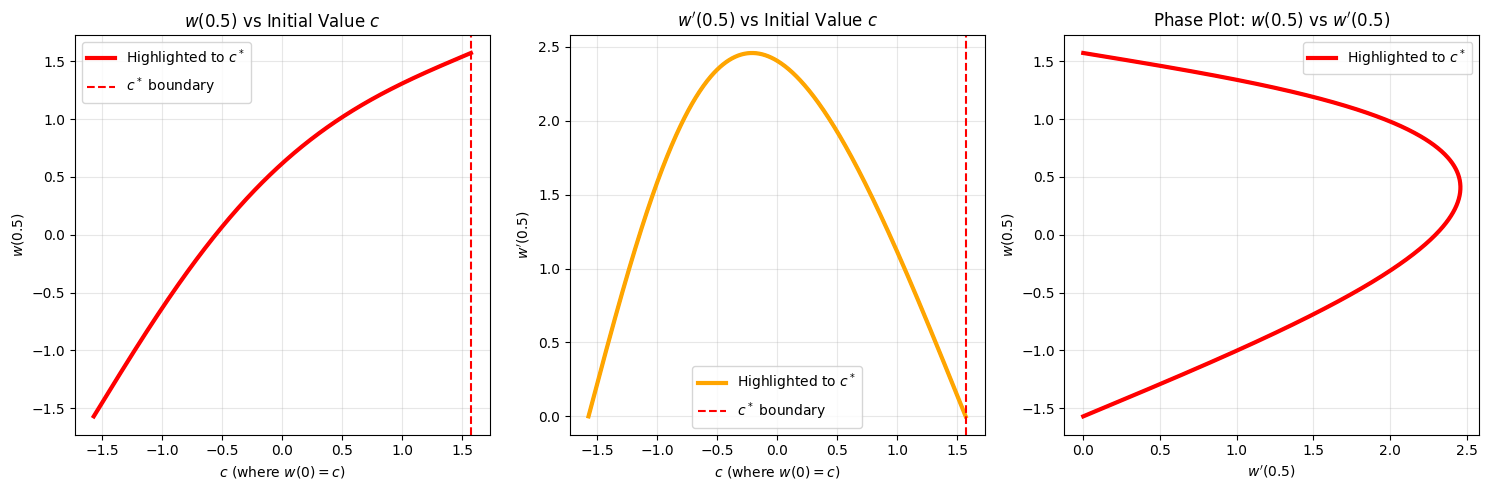}
    \label{fig:ext_eg-2}
\end{figure}
\end{example}

\begin{example}
      When $A$ is large (e.g.\ $A=50$), there exists a critical initial
  value $w_0^\ast<\pi/2$ such that the solution with $w(0)=w_0^\ast$ hits
  the boundary exactly at time $t=\tfrac12$, and any larger $w_0$ leads to
  a boundary hit before $\tfrac12$.
  The valid part of the shooting curve (for $w_0\le w_0^\ast$) now starts
  near $(0,-\pi/2)$ and terminates at a boundary point
  $(v,\pi/2)$ with $v>0$.
  Thus, once the forcing is strong enough, the image of the admissible
  shots along a half interval connect the lower and upper edges of the
  phase strip.

  \begin{figure}[H]
    \centering
    \includegraphics[width=\linewidth]{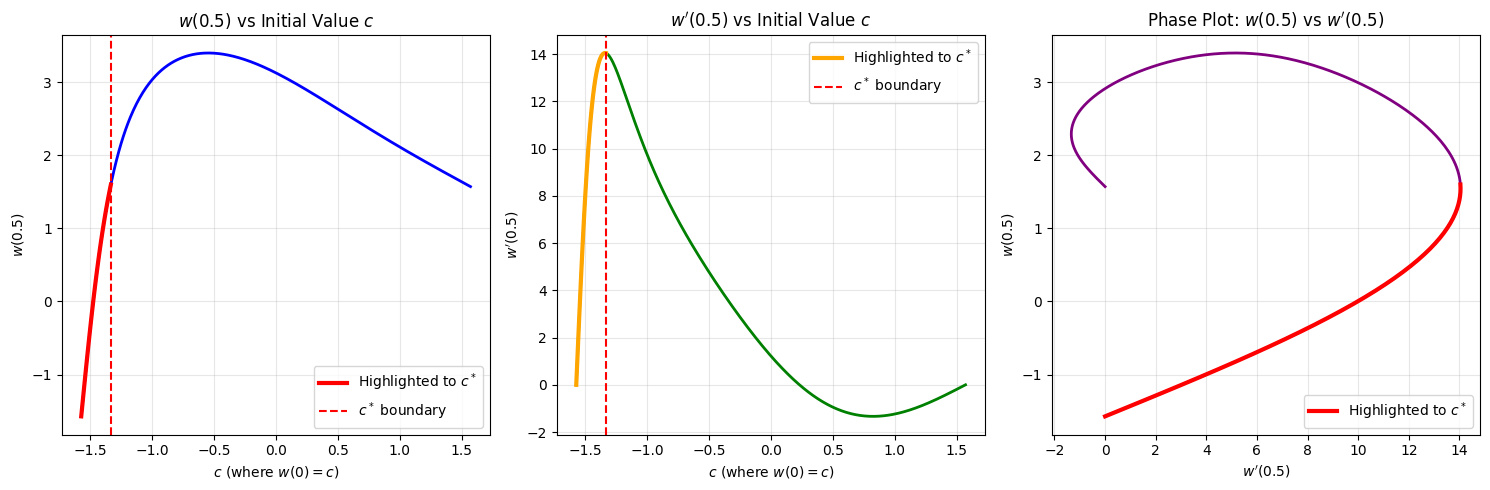}
    \label{fig:ext_eg-2}
\end{figure}
\end{example}

These examples suggest that when both sides are strong,  the valid forward phase curve $\Gamma_-$ connects
  $(0,-\pi/2)$ to some point $(v_f,\pi/2)$ with $v_f>0$, while the valid
  backward phase curve $\Gamma_+$ connects $(0,\pi/2)$ to some point
  $(v_b,-\pi/2)$ with $v_b>0$.
  One curve runs from bottom to top, the other from top to bottom, and both stay in the open strip except at their endpoints.
  By continuity they must cross in the interior, and numerically there
  is a unique intersection point.  This corresponds to an interior
  solution of \eqref{eq:BVP}.
  (See the example with $u=-1200$ on $[0,\tfrac12)$ and $u=80$ on
  $(\tfrac12,1]$.)
  \begin{figure}[H]
    \centering
    \includegraphics[width=\linewidth]{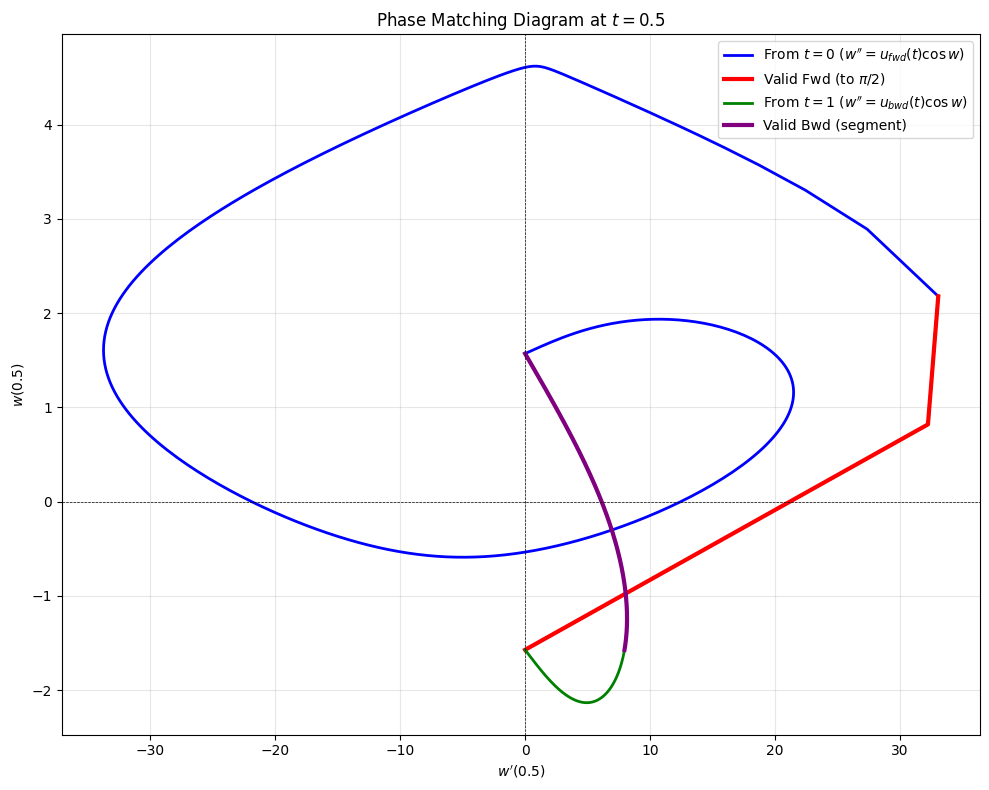}
    \label{fig:ext_eg-2}
\end{figure}

When one side is large while the other is small (e.g.\ $u=-120$ and $u=8$),
  the forward curve $\Gamma_-$ still runs from $(0,-\pi/2)$ towards the
  upper boundary and acquires a substantial horizontal component.
  In contrast, the backward curve $\Gamma_+$ is generated by relatively
  weak forcing on $(\tfrac12,1]$ and stays close to the vertical axis
  $w'(t_u)=0$, connecting $(0,-\pi/2)$ to $(0,\pi/2)$ without moving far
  in the horizontal direction.
  In this configuration the two curves touch at the lower boundary point
  $(0,-\pi/2)$ but need not cross in the interior, and in the numericalexample no interior intersection appears.

  \begin{figure}[H]
    \centering
    \includegraphics[width=\linewidth]{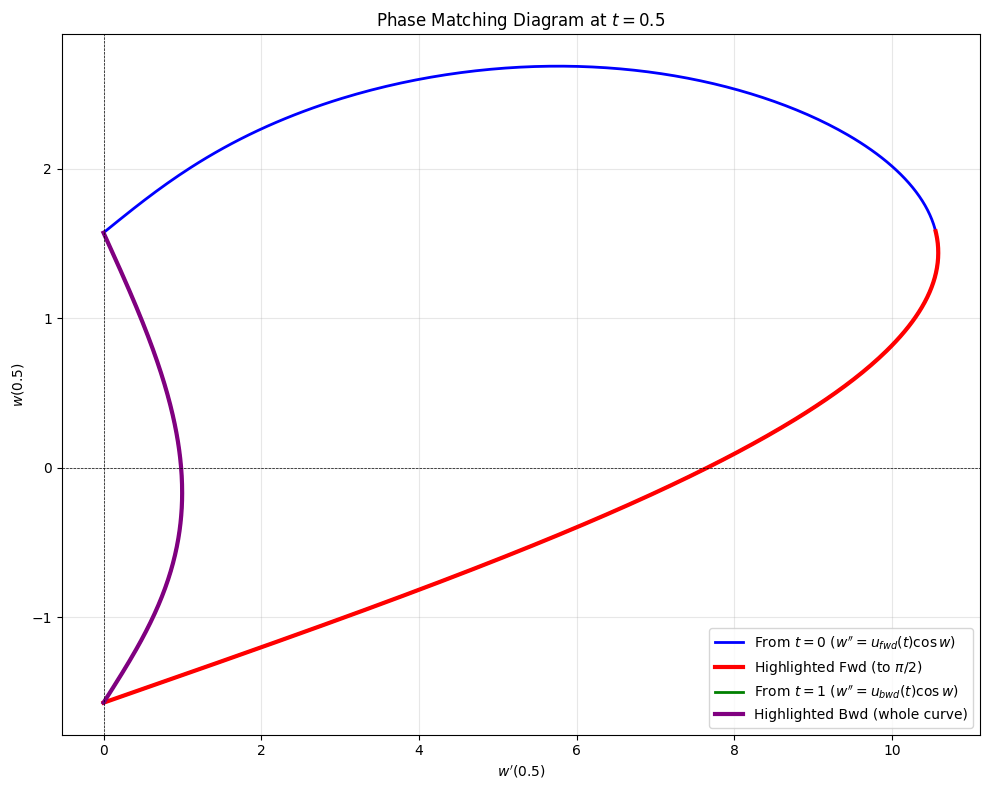}
    \label{fig:ext_eg-2}
\end{figure}

\newpage 
\section{Other proofs}
\subsection{Fisher information cost is Blackwell monotone} \label{pf:Blackwell}
Consider two experiments, both with finite signal space $\sigma: \Theta \to S$ and $\hat{\sigma}: \Theta \to \hat{S}$. Suppose $\sigma$ is more informative than $\hat{\sigma}$ in the sense of \citet{B1951}, that is, there exists a state-independent garbling $K: S \to \Delta(\hat{S})$, such that $\hat{\sigma}(\hat{s}|\theta) = \sum_{s \in S} K(\hat{s}|s) \sigma(s|\theta).$ We will show that experiment $\hat{S}$ is cheaper by showing that the Fisher information is lower point-wise, that is $\mathcal{I}_F( \{ \hat{\sigma}_{\theta} \}) \leq \mathcal{I}_F( \{ \sigma_{\theta} \})$
for each $\theta \in \Theta$.

To see this, recall that the Fisher information is the variance of the \textit{score}, and the score is the partial derivative with respect to $\theta$ of the log-likelihood:
$\frac{\partial}{\partial \theta} \log{\sigma(s|\theta)}$. This is because expectation of the score is $0$:
\[
E_{s|\theta}[\frac{\partial}{\partial \theta} \log{\sigma(s|\theta)}]
=\sum_{s \in S} (\frac{\partial}{\partial \theta} \log{\sigma(s|\theta)}) \sigma(s|\theta) 
=\sum_{s \in S} \frac{\frac{\partial}{\partial \theta} \sigma(s|\theta)}{\sigma(s|\theta)} \sigma(s|\theta) 
=\frac{\partial}{\partial \theta}(\sum_{s \in S} \sigma(s|\theta))=0
\]

Meanwhile, under the garbling, the score for $\hat{\sigma}(\hat{s}|\theta)$ is the conditional expectation of the score for $\sigma(s|\theta)$:
\begin{align*}
    \frac{\partial}{\partial \theta} \log{\hat{\sigma}(\hat{s}|\theta)} 
    &= \frac{\frac{\partial}{\partial \theta}\hat{\sigma}(\hat{s}|\theta) }{\hat{\sigma}(\hat{s}|\theta)} \\
    &= \frac{\partial}{\partial \theta}(\sum_{s \in S} K(\hat{s}|s) \sigma(s|\theta) ) \frac{1}{\hat{\sigma}(\hat{s}|\theta)} \\
    &=\sum_{s \in S}  (\frac{\partial}{\partial \theta}\sigma(s|\theta)) \frac{K(\hat{s}|s) }{\hat{\sigma}(\hat{s}|\theta)}\\
    &=\sum_{s \in S}  \frac{\frac{\partial}{\partial \theta}\sigma(s|\theta)}{\sigma(s|\theta)}
    \frac{K(\hat{s}|s) \, \sigma(s|\theta)}{\hat{\sigma}(\hat{s}|\theta)}\\
    &=\sum_{s \in S}  \frac{\frac{\partial}{\partial \theta}\sigma(s|\theta)}{\sigma(s|\theta)} Pr[s|\hat{s},\theta]\\
    &=\sum_{s \in S} \frac{\partial}{\partial \theta} \log{\sigma(s|\theta)} Pr[s|\hat{s},\theta]=E_{s|\theta}[\frac{\partial}{\partial \theta} \log{\sigma(s|\theta)}|\hat{s}] 
\end{align*}

By the \textbf{Law of total variance}, the Fisher information for $\hat{\sigma}(\theta)$ is:
\begin{align*}
\mathcal{I}_{F}(\hat{\sigma}(\theta)) 
&= \text{Var}_{\hat{s}|\theta}[E_{s|\theta}[\frac{\partial}{\partial \theta} \log{\sigma(s|\theta)}|\hat{s}] ] \\
&=\text{Var}_{s|\theta}[\frac{\partial}{\partial \theta} \log{\sigma(s|\theta)}] - 
E_{\hat{s}|\theta}[\text{Var}_{s|\theta}[\frac{\partial}{\partial \theta} \log{\sigma(s|\theta)}|\hat{s}]] \\
&\leq \text{Var}_{s|\theta}[\frac{\partial}{\partial \theta} \log{\sigma(s|\theta)}] = \mathcal{I}_{F}(\sigma(\theta)) 
\end{align*}

Therefore, the Fisher information at every state decreases under garbling and so does the averaged Fisher information.

\subsection{Fisher information cost is uniformly posterior separable} \label{pf:UPS}
As mentioned in the model section, the cost can be defined over any action space $A$ of finite size, and we start with
\[
C_{FI}(p,\pi) = \int_{\Theta} \pi(t) \sum_{a \in A} \frac{[\frac{d}{dt}\sigma(a|t)]^2}{\sigma(a|t)} dt
\]
where $\pi \in \Delta(\Theta)$ is a prior with full support, and $\sigma(a|t)$ denotes the probability of receiving a signal recommending action $a$ at state $t$. We require $\sigma(a|t)>0$ for each $a \in A$ and almost all $t$, and the actions that are assigned with zero probability across all states are dropped from $A$. 

We denote $\sigma(a)= \int_{\Theta} \sigma(a|t) \pi(t) dt$ as the ex-ante probability of choosing $a$, and $\sigma(\cdot|a) \in \Delta(\Theta)$ as the posterior over the states after receiving a recommendation for action $a$. More specifically, we have $\sigma(t|a) = \frac{\sigma(a|t) \pi(t)}{\sigma(a)}$, and $\sum_a \sigma(a|t)=1$. Also by chain rule \begin{align*}
    \frac{d}{dt}\sigma(t|a) & = \frac{1}{
\sigma(a)}[ \frac{d}{dt}\sigma(a|t) \cdot \pi(t)+ \frac{d}{dt}\pi(t) \cdot \sigma(a|t)] \\
\Rightarrow \frac{d}{dt}\sigma(a|t) & =
\frac{1}{\pi(t)} [ \frac{d}{dt} \sigma(t|a) \cdot \sigma(a) - \frac{d}{dt} \pi(t) \cdot \sigma(a|t) ]
\end{align*}
We have:
\begin{align*}
    C_{FI}(p,\pi) &= \int_{\Theta} \pi(t) \sum_{a \in A} \frac{[\frac{d}{dt}\sigma(a|t)]^2}{\sigma(a|t)} dt \\
    & = \int_{\Theta} \pi(t) \sum_{a \in A} \frac{1}{\sigma(a|t)} \frac{1}{\pi^2(t)} [ \frac{d}{dt} \sigma(t|a) \cdot \sigma(a) - \frac{d}{dt} \pi(t) \cdot \sigma(a|t) ]^2 dt \\
    &= \int_{\Theta}  \sum_{a \in A} \frac{1}{\sigma(a|t)\pi(t) }[(\frac{d}{dt}\sigma(t|a))^2 \sigma^2(a) 
    -2(\frac{d}{dt}\sigma(t|a)) \sigma(a) \frac{d}{dt}\pi(t) \sigma(a|t) \\
    & \quad \quad + (\frac{d}{dt}\pi(t))^2 \sigma^2(a|t)
    ] \\
    & =\int_{\Theta} \sum_a \sigma(a) \frac{(\frac{d}{dt}\sigma(t|a))^2}{\sigma(t|a)}
    - 2\frac{\frac{d}{dt} \pi(t) \cdot \frac{d}{dt} \sigma(t|a)}{\pi(t)} \sigma(t|a)
    + \frac{\sigma(a|t)(\frac{d}{dt} \pi(t))^2}{\pi(t)} dt \\
    &=\int_{\Theta} \sum_a \sigma(a) \frac{(\frac{d}{dt}\sigma(t|a))^2}{\sigma(t|a)}
    - 2\frac{ \sigma(a|t)(\frac{d}{dt}\pi(t))^2}{\pi(t)} 
    - 2(\frac{d}{dt}\pi(t)) (\frac{d}{dt}\sigma(a|t)) 
    + \frac{\sigma(a|t)(\frac{d}{dt} \pi(t))^2}{\pi(t)} dt \\
    &=\sum_a \sigma(a) \int_{\Theta} \frac{(\frac{d}{dt}\sigma(t|a))^2}{\sigma(t|a)} - \frac{(\frac{d}{dt}\pi(t)^2)}{\pi(t)} dt
\end{align*}
The second-to-last equality uses the chain rule to expand the middle term \[ \frac{d}{dt}\sigma(t|a)  = \frac{1}{
\sigma(a)}[ \frac{d}{dt}\sigma(a|t) \cdot \pi(t)+ \frac{d}{dt}\pi(t) \cdot \sigma(a|t)] \]
The last equality holds because 
\[
\int_{\Theta} \sum_a (\frac{d}{dt}\pi(t)) (\frac{d}{dt}\sigma(a|t))  dt = \int_{\Theta} (\frac{d}{dt}\pi(t)) (\frac{d}{dt} \sum_a \sigma(a|t))  dt = \int_{\Theta} (\frac{d}{dt}\pi(t)) (\frac{d}{dt} 1)  dt = 0 
\]
and
\[
\int_{\Theta} \sum_a \frac{\sigma(a|t)(\frac{d}{dt} \pi(t))^2}{\pi(t)} dt = \int_{\Theta} \frac{(\frac{d}{dt} \pi(t))^2}{\pi(t)} dt = \sum_a \sigma(a) \int_{\Theta} \frac{(\frac{d}{dt} \pi(t))^2}{\pi(t)} dt 
\]
So, there exists a potential function $F: \Delta(\Theta) \mathbb{R}$ with the expression:
\[
F(q) = \int_{\Theta} \frac{q'(t)^2}{q(t)}dt
\]
Such a potential function doesn't depend on the prior and is convex because its integrand $\frac{(x')^2}{x}$ is convex.
And we may express Fisher information cost as the expected difference between the potential evaluated at the posterior and prior:
\[C_{FI}(p,\pi) = 
E_{\sigma|\pi}[F(\sigma(\cdot|a)) - F(\pi)] = \sum_a \sigma(a) F(\sigma(\cdot|a))-F(\pi)
\]

\subsection{Justification for the normalizations} \label{pf:normalize}
Our binary choice problem can be formally represented by the tuple $\langle \Theta, u, \pi, \lambda \rangle$ of state space, payoff, prior and scaling parameter:
\[
    \underset{p \in AC(\Theta,[0,1])}{\text{maximize}} \int_{\Theta} \pi(t) u(t) p(t) - \lambda  \pi(t)  \frac{(p'(t))^2}{p(t)(1- p(t))} dt.
\]
Through appropriate rescaling and reparameterization of the payoff function $u(t)$, we can normalize the scaling parameter to $\lambda = 1$
normalize the state space to $[0,1]$. It is without loss of generality in the sense that the solution to any problem of the form $\langle \Theta, u, \pi, \lambda \rangle$ can be recovered from the solution to a normalized problem $\langle [0,1], \tilde{u}, \mathbf{1}, 1 \rangle$, for some appropriately transformed utility function $\tilde{u}$ 
and assume the prior $\pi$ is uniform. We provide justification in the following.

\noindent \textbf{Assume the scaling parameter is $1$:}
Scaling the payoff function by a constant factor $\frac{1}{\lambda}$ results in an equivalent optimization problem, that is, 
\begin{align*}
       \underset{p \in AC(\Theta,[0,1])}{\text{maximize}} \int_{\Theta} \pi(t) u(t) p(t) - \lambda  \pi(t)  \frac{(p'(t))^2}{p(t)(1- p(t))} dt  \\
    \Leftrightarrow \underset{p \in AC(\Theta,[0,1])}{\text{maximize}} \int_{\Theta} \pi(t) (\frac{1}{\lambda}u(t)) p(t) -  \pi(t)  \frac{(p'(t))^2}{p(t)(1- p(t))} dt  
\end{align*} 
the two problems yields the same solution.

\noindent \textbf{Assume the state space is $[0,1]$:} Given an original state space $[a,b]$, we can apply the transformation $y = \frac{t - a}{b - a}$. Solving for $p(t)$ in the original problem $\langle [a,b], u(t), \pi(t), \lambda \rangle$ is equivalent to first solving $\tilde{p}(y) := p((b - a)y + a)$ in the reparameterized problem
$\langle [0,1], \tilde{u}(t):=(b-a)u((b-a)y+a), \tilde{\pi}(t):=\pi((b-a)y+a), \frac{\lambda}{(b-a)^2} \rangle$:
\begin{align*}
\underset{\tilde{p} \in AC([0,1],[0,1])}{\text{maximize}} \int_{[0,1]} [\tilde{\pi}(y) \tilde{u}(y) \tilde{p}(y) - \lambda   \tilde{\pi}(y) \frac{ \frac{1}{(b-a)^2} (\tilde{p}'(y))^2}{\tilde{p}(y)(1- \tilde{p}(y))} ] d((b - a)y) \\
\Leftrightarrow \underset{\tilde{p} \in AC([0,1],[0,1])}{\text{maximize}} \int_{[0,1]} \tilde{\pi}(y) \tilde{u}(y) \tilde{p}(y) - \frac{\lambda}{(b-a)^2}   \tilde{\pi}(y) \frac{ (\tilde{p}'(y))^2}{\tilde{p}(y)(1- \tilde{p}(y))} dy,
\end{align*} 
and then recovering $p(t)$ via $p(t) = \tilde{p}\left(\frac{t - a}{b - a}\right)$.

\noindent \textbf{Assume the prior is uniform:} Since $\pi$ has full support, we can define $G(t) = \int_{0}^{t} \frac{1}{\pi(\tau)} \, d\tau$, which is strictly increasing and invertible. The reparameterization $y = G(t)$ transforms the original problem $\langle [a,b], u, \pi, \lambda \rangle$ into the problem \\ $\langle [0,G(b)], (\pi \circ G^{-1})^2 \cdot u \circ G^{-1}, \frac{\mathbf{1}_{[0,G(b)]}}{G(b)}, \lambda \rangle$, denote $\tilde{p}(y) := p(G^{-1}(y))$, we have:
\begin{align*}
 &(\tilde{p}(y))' =\frac{d}{dy}p(G^{-1}(y)) = \frac{d}{dt} p(t)|_{t=G^{-1}(y)} \frac{1}{\frac{d}{dt} G(t)|_{t=G^{-1}(y)}} = p'(t) \pi(t)|_{t=G^{-1}(y)} \\
 \Rightarrow & \underset{\tilde{p} \in AC([0,G(b)],[0,1])}{\text{maximize}} \int_{0}^{G(b)} [\pi(G^{-1}(y)) u(G^{-1}(y)) \tilde{p}(y) - \lambda  \pi(G^{-1}(y))  \frac{(\frac{1}{\pi(G^{-1}(y))})^2(\tilde{p}'(y))^2}{\tilde{p}(y)(1- \tilde{p}(y))} ]  (\pi(G^{-1}(y)) dy) \\
 \Leftrightarrow & \underset{\tilde{p} \in AC([0,G(b)],[0,1])}{\text{maximize}} \int_{0}^{G(b)} (\pi(G^{-1}(y)))^2 u(G^{-1}(y)) \tilde{p}(y) - \lambda   \frac{(\tilde{p}'(y))^2}{\tilde{p}(y)(1- \tilde{p}(y))} dy
\end{align*}
Hence, we can first solve $\tilde{p}(y)$ from the problem $\langle [0,G(b)], (\pi \circ G^{-1})^2 \cdot u \circ G^{-1}, \frac{\mathbf{1}_{[0,G(b)]}}{G(b)}, \lambda \rangle$ and recover the original solution via $p(t) = \tilde{p}(G(t))$.

\newpage 
\section{From Total Information Cost to Fisher information Cost}\label{app:tic2fic}

The set-up of problem consists of three primitives $(\Theta^N,\gamma^N(\cdot,\cdot),p^N)$, where $\Theta=(\underline{\theta},\bar{\theta})$ is a bounded interval in $\mathbb{R}$ and $\Theta^N=\{\theta_1,\theta_2,\cdots,\theta_N\}$ is a finite discretisation of $(\underline{\theta},\bar{\theta})$ such that $\underline{\theta}<\theta_1<\theta_2 < \cdots <\theta_N < \bar{\theta}$;
$\gamma^{N}: \Theta^N \times \Theta^N \to \mathbb{R}$ takes the form of $\gamma^N(\theta_i,\theta_j) = \frac{\mathbf{1}\{|j-i|=1\}}{(\theta_j - \theta_i)^2}$ to reflect the neighborhood structure consisting of only adjacent pairs;
$p^N \in \Delta(\Theta^N)$ is the prior distribution on the discretisation. 

Throughout, we assume the decision maker acquire information through experiment $(S,\hat{\sigma})$, where $S$ is some fixed signal space, and $\hat{\sigma}: \Theta \to \Delta(S)$ is a map from the state space to distribution on the signal space. We use the short-hand notation $\hat{\sigma}_\theta$ to denote the distribution of signal conditional on the state being $\theta$.

On each decision problem$(\Theta^N,\gamma^N(\cdot,\cdot),p^N)$, we assume the cost of producing signals $\sigma^N$, which is the restriction of $\hat{\sigma}$ to $\Theta^N$, 
takes the total information form: 
\[C^N(\sigma^N)=2\sum_{\theta} p^N(\theta) [\sum_{\theta'} \gamma(\theta,\theta') D_{KL}(\sigma^N_\theta,\sigma^N_{\theta'})]\]

Besides that, we also maintain the following assumptions: 
\begin{assumption} (Partition is finer and finer) \label{assa_1}
    \[\underset{N \to \infty}{\lim} \; \underset{1 \leq\leq N+1}{max}(\theta_i -\theta_{i-1}) =0,\] 
    where $\theta_0 = \underline{\theta}$ and $\theta_{N+1} = \bar{\theta}$.
\end{assumption}

\begin{assumption} (Uniform Convergence of priors) \label{assa_2}\\
    The sequence of priors $p^N$ converges uniformly to some density function $p$ on $(\underline{\theta},\bar{\theta})$ in the following sense:
    \[\lim_{N \to \infty} \max_{0 \leq\leq N} \frac{|p^N(\theta_i) - p(\theta_i)\cdot (\theta_{i+1} - \theta_{i})|}{\theta_{i+1} - \theta_i} =0\]
\end{assumption}

To ensure that the KL-divergence is well-defined, we assume:
\begin{assumption} \label{assa_3}
    (Regularity condition for KL-divergence)\\
    For any $\theta,\theta' \in \Theta$, the signals distributions $\sigma_{\theta}, \sigma_{\theta'}$ are absolutely continuous with respect to each other.
\end{assumption}

To ensure the usual properties of Fisher information still hold, we assume: 
\begin{assumption} (Regularity condition for Fisher-information) \\
The signal structure $\sigma: \Theta \to \Delta(S)$ satisfies the following:
    \begin{enumerate}[label=\alph*)]
        \item $\frac{\partial}{\partial \theta}\sigma(s|\theta)$, $\frac{\partial^2}{\partial \theta^2}\sigma(s|\theta)$, and $\frac{\partial^3}{\partial \theta^3}\sigma(s|\theta)$ exists for all $\theta \in \Theta$, for almost all $s \in S$;
        \item for each $\theta \in \Theta$, there exists Lebesgue-integrable function $F(s)$ and $G(s)$ such that $|\frac{\partial}{\partial \theta}\sigma(s|\theta)|<F(s)$ and  $|\frac{\partial^2}{\partial \theta^2}\sigma(s|\theta)|<G(s)$ for all $s \in S$; 
    \end{enumerate}
\label{assa_4}
\end{assumption}

A key step is to recognize the Fisher information $\int_s [\frac{\partial}{\partial \theta} \log (\hat{\sigma}(s|\theta))]^2 d\sigma(s|\theta)$ as the second derivative of KL-divergence, and the remainder term of such quadratic expansion can be bounded uniformly with a strengthened assumption,
which are stated more formally below:
\begin{lemma}
    Suppose  there exists $\delta>0$, such that for any $\theta$ and for any $\tilde{\theta} \in (\theta-\delta,\theta+\delta)$,
    we have $\int_{S}\frac{\partial^3}{\partial \theta^3}\ln(\sigma(s|\tilde{\theta})) \sigma(s|\theta)ds  < M$, for some constant $M$,
    then
    \begin{align*}
    & \lim_{h \to 0} 
    \sup_{\substack{
    \theta_1,\theta_2 \in \Theta \\
    0< \theta_2 - \theta_1 < h
    }
    } \frac{ D_{KL}(\sigto|\sigtt) -  \frac{1}{2}[\int_s [\frac{\partial}{\partial \theta} \log (\sigma(s|\theta))|_{\theta=\theta_1}]^2 \sigma(s|\theta_1) ds](\theta_2-\theta_1)^2}{(\theta_2-\theta_1)^2} =0
    \end{align*} \label{lema_1}
\end{lemma}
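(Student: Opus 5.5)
The plan is a third-order Taylor expansion of $\theta'\mapsto D_{KL}(\sigma_\theta|\sigma_{\theta'})$ around $\theta'=\theta$, with the integral-form remainder controlled uniformly in $\theta$ by the hypothesis on the third derivative of the log-likelihood. Write $h_\Delta:=\theta_2-\theta_1\in(0,h)$. Define
\[
g(\tilde\theta):=D_{KL}(\sigto|\sigma_{\tilde\theta})=\int_S \ln\sigma(s|\theta_1)\,\sigma(s|\theta_1)\,ds-\int_S \ln\sigma(s|\tilde\theta)\,\sigma(s|\theta_1)\,ds .
\]
Only the second term depends on $\tilde\theta$. By the domination conditions in Assumption~\ref{assa_4}(b) and mutual absolute continuity (Assumption~\ref{assa_3}), we may differentiate under the integral sign. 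This gives
\[
g'(\tilde\theta)=-\int_S \partial_\theta\ln\sigma(s|\tilde\theta)\,\sigma(s|\theta_1)\,ds,\qquad g''(\tilde\theta)=-\int_S \partial^2_\theta\ln\sigma(s|\tilde\theta)\,\sigma(s|\theta_1)\,ds,
\]
and similarly for $g'''$.

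\textbf{Step 1: values at $\tilde\theta=\theta_1$.} First, $g(\theta_1)=0$. Next, $g'(\theta_1)=-\int_S\partial_\theta\sigma(s|\theta_1)\,ds=-\partial_\theta\int_S\sigma=0$, the zero-mean score identity already used in Appendix~\ref{pf:Blackwell}. Expanding $\partial^2_\theta\ln\sigma=\partial^2_\theta\sigma/\sigma-(\partial_\theta\sigma/\sigma)^2$ and using $\int_S\partial^2_\theta\sigma=0$ (again by dominated differentiation with $G$) gives
\[
g''(\theta_1)=\int_S[\partial_\theta\ln\sigma(s|\theta_1)]^2\sigma(s|\theta_1)\,ds=\mathcal I_F(\sigto).
\]

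\textbf{Step 2: Taylor remainder.} By Taylor's theorem with Lagrange remainder (or the integral form, to avoid measurability issues in choosing the intermediate point),
\[
D_{KL}(\sigto|\sigtt)=\tfrac12\mathcal I_F(\sigto)\,h_\Delta^2+\tfrac16 g'''(\xi)\,h_\Delta^3,\qquad \xi\in(\theta_1,\theta_2).
\]
Here $g'''(\xi)=-\int_S\partial^3_\theta\ln\sigma(s|\xi)\,\sigma(s|\theta_1)\,ds$. For $h<\delta$ we have $\xi\in(\theta_1-\delta,\theta_1+\delta)$, so the hypothesis bounds $|g'''(\xi)|\le M$ uniformly in $\theta_1,\theta_2$. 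The hypothesis as written is one-sided; I would read it as an absolute-value bound, which is clearly what is intended. Dividing by $h_\Delta^2$, the quantity inside the supremum is at most $\tfrac M6 h_\Delta\le \tfrac M6 h$ in absolute value. Taking the supremum and letting $h\to0$ gives the limit $0$.

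\textbf{Main obstacle.} The delicate point is justifying the three interchanges of $\partial_\theta$ and $\int_S$, and ensuring $g$ is $C^3$ so the Taylor formula applies. Assumption~\ref{assa_4} dominates $\partial_\theta\sigma$ and $\partial^2_\theta\sigma$, but not the log-derivatives weighted by the fixed density $\sigma(\cdot|\theta_1)$. I would handle this by treating the third-derivative hypothesis, required locally uniformly for $\tilde\theta$ within $\delta$, as the integrability input. Concretely, I would write $g''(\tilde\theta)-g''(\theta_1)=\int_{\theta_1}^{\tilde\theta}g'''$ via Fubini, which needs only integrability of $\partial^3_\theta\ln\sigma\cdot\sigma(\cdot|\theta_1)$. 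This yields the remainder bound directly, without differentiating under the integral a third time. For finite $S$, which is the case used in the main text, all interchanges are trivial.
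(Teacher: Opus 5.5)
Your proposal is correct and is essentially the paper's proof. Both run a third-order Taylor expansion in which the linear term vanishes by the zero-mean score identity, the quadratic term becomes $\frac12\mathcal{I}_F(\sigto)(\theta_2-\theta_1)^2$ via $\int_S\partial^2_\theta\sigma=0$, and the cubic remainder is bounded by $M$ uniformly; you also use the same two-sided reading of the hypothesis that the paper tacitly relies on in its final step.

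The one difference is that you expand $g(\tilde\theta)=D_{KL}(\sigto|\sigma_{\tilde\theta})$, whereas the paper expands $\ln\sigma(s|\cdot)$ pointwise in $s$. Your version gives a single $s$-independent intermediate point, so the hypothesis applies literally; the paper's $\theta'$ depends on $s$, which the hypothesis as stated does not cover. The price is interchanging derivative and integral at $\tilde\theta\neq\theta_1$. The cleanest form of your fallback is the integral-form remainder applied pointwise in $s$, followed by Fubini. That needs only joint absolute integrability of $\partial^3_\theta\ln\sigma(s|t)\,\sigma(s|\theta_1)$ on $S\times[\theta_1,\theta_2]$. Phrasing the fallback through $g''$, as you did, still presupposes the lower-order interchanges.
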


\begin{proof}
    First we consider the Taylor expansion of $\ln(\sigma(s|\theta))$ at $\theta=\theta_1$. By the existence of 3rd order derivative, we have
    \begin{align*} \hspace{-1em}
    \ln(\sigma(s|\theta_2)) = \ln(\sigma(s|\theta_1) )
       + \frac{\partial}{\partial \theta}\ln(\sigma(s|\theta))|_{\theta=\theta_1}(\theta_2-\theta_1)
       +\frac{1}{2} \frac{\partial^2}{\partial \theta^2}\ln(\sigma(s|\theta))|_{\theta=\theta_1}(\theta_2-\theta_1)^2 \\
       +\frac{1}{3!} \frac{\partial^3}{\partial \theta^3}\ln(\sigma(s|\theta))|_{\theta=\theta' \in (\theta_1,\theta_2)}(\theta_2-\theta_1)^3
    \end{align*}
    
    Meanwhile, we have: \begin{align*}
        \frac{\partial}{\partial \theta}\ln(\sigma(s|\theta))|_{\theta=\theta_1} &=
        \frac{ \frac{\partial}{\partial \theta}\sigma(s|\theta)|_{\theta=\theta_1}}{\sigma(s|\theta_1)}\\
        \frac{\partial^2}{\partial \theta^2}\ln(\sigma(s|\theta))|_{\theta=\theta_1} &=
                \frac{ \frac{\partial^2}{\partial \theta^2}\sigma(s|\theta)|_{\theta=\theta_1}}{\sigma(s|\theta_1)} -\frac{(\frac{\partial}{\partial \theta}\sigma(s|\theta)|_{\theta=\theta_1})^2}{\sigma(s|\theta_1)^2}
    \end{align*} 
    Under assumption\hyperref[assa_4]{~\ref*{assa_4}} b)  and dominated convergence theorem, we have for $\forall \theta$:
    \begin{align*}
        \int_{s \in S}  \frac{\partial}{\partial \theta}\sigma(s|\theta)ds& =  
        \int_{s \in S}  \lim_{h \to 0}\frac{\sigma(s|\theta+h) -\sigma(s|\theta)}{h}ds\\
        &=\lim_{h \to 0}\int_{s \in S} \frac{\sigma(s|\theta+h) -\sigma(s|\theta)}{h}ds 
        =\frac{\partial}{\partial \theta}\int_{s \in S}\sigma(s|\theta)ds=0\\
        \int_{s \in S}  \frac{\partial^2}{\partial \theta^2}\sigma(s|\theta)ds& =  
        \int_{s \in S}  \lim_{h \to 0}\frac{\frac{\partial}{\partial \theta}\sigma(s|\theta)|_{\theta+h} -\frac{\partial}{\partial \theta}\sigma(s|\theta)|_{\theta}}{h}ds\\
        &=\lim_{h \to 0}\int_{s \in S} \frac{\frac{\partial}{\partial \theta}\sigma(s|\theta)|_{\theta+h} -\frac{\partial}{\partial \theta}\sigma(s|\theta)|_{\theta}}{h}ds 
        =\frac{\partial}{\partial \theta}(0-0)=0
    \end{align*}

    Plugging-in the above result into the definition of $D_{KL}$, we have:
    \begin{align*}
        \hspace{-2em}
        D_{KL}(\sigto|\sigtt)& = \int_{s \in S} \sigma(s|\theta_1)\ln(\frac{\sigma(s|\theta_1)}{\sigma(s|\theta_2)})ds \\
         &=-\int_{s \in S} \sigma(s|\theta_1) [\frac{ \frac{\partial}{\partial \theta}\sigma(s|\theta)|_{\theta=\theta_1}}{\sigma(s|\theta_1)}(\theta_2-\theta_1)
       -\frac{1}{2} (\frac{ \frac{\partial^2}{\partial \theta^2}\sigma(s|\theta)|_{\theta=\theta_1}}{\sigma(s|\theta_1)} -\frac{(\frac{\partial}{\partial \theta}\sigma(s|\theta)|_{\theta=\theta_1})^2}{\sigma(s|\theta_1)^2})(\theta_2-\theta_1)^2 \\
       &\;-\frac{1}{3!} \frac{\partial^3}{\partial \theta^3}\ln(\sigma(s|\theta))|_{\theta=\theta' \in (\theta_1,\theta_2)}(\theta_2-\theta_1)^3]ds\\
       &=-\underbrace{(\int_{s \in S}\frac{\partial}{\partial \theta}\sigma(s|\theta)|_{\theta=\theta_1}ds)}_{=0} (\theta_2-\theta_1)
       -
       \underbrace{(\frac{1}{2}\int_{s \in S}\frac{\partial^2}{\partial \theta^2}\sigma(s|\theta)|_{\theta=\theta_1}  ds)}_{=0} (\theta_2-\theta_1)^2\\
       &\; +
       (\frac{1}{2} \int_{s\in S} \frac{(\frac{\partial}{\partial \theta}\sigma(s|\theta)|_{\theta=\theta_1})^2}{\sigma(s|\theta_1)}ds)(\theta_2-\theta_1)^2\\
       &\;-
       \frac{1}{3!} (\int_{s\in S}\sigma(s|\theta_1) \frac{\partial^3}{\partial \theta^3}\ln(\sigma(s|\theta))|_{\theta=\theta' \in (\theta_1,\theta_2)}ds)(\theta_2-\theta_1)^3
    \end{align*}

    Now applying the assumption in this lemma, there exists a $\delta>0$ and some constant $M <\infty$ such that
    \begin{align*}
        |\theta_2-\theta_1|<\delta 
     &\Rightarrow 
    \int_{s\in S}\sigma(s|\theta_1) \frac{\partial^3}{\partial \theta^3}\ln(\sigma(s|\theta))|_{\theta=\theta' \in (\theta_1,\theta_2)}ds) < M
    \end{align*}

    Finally, notice that $(\frac{\partial}{\partial \theta} \log (\sigma(s|\theta))^2)\sigma(s|\theta) = \frac{(\frac{\partial}{\partial \theta} \sigma(s|\theta))^2}{\sigma(s|\theta)} $.This allows us to claim that
    \begin{align*}
        &\lim_{(\theta_2-\theta_1) \to 0} |\frac{ D_{KL}(\sigto|\sigto) - [\frac{1}{2}\int_s [\frac{\partial}{\partial \theta} \log (\sigma(s|\theta))]^2 \sigma(s|\theta) ds] (\theta_2 -\theta_1)^2}
        {(\theta_2-\theta_1)^2}| \\
        &\leq \lim_{(\theta_2-\theta_1) \to 0} |M (\theta_2 - \theta_1)|=0
    \end{align*}

    Moreover, since the bound $M$ and the existence of $\delta$ is uniform for all $\theta$ and, we can actually strengthen the conclusion to be:
    \begin{align*}
    & \lim_{h \to 0} 
    \sup_{\substack{
    \theta_1,\theta_2 \in \Theta \\
    0< \theta_2 - \theta_1 < h
    }
    } |\frac{ D_{KL}(\sigto|\sigtt) - [ \frac{1}{2}\int_s [\frac{\partial}{\partial \theta} \log (\sigma(s|\theta))|_{\theta=\theta_1}]^2 \sigma(s|\theta_1) ds](\theta_2 -\theta_1)^2}
    {(\theta_2-\theta_1)^2}| =0
    \end{align*}
    
\end{proof}

\begin{remark}
    The case where the signal has a discrete distribution instead of a continuous distribution can be proven in similar way as above. 

    An example where all the regular conditions are satisfied is when $s|\theta \sim N(\theta,1)$, such that $\ln(\sigma(s|\theta)) = \frac{1}{2}\ln(2\pi) - \frac{(s-\theta)^2}{2}$, $\frac{\partial}{\partial \theta}\ln(\sigma(s|\theta))=-(s-\theta)$, $\frac{\partial^2}{\partial \theta^2}\ln(\sigma(s|\theta))=1$ and $\frac{\partial^3}{\partial \theta^3}\ln(\sigma(s|\theta))=0$.
\end{remark}

\begin{remark}
To see why a stronger result with uniform bound is needed for later proof of the convergence of cost function, consider the case where \[D_{KL}(\sigma_{\theta_1}|\sigma_{\theta_2})= \frac{1}{2}[\int_s [\frac{\partial}{\partial \theta} \log (\sigma(s|\theta))|_{\theta=\theta_1}]^2 \sigma(s|\theta_1) ds](\theta_2-\theta_1)^2 + (\theta_2 - \theta_1)^{2.5},\] we still have $\lim_{(\theta_2-\theta_1) \to 0
    } \frac{ D_{KL}(\sigto|\sigtt) -  \frac{1}{2}[\int_s [\frac{\partial}{\partial \theta} \log (\sigma(s|\theta))|_{\theta=\theta_1}]^2 \sigma(s|\theta_1) ds](\theta_2-\theta_1)^2}{(\theta_2-\theta_1)^2} =0$, but later after taking $\gamma^N(\theta_i,\theta_j) = \frac{\mathbb{1}\{j=i+1\}}{(\theta_j - \theta_i)^2}$, the sum of the remainder term doesn't vanish: $\lim_{N \to \infty}\sum_{i=1}^{N} \frac{(\theta_i - \theta_{i-1})^{2.5}}{(\theta_i - \theta_{i-1})^2}=\lim_{N \to \infty}\sum_{i=1}^{N}(\theta_i - \theta_{i-1})^{\frac{1}{2}}\neq 0$.
\end{remark}

Finally, we are ready to show the Total information cost converges to the Fisher information cost with one more technical assumption. It requires that the integral of Fisher information on the open interval $(\underline{\theta},\bar{\theta})$ can be approximated arbitrarily well by the Riemann sum from some closed sub-interval $[\underline{\theta}+h,\bar{\theta}-h]$:

\begin{assumption} \label{assa_5} (Regularity condition for the Fisher information cost)\\
    The function $J(\theta):=  \int_{s \in S}\sigma(s|\theta) [\frac{\partial}{\partial \theta} \log (\sigma(s|\theta))|_{\theta=\theta_i}]^2 ds$ is $L^1$-integrable with respect to $P(t):= \int_{\underline{\theta}}^{t}p(t)dt$ on $\Theta$ and it is Riemann integrable on $[\underline{\theta}+h,\bar{\theta}-h]$ for any $h>0$, i.e.
    
    \[\int_{\underline{\theta}+h}^{\bar{\theta}-h}|J(\theta)|p(\theta)d\theta < \infty, \]   
and for $\forall h>0$, s.t. for $\forall \epsilon$, there $\exists \delta>0$ s.t. 
    \[|\theta_i -\theta_{i-1}| < \delta \;\; \text{for } i=1,2,\cdots,(N+1)\Rightarrow 
    \sum_{i=1}^{N}|J(\theta_i)(\theta_i -\theta_{i-1}) - \int_{\underline{\theta}+h}^{\bar{\theta}-h}J(\theta)d\theta| < \epsilon\] where $\theta_0 = \underline{\theta}+h$ and $\theta_{N+1} = \bar{\theta}-h$.
\end{assumption}

\begin{theorem}
    Suppose we have a series of decision problem $(\Theta^N,\gamma^N(\cdot,\cdot), p^N)$ that satisfies the above description and Assumption\hyperref[assa_1]{~\ref*{assa_1}} and \hyperref[assa_2]{~\ref*{assa_2}}, suppose the experiment $(S,\hat{\sigma})$ satisfies Assumption \hyperref[assa_3]{~\ref*{assa_3}}, \hyperref[assa_4]{~\ref*{assa_4}} and \hyperref[assa_5]{~\ref*{assa_5}} and those in Lemma\hyperref[lema_1]{~\ref*{lema_1}}, and assume that for in $(\Theta^N,\gamma^N(\cdot,\cdot), p^N)$, the cost for a signal $\sigma^N$, which is the restriction of $\sigma$ to $\Theta^N$, takes the form:\begin{align*}
    C^N(\sigma^N)&=2\sum_{\theta} p^N(\theta) [\sum_{\theta'} \gamma(\theta,\theta') D_{KL}(\sigma^N_\theta,\sigma^N_{\theta'})]
    =2\sum_{i=1}^N
        \frac{p^N(\theta_i)}{(\theta_{i+1} - \theta_i)^2}  D_{KL}(\sigma^N_{\theta_i},\sigma^N_{\theta_{i+1}})
    \end{align*}
    while the cost of the unresstricted $\sigma$ takes the form:
    \[C_{FI}(\sigma)=\int_{\underline{\theta}}^{\bar{\theta}} \int_{s \in S}\sigma(s|\theta) [\frac{\partial}{\partial \theta} \log (\sigma(s|\theta))]^2 ds \; p(\theta) d\theta
    \]
    then the sequence of Total information costs converge to the Fisher information cost in the following sense:
    \[\lim_{N \to \infty}|C^N_{TI}(\sigma^N) - C_{FI}(\sigma) | = 0\]
\end{theorem}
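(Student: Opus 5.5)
The plan is to write the total information cost as a Riemann-type sum and then compare it term by term with the Fisher integral. Lemma~\ref{lema_1} does the local work and Assumptions~\ref{assa_1}, \ref{assa_2} and \ref{assa_5} handle the passage to the limit. Write $\Delta_i := \theta_{i+1}-\theta_i$ and $J(\theta_i)$ for the Fisher information at $\theta_i$, so that $C_{FI}(\sigma)=\int J(\theta)p(\theta)\,d\theta$. Lemma~\ref{lema_1} gives a uniform expansion
\[
D_{KL}(\sigma_{\theta_i},\sigma_{\theta_{i+1}}) = \tfrac12 J(\theta_i)\Delta_i^2 + r_i\Delta_i^2, \qquad \max_i |r_i| \le \eta(\max_i \Delta_i)\to 0 .
\]
Substituting this into $C^N$ gives
\[
C^N(\sigma^N) = \sum_{i} p^N(\theta_i) J(\theta_i) + 2\sum_i p^N(\theta_i) r_i .
\]
The second sum is bounded by $2\max_i|r_i|\sum_i p^N(\theta_i)\le 2\max_i |r_i|$, since $p^N$ is a probability vector. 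By Assumption~\ref{assa_1} and the uniformity in Lemma~\ref{lema_1}, this error term vanishes as $N\to\infty$.

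Next, the weights $p^N(\theta_i)$ are replaced by $p(\theta_i)\Delta_i$. Assumption~\ref{assa_2} gives $|p^N(\theta_i)-p(\theta_i)\Delta_i|\le \epsilon_N \Delta_i$ with $\epsilon_N\to0$. The replacement error is therefore at most $\epsilon_N\sum_i J(\theta_i)\Delta_i$. This is bounded as long as the Riemann sums of $J$ stay bounded, and that is the delicate point near the endpoints. What remains is to show that $\sum_i J(\theta_i)p(\theta_i)\Delta_i \to \int_{\underline\theta}^{\bar\theta} J p\,d\theta$.

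For that last step I will split the sum at a fixed $h>0$. On $[\underline\theta+h,\bar\theta-h]$, Assumption~\ref{assa_5}, which gives Riemann integrability there, makes the interior Riemann sum converge to $\int_{\underline\theta+h}^{\bar\theta-h} J p$. On the two boundary strips, integrability of $Jp$ makes $\int$ over the strips small once $h$ is small. The plan is a standard $\epsilon/3$ argument: pick $h$ so the tail integrals are below $\epsilon$, then $N$ large so that the interior sum is $\epsilon$-close.

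The main obstacle is controlling the discrete sums over the boundary strips, $\sum_{\theta_i<\underline\theta+h} J(\theta_i)p^N(\theta_i)$, because $J$ may blow up at the open endpoints and a Riemann sum need not track the integral there. I would first try to read Assumption~\ref{assa_5} as also controlling these tail sums, or equivalently as making $J$ bounded on the finite grid uniformly enough. If that fails, I would add a mild hypothesis, namely monotonicity of $Jp$ near each endpoint or a uniform bound on $J$ (this bound holds automatically when $\Theta$ is compact and $J$ is continuous). Under that hypothesis the tail sums are dominated by the tail integrals plus one mesh term, and the argument closes. The same bound also keeps $\sum_i J(\theta_i)\Delta_i$ bounded, which the weight-replacement step above needed.
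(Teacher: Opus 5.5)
Your plan follows the paper's proof step for step. You use Lemma~\ref{lema_1} for a uniform quadratic expansion of each adjacent KL term and Assumption~\ref{assa_2} to replace $p^N(\theta_i)$ by $p(\theta_i)\Delta_i$. You then cut at distance $\delta_2$ from the endpoints, with Assumption~\ref{assa_5} handling the interior Riemann sum and integrability of $pJ$ making the tail integrals small.

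The obstacle you flag is genuine, and the paper contains no idea that resolves it. Its chain of inequalities bounds $\bigl|\sum_{i=1}^N p(\theta_i)J(\theta_i)\Delta_i-\int_{\underline{\theta}+\delta_2}^{\bar{\theta}-\delta_2}pJ\,d\theta\bigr|$ by $\epsilon$ via Assumption~\ref{assa_5}. That sum, however, includes the grid points within $\delta_2$ of the endpoints, which the assumption does not cover. The paper then bounds $\epsilon\sum_i\Delta_iJ(\theta_i)$ by $\epsilon$ times the Fisher integral. That is exactly the boundedness you said the weight-replacement step needs. In effect the paper adopts your first reading of Assumption~\ref{assa_5}, that Riemann sums over the whole grid are controlled. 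So to make the argument complete you must either take that reading explicitly or add a hypothesis.

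Of your two fallbacks, the uniform bound on $J$ closes the argument cleanly. Monotonicity of $Jp$ near the endpoints does not suffice by itself. The single term $J(\theta_1)p(\theta_1)\Delta_1$ need not vanish when $\theta_1$ sits much closer to $\underline{\theta}$ than $\Delta_1$. For example, take $J\sim(\theta-\underline{\theta})^{-1/2}$ with $p$ bounded away from zero there, $\theta_1-\underline{\theta}=e^{-N}$ and $\Delta_1=1/N$. Then this term is of order $e^{N/2}/N$. So your bound by the tail integral plus one mesh term would also need some regularity of the mesh.

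A minor point: your remainder bound $2\max_i|r_i|\sum_i p^N(\theta_i)\le 2\max_i|r_i|$ is the correct one. The paper's version carries a spurious, though harmless, extra factor $(\theta_{i+1}-\theta_i)$.
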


\begin{proof}
    To simplify our notation, we use $J(\theta_i) :=  \int_{s \in S}\sigma(s|\theta) [\frac{\partial}{\partial \theta} \log (\sigma(s|\theta))|_{\theta=\theta_i}]^2 ds$ to denote the Fisher information at $\theta_i$.

    By Lemma\hyperref[lema_1]{~\ref*{lema_1}}, for any $\epsilon>0$, there exists some $\delta_1>0$, we have 
    \[\max_{i}(\theta_i - \theta_{i-1}) < \delta_1 \Rightarrow\sup_{\substack{
    \theta_1,\theta_2 \in \Theta \\
    0< \theta_2 - \theta_1 < \delta
    }
    } |{ D_{KL}(\sigto|\sigtt) - \frac{1}{2} J(\theta_1)(\theta_2 -\theta_1)^2 }| < \epsilon {(\theta_2-\theta_1)^2}\]
    \newline 
    By Assumption\hyperref[assa_5]{~\ref*{assa_5}}, for any $\epsilon>0$, there exists some $\delta_2>0$ such that \[|\int_{\underline{\theta}}^{\underline{\theta}+\delta_2}p(\theta) J(\theta)d\theta| + |\int_{\bar{\theta}-\delta_2}^{\bar{\theta}}p(\theta) J(\theta)d\theta| < \epsilon\]
   By Assumption\hyperref[assa_5]{~\ref*{assa_5}} again, for any $\epsilon>0$, there exists some $\delta_3>0$, such that  
     \[\max_{i}(\theta_i - \theta_{i-1}) < \delta_3 \Rightarrow\sum_{i=1}^{N}|J(\theta_i)(\theta_i -\theta_{i-1}) - \int_{\underline{\theta}+\delta_2}^{\bar{\theta}-\delta_2}J(\theta)d\theta| < \epsilon\] where $\theta_0 = \underline{\theta}+\delta_2$ and $\theta_{N+1} = \bar{\theta}-\delta_2$.

    Finally, by By Assumption\hyperref[assa_1]{~\ref*{assa_1}}, there exists some $M$ such that $N> M \implies \max_{i}(\theta_i - \theta_{i-1}) < \min\{\delta_1,\delta_2, \delta_3\}$ where $\theta_0 = \underline{\theta}$ and $\theta_{N+1} = \bar{\theta}$. 
    We have
    \begin{align*}
        &|C_{TI}^N(\sigma^N) - C_{FI}(\sigma)| \\
        &
        =|2\sum_{i=1}^N
        \frac{p^N(\theta_i)}{(\theta_{i+1} - \theta_i)^2}  D_{KL}(\sigma^N_{\theta_i},\sigma^N_{\theta_{i+1}}) - \int_{\underline{\theta}}^{\bar{\theta}} p(\theta)  J(\theta) d\theta|
        \\
        &
        \leq |
        \sum_{i=1}^{N} p^{N}(\theta_i) J(\theta_i) - \int_{\underline{\theta}}^{\bar{\theta}} p(\theta) J(\theta) d\theta
        |+
        2|\sum_{i=1}^{N} p^{N}(\theta_i)\epsilon(\theta_{i+1} -\theta_{i})|   \qquad {\color{blue}  (\text{Lemma \hyperref[lema_1]{~\ref*{lema_1}}})}\\
        & \leq |
         \sum_{i=1}^{N} p^{N}(\theta_i)J(\theta_i)- \int_{\underline{\theta}}^{\bar{\theta}} p(\theta) J(\theta) d\theta 
        | + 2\epsilon|\sum_{i=1}^{N}(\theta_{i+1} -\theta_{i})| \qquad {\color{blue} (0 \leq p^N(\theta_i) \leq 1) }
        \\
        & \leq |
         \sum_{i=1}^{N} p^{N}(\theta_i)  J(\theta_i)- \int_{\underline{\theta}+\delta_2}^{\bar{\theta}-\delta_2} p(\theta) J(\theta) d\theta 
        |+|(\int_{\underline{\theta}}^{\underline{\theta}+\delta_2}+\int_{\bar{\theta}-\delta_2}^{\bar{\theta}})p(\theta) J(\theta)d\theta|+ 2\epsilon (\bar{\theta} -\underline{\theta})
        \\
        & \leq |\sum_i^N|p^N(\theta_i) -p(\theta_i) (\theta_{i+1} - \theta_{i})| \cdot|J(\theta_i)|+
        |\sum_i^N p(\theta_i)J(\theta_i)(\theta_{i+1} -\theta_i)- \int_{\underline{\theta}+\delta_2}^{\bar{\theta}-\delta_2} p(\theta) J(\theta)d\theta| \\
        &\; + \epsilon (1 + 2(\bar{\theta}-\underline{\theta}))  \qquad \qquad {\color{blue}(\text{Assumption \hyperref[assa_5]{~\ref*{assa_5}}})}
        \\
        & \leq   |\sum_i^N \epsilon(\theta_{i+1} -\theta_i) J(\theta_i)| + \epsilon(2+ 2(\bar{\theta}-\underline{\theta}))  \qquad \qquad {\color{blue}(\text{Assumption\hyperref[assa_2]{~\ref*{assa_2}}})} \\
        &\leq \epsilon(\int_{\bar{\theta}}^{\bar{\theta}}p(\theta) J(\theta)d\theta +2+ 2(\bar{\theta}-\underline{\theta})) \qquad \qquad {\color{blue}(\text{Assumption\hyperref[assa_5]{~\ref*{assa_5}}})}
    \end{align*}
    Lastly, since $\int_{\bar{\theta}}^{\bar{\theta}}p(\theta) J(\theta)d\theta<\infty$ by Assumption\hyperref[assa_5]{~\ref*{assa_5}} and the above $\epsilon$ can be taken arbitrarily small, we show that $\lim_{N \to \infty}|C^N_{TI}(\sigma^N) - C_{FI}(\sigma) | = 0$
    
\end{proof}

\end{document}